\documentclass[11pt]{article}

% encoding, language and fonts
\usepackage[utf8]{inputenc}
\usepackage[T1]{fontenc}
\usepackage[english]{babel}
\usepackage{lmodern}
\usepackage{showlabels,soul}
% page layout
%\usepackage[left=3cm, right=3cm, top=3cm, bottom=3cm]{geometry}  % tighter text bounds

% color
\usepackage{xcolor}

% hyperlinks
\usepackage[colorlinks=False]{hyperref}  % https://en.wikibooks.org/wiki/LaTeX/Hyperlinks

% notes and commenting out large sections
%\usepackage[obeyFinal]{todonotes}

\usepackage{comment}

% math packages
\usepackage{amsmath}  % use [intlimits]?
\usepackage{amsfonts,dsfont}
\usepackage{amssymb}
\usepackage{amsthm}
\usepackage{mathtools}
\usepackage{thmtools}
\usepackage{bm} % bold math symbols, especially greek

\numberwithin{equation}{section} % equation number: (s.n) where s is section number, n is counter

%page size

\newlength{\bredde}
\def\slash#1{\settowidth{\bredde}{$#1$}\ifmmode\,\raisebox{.15ex}{/}
\hspace*{-\bredde} #1\else$\,\raisebox{.15ex}{/}\hspace*{-\bredde} #1$\fi}
\textwidth 170mm
\textheight 230mm
\topmargin -0.8cm
\oddsidemargin -0.8cm
\evensidemargin -0.8cm

% paired delimiters with automatic scaling via the mathtools package
\DeclarePairedDelimiter{\parentheses}{\lparen}{\rparen} % parentheses (x)
\DeclarePairedDelimiter{\brackets}{\lbrack}{\rbrack} % brackets [x]
\DeclarePairedDelimiter{\absolute}{\lvert}{\rvert} % absolute value |x|
\DeclarePairedDelimiter{\norm}{\lVert}{\rVert} % norm ||x||
\DeclarePairedDelimiter{\angles}{\langle}{\rangle} % angles <x>
 % braces {x}
\DeclarePairedDelimiter{\floor}{\lfloor}{\rfloor} % floor function

% change existing math symbols
\renewcommand{\epsilon}{\varepsilon} % replace the ugly epsilon with the nice epsilon
 % more explicit subset symbol
\let\Re\relax\DeclareMathOperator{\Re}{Re} % redeclare Re (real part of a complex number) as math operator
\let\Im\relax\DeclareMathOperator{\Im}{Im} % redeclare Im (imaginary part) as math operator

% add new math symbols
\newcommand*{\defequals}{\vcentcolon=} % define the left side of an equation
 % define the left side of an equivalence statement
\newcommand*{\euler}{\mathrm{e}} % Euler's constant e=2.78...
\newcommand*{\iunit}{\mathrm{i}} % imaginary unit (i^2 = -1)
 % quaternion imaginary unit j
 % quaternion imaginary unit k
\newcommand*{\dif}{\mathop{}\!\mathrm{d}} % for dx in integrals: \; \dif x

% add common sets
\newcommand*{\Nset}{\mathds{N}} % natural numbers
 % whole numbers
\newcommand*{\Rset}{\mathds{R}} % real numbers
\newcommand*{\Cset}{\mathds{C}} % complex numbers

% short cut to define equations
\newcommand{\be}{\begin{equation}}
\newcommand{\ee}{\end{equation}}

% longer version of \rightrightarrows with text
\usepackage{extpfeil}
\newextarrow{\xrightrightarrows}{{5}{8}{0}{0}}
{\bigRelbar\bigRelbar{\bigtwoarrowsleft\rightarrow\rightarrow}}

% add common math functions
 % sign function
 % Heaviside step function
\DeclareMathOperator{\erf}{erf} % error function
 % complementary error function
 % error function in imaginary direction (erfi(z) := erf(i z) / i)
 % support of a function
%\DeclareMathOperator*{\Residue}{Res} % residue for functions in complex analysis, subscripts are typeset below symbol
 % trace of a matrix
\DeclareMathOperator{\Pf}{Pf} % pfaffian of a skew-symmetric matrix
\newcommand*{\p}[1]{\parentheses*{#1}} % the shortest way to make automatically scaled parentheses
\newcommand*{\bk}[1]{\brackets*{#1}} % the shortest way to make brackets
\newcommand*{\abs}[1]{\absolute*{#1}} % the shortest way to make |x|
\newcommand*{\cconj}[1]{\overline{#1}} % complex conjugate
 % hermitian conjugate of a matrix
\newcommand*{\sfrac}[2]{{#1}/{#2}} % slanted fraction, alternative: \sfrac from xfrac package

% add new math environments, use with \begin{name} ... \end{name}
\declaretheorem[numberwithin=section]{proposition}
\declaretheorem[numberlike=proposition]{theorem}
\declaretheorem[numberlike=proposition]{lemma}
\declaretheorem[numberlike=proposition]{corollary}
\declaretheorem[numberlike=proposition]{remark}
\declaretheorem[numberlike=proposition]{example}
\declaretheorem[numberlike=proposition]{definition}

% notation specific to random matrix theory
\DeclareMathOperator{\jpdf}{P} % point process joint probability density function
\DeclareMathOperator{\corrfct}{R}
\DeclareMathOperator{\kernel}{K}
\DeclareMathOperator{\prekernel}{\sigma} % \sigma=symplectic prekernel   
 % scaled correlation function
 % scaled kernel
 % scaled prekernel
 % expectation value
\newcommand*{\scp}[2]{\angles*{{#1}, {#2}}} % scalar product <x, y>
\newcommand*{\skp}[2]{\angles*{{#1}, {#2}}_{\text{s}}} % skew-product <x, y>_s

%double arrow
\makeatletter
\newcommand*{\doublerightarrow}[2]{\mathrel{
  \settowidth{\@tempdima}{$\scriptstyle#1$}
  \settowidth{\@tempdimb}{$\scriptstyle#2$}
  \ifdim\@tempdimb>\@tempdima \@tempdima=\@tempdimb\fi
  \mathop{\vcenter{
    \offinterlineskip\ialign{\hbox to\dimexpr\@tempdima+1em{##}\cr
    \rightarrowfill\cr\noalign{\kern.5ex}
    \rightarrowfill\cr}}}\limits^{\!#1}_{\!#2}}}
\makeatother

% document info
\begin{document}
\title{Skew-orthogonal polynomials in the complex plane and their Bergman-like kernels
}
\author{{\sc Gernot Akemann$^1$, Markus Ebke$^2$, and Iv\'an Parra$^3$}\\~\\
$^1$Faculty of Physics and $^2$Faculty of Mathematics, Bielefeld University,\\ PO-Box 100131, D-33501 Bielefeld, Germany\\
akemann@physik.uni-bielefeld.de, markus.ebke@uni-bielefeld.de\\$^3$Department of Mathematics, Katholieke Universiteit Leuven, \\Celestijnenlaan 200B box 2400, BE-3001 Leuven, Belgium\\
ivan.parra@kuleuven.be}
%\keywords{Planar skew-orthogonal polynomials, three-term recurrence, Poisson kernel, universality, Christoffel perturbation}

\date{}

\maketitle

\begin{abstract}

Non-Hermitian random matrices with symplectic symmetry provide examples for Pfaffian point processes in the complex plane. These point processes are characterised by a matrix valued kernel of skew-orthogonal polynomials. We develop their theory in providing an explicit construction of skew-orthogonal polynomials in terms of orthogonal polynomials that satisfy a three-term recurrence relation, for general weight functions in the complex plane. New examples for symplectic ensembles are provided, based on recent developments in orthogonal polynomials on planar domains or curves in the complex plane. Furthermore, Bergman-like kernels of skew-orthogonal Hermite and Laguerre polynomials are derived, from which the conjectured universality of the elliptic symplectic Ginibre ensemble and its chiral partner follow in the limit of strong non-Hermiticity at the origin. A Christoffel perturbation of skew-orthogonal polynomials as it appears in applications to quantum field theory is provided.

\end{abstract}

%%%%%%%%%%%%%%%%%%%%%%%%%%%%%%%%%%%%%%%%%%%%
\section{Introduction}

The study of orthogonal and skew-orthogonal polynomials in the complex plane is closely related to the question of integrability of determinantal and Pfaffian point processes in the plane. Many of the known examples for such point processes can be realised as complex eigenvalues of non-Hermitian random matrices, such as the three Ginibre ensembles of Gaussian random matrices with real, complex or quaternion matrix elements \cite{Ginibre,LehmannSommers,Edelman}, or their three chiral counterparts \cite{James,A05,APS}. Compactly supported examples include the truncation of random orthogonal \cite{KSZ}, unitary \cite{ZS} and symplectic matrices \cite{BL}, and we refer to \cite{KS} for a review on non-Hermitian random matrices. 
At the same time these point processes are examples for two-dimensional Coulomb gases in a confining potential with specific background charge, and we refer to \cite{Peter} for details. A further example in this class is the circular quaternion  ensemble of random matrices belonging to the symplectic group $\mathbb{S}p(2N)$, distributed according to Haar measure, see \cite{EMeckes} for details. Its joint density of  eigenvalues distributed on the unit circle represents a Pfaffian point process that is also determinantal.

Apart from this direct statistical mechanics interpretation as a Coulomb gas, further applications of these point processes include dissipative quantum maps \cite{Haake}, dynamical aspects of neural networks \cite{Crisanti}, properties of the quantum Hall effect \cite{PdF} and quantum field theories with chemical potential \cite{James,A05,APS}. In quantum optics the Bergman kernel of planar Hermite polynomials plays an important role in the construction of coherent and squeezed states \cite{ABG}.
Notably, in the symplectic symmetry class, that will be our focus here, there is a map from the symplectic Ginibre ensemble to disordered non-Hermitian Hamiltonians in an imaginary magnetic field \cite{EK}. 
The circular quaternion  ensemble finds applications in the computation of thermal conduction in superconducting quantum dots \cite{DBB}.  
The predictions of the non-Hermitian symplectic chiral symmetry class \cite{A05} were successfully compared with data from lattice simulations in Quantum Chromodynamics with two colours at non-vanishing chemical potential \cite{ABi}. Here, the insertion of quark flavours, given in terms of characteristic polynomials in the random matrix setting, play an important role and were tested in \cite{ABi}. We will investigate the effect of these insertions on the underlying skew-orthogonal polynomials (SOP) also known as Christoffel perturbation.

The real and symplectic Ginibre ensemble differ from the complex ensemble in the following way. First, in the two former ensembles eigenvalues come in complex conjugated pairs. Second, the real ensemble shows an accumulation of eigenvalues on the real axis, while the symplectic ensemble shows a depletion, as the probability to have real eigenvalues is zero. 
Let us briefly recall what is known about the symplectic Ginibre ensemble in the limit of large matrix size (or number of particles). 
It is not surprising that the gap probability \cite{Mehta} and all correlation functions at the origin \cite{Mehta,Kanzieper} differ from the complex Ginibre ensemble. The same statement holds for the density at weak non-Hermiticity \cite{EK}. Below, we will present a proof that in the limit of strong non-Hermiticity the correlation functions at the origin of the symplectic Ginibre ensemble are universal, in the sense that they hold for the elliptic ensemble beyond the rotationally invariant case. This was conjectured in \cite{EK}, and the same conjecture \cite{A05} for the chiral ensemble will be shown as well. 

In contrast, the distribution of the largest eigenvalue in radius \cite{Rider} as well as the local radial density in the bulk away from the real line agree with those of the complex Ginibre ensemble \cite{Jesper}. Without integration over the angles this agreement has been shown more recently \cite{AKMP} for all correlation functions (marginals) in the bulk of the spectrum of the symplectic Ginibre ensemble, away from the real axis. For the same agreement between the real and complex Ginibre ensemble we refer to \cite{BS}. This is in strong contrast with the Hermitian ensembles of random matrices, corresponding to a so-called Dyson gas on the real line at inverse temperature $\beta=1,2,4$. The local statistics for the latter three ensembles differs everywhere in the spectrum,  see e.g. \cite{Mehta,Peter} for a summary of results.
It is one of the goals of this article to develop the theory of SOP in the complex plane. This will allow us to construct further examples of Pfaffian point processes with symplectic symmetry that are integrable, where this extended universality in the complex plane can be studied. 

What is known for the construction of SOP for general ensembles with symplectic symmetry? In \cite{Kanzieper} it was shown that both polynomials of odd and even degree enjoy a Heine-like representation. It is given as the expectation value of a characteristic polynomial (times a trace in the case of polynomials of odd degree), that is a multiple integral representation of the order of the degree of the SOP. Only in a limited number of cases have these been used for an explicit construction, using Schur polynomials \cite{Petersymplectic} or Grassmann integrals \cite{APS}. A second construction sets up a Gram-Schmidt skew-orthogonalisation procedure \cite{AKP} that we will recall below. However, also this is of limited use for an explicit construction. 
Current explicit examples for SOP include Hermite \cite{Kanzieper} and Laguerre polynomials \cite{A05} for the elliptic symplectic Ginibre ensemble and its chiral counterpart, respectively.

In this article we will exploit the particularly simple structure of the skew-product in symplectic ensembles. It allows to relate the skew-product to the multiplication acting on the standard Hermitian inner product. 
It is well known that on subsets of the real line orthogonal polynomials (OP) always satisfy a three-term recurrence relation with respect to multiplication. In the plane this is no longer true, and on bounded domains we may not expect any finite-term recurrence in general \cite{Lempert}. It was shown much later on bounded domains with flat measure that if such a recurrence exists, the domain is an elliptic disc and the depth of recurrence is three, see \cite{Khavinson} and references therein. However, in the weighted case the ellipse is no longer special, as also here OP without recurrence exist \cite{ANPV}. On the other hand, if we can promote classical OP from the real line to the complex plane, then the existence of such a recurrence is always guaranteed.

The remainder of this article is organised as follows. In the next Section \ref{sec:setup} we define the class of ensembles of complex eigenvalues with symplectic symmetry we consider here, including their skew-product. 
We recall the Gram-Schmidt skew-orthogonalisation that leads to the reproducing polynomial kernel, in terms of which all $k$-point correlation functions or marginals are given. 
In Section \ref{sec:construct} an 
%%%%%%%%
explicit construction is provided for SOP for a general class of weight functions.  
They are characterised by OP that satisfy a three-term recurrence relation with real coefficients, orthogonal with respect to the same real valued weight function. 
Several new examples of resulting planar SOP are presented, including a weight of Mittag-Leffler type in the plane and weights on an elliptic disc that lead to planar Gegenbauer SOP, as well as a subfamily of non-symmetric Jacobi polynomials that include Chebyshev polynomials. A realisation of the Chebyshev polynomials as 
Szeg\H{o} SOP on an ellipse is given, too. In Appendix \ref{appA} we recover some known planar SOP from our construction.
Section \ref{sec:PK-sop} is devoted to the derivation of the Bergman-like kernel for Hermite and Laguerre type SOP. These formulas imply the large-$N$ limit at the origin of the spectrum for all correlation functions at strong non-Hermiticity   
in the corresponding ensembles. This confirms their conjectured universality within these two elliptic classes.
In Section \ref{sec:CP} a Christoffel perturbation of SOP is provided for general weight functions.
As a corollary we obtain that a Christoffel perturbation does not preserve the three-term recurrence relation for measures on $ \mathds{C} $. 
The corresponding Fourier coefficients are provided in Appendix \ref{appSOP1}. Appendix \ref{appC} contains a collection of integrals needed throughout the article.

%%%%%%%%%%%%%%%%%%%%%%%%%%%%%%%%%%%%%%%%%%%%%%%%%%%%%%%%%%%%%%%%%%%%%
\section*{Acknowledgments}
Funded by the Deutsche Forschungsgemeinschaft (DFG, German Research Foundation) - SFB 1283/2 2021 - 317210226 
``Taming uncertainty and profiting from randomness and low regularity in analysis, stochastics and their applications'' (G.A. and I.P.) and 
IRTG2235 ``Searching for the regular in the irregular: Analysis of singular and random systems'' (M.E.), 
and by the grants DAAD-CONICYT/Becas Chile, 2016/91609937 and 
FWO research grant G.0910.20 
(I.P.). 
We thank  Boris Khoruzhenko for useful discussions and Sung-Soo Byun for discussions and many useful comments on this manuscript.

%%%%%%%%%%%%%%%%%%%%%%%%%%%%%%%
\section{Symplectic ensembles and skew-orthogonal polynomials} \label{sec:setup}

The point processes on a subset of the complex plane with symplectic symmetry considered in this paper are defined by the following joint probability density
\begin{equation} \label{eq:eigenvalue_distribution}
	\dif\! \jpdf_N(z_1, \dots, z_N)
	\defequals \frac{1}{Z_N}
		\prod_{k > l}^{N} \abs{z_k - z_l}^2 \abs{z_k - \cconj{z_l}}^2
		\prod_{j= 1}^{N} \abs{z_j - \cconj{z_j}}^2 \prod_{i = 1}^{N} \dif\mu(z_i).
\end{equation}
Here, $\mu$ is a positive Borel measure on the domain $D$ in the complex plane. 
%%%%%%
Further conditions on the measure will be specified at the beginning of Subsection \ref{subsec:sop}.
%%%%%%%%%%%%
The normalization constant $Z_N$ (partition function) is given by
\begin{equation}\label{skewpartitionfunction}
	Z_N \defequals \int \prod_{k > l}^{N} \abs{z_k - z_l}^2 \abs{z_k - \cconj{z_l}}^2
		\prod_{j = 1}^{N} \abs{z_j - \cconj{z_j}}^2\ \prod_{i = 1}^{N} \dif\mu(z_i)>0,
\end{equation}
The joint density \eqref{eq:eigenvalue_distribution} may result for example from the distribution of the $2N$ eigenvalues  
$(z_1, \cconj{z_1}, \dots,$ $ z_N, \cconj{z_N})$ of an $N \times N$ dimensional quaternionic non-Hermitian random matrix (or its $2N$ dimensional complex representation).
Examples for such random matrix realisations include the elliptic quaternionic Ginibre ensemble \cite{Kanzieper} and its chiral counterpart \cite{A05}. When the measure $\mu$ is supported on the unit circle, further representatives include the circular quaternion ensemble with a flat measure, see \cite[Thm. 3.1]{EMeckes}. It is not difficult to see, that for $z_k=\euler^{\iunit\theta_k}$ the first product in \eqref{eq:eigenvalue_distribution} leads to the Vandermonde determinant squared in the variables $x_k=\cos{\theta_k}$, and thus to a determinantal point process. The second product in \eqref{eq:eigenvalue_distribution}, $\prod_{k=1}^N4(1-x_k^2)$, is then taken as part of the weight function on $(-1,1)$. In the same way also the circular real ensembles of the Haar distributed groups $\mathbb{SO}(N)$ for even and odd $N$ provide examples for this Pfaffian point process \cite[Thm. 3.1]{EMeckes}, with varying weight functions though. All of these are also determinantal. 

We will be more general here in taking the complex eigenvalue model \eqref{eq:eigenvalue_distribution} as a starting point. 
In general, the ensemble defined in \eqref{eq:eigenvalue_distribution} is a Pfaffian point process, 
cf. \cite{Mehta,Kanzieper}. 
Defining the $k$-point correlation functions (or marginal measures) as
\begin{equation}\label{eq:Rkdef}
	\corrfct_{N, k}(z_1, \dots, z_k)
	\defequals \frac{N!}{(N - k)!}
	\int_{D^{N-k}}
		\jpdf_N(z_1, \dots, z_N)
	\prod_{i=k+1}^N\dif\mu(z_i),
\end{equation}
they can be shown to take the form \cite{Kanzieper} 
\begin{equation*}\label{eq:RkPfmu}
	\dif\!\corrfct_{N,k}(z_1, \dots, z_k)
	= \underset{1\leq i,j \leq k}{\Pf} \brackets[\Big]{\widehat{\kernel}_N(z_i, z_j)} \prod_{i=1}^k(\cconj{z_i} - z_i)\dif\mu(z_i).
\end{equation*}
%%%%%
Here, $\widehat{\kernel}_N$ is the $2 \times 2$ matrix-valued kernel of the corresponding point process.
%%%%%%%%
Equivalently, when $\mu$ has the density $w$ with respect to the volume element $\dif\mathrm{v}$ on $D$, we have
\begin{equation}\label{eq:RkPf}
	\dif\!\corrfct_{N, k}(z_1, \dots, z_k)
	= \underset{1\leq i,j \leq k}{\Pf}\brackets[\Big]{\kernel_N(z_i, z_j)} \prod_{i=1}^k(\cconj{z_i} - z_i) \dif\mathrm{v}(z_i),
\end{equation}
where the $2 \times 2$ matrix-valued kernel $\kernel_N$ is defined as
\begin{equation}\label{eq:Kerneldef}
	\kernel_N(z, u)
	\defequals \sqrt{w(z) w(u)}
	\begin{pmatrix}
		\prekernel_N(z, u) & \prekernel_N(z, \cconj{u}) \\
		\prekernel_N(\cconj{z}, u) & \prekernel_N(\cconj{z}, \cconj{u})
	\end{pmatrix}.
\end{equation}
%%%%%%%%%%%%%
In that case we have $\kernel_N(z, u)=\sqrt{w(z) w(u)}\ \widehat{\kernel}_N(z,u)$. 
The function $\prekernel_N(z, u)$ 
is called the pre-kernel or \emph{polynomial kernel}, to be defined 
%%%%%%%
in \eqref{eq:prek-def} in terms of SOP.
%%%%%%%%%%%%%%
Compared with \cite{Kanzieper} we have taken the pre-factors $(\cconj{z} - z) (\cconj{u} - u)$ out of the Pfaffian, to avoid cuts for the square root of these factors.
As an example we get for the one-point and two-point function 
\begin{align}\label{eq:R1}
	\corrfct_{N, 1}(z) &= \p{\cconj{z} - z} w(z) \prekernel_N(z, \cconj{z}), \\
	\corrfct_{N, 2}(z_1, z_2) &= \p{\cconj{z_1} - z_1} \p{\cconj{z_2} - z_2}
		w(z_1) w(z_2) 
		\bk{
			\prekernel_N(z_1, \cconj{z_1}) \prekernel_N(z_2, \cconj{z_2})
			- \abs{\prekernel_N(z_1, z_2)}^2
			+ \abs{\prekernel_N(z_1, \cconj{z_2})}^2
		}.
		\nonumber
\end{align}

The pre-kernel can be expressed in terms of SOP to be defined below, being a particularly simple choice in a more general construction, see \cite{Kanzieper}. There, it is given in terms of the inverse moment matrix, where the de~Bruijn integral formula is applied together with the fact that the joint density \eqref{eq:eigenvalue_distribution} is proportional to the product 
$\prod_{j=1}^N(\cconj{z}_j-z_j)$  times the Vandermonde determinant of size $2N$ of all eigenvalues and their complex conjugates.

%%%%%%%%%%%%%%%%%%%%%%%%%%%%%%%%%%%%%%%%
\subsection{Skew-orthogonal polynomials}\label{subsec:sop}

From now on let $\mu$ be a positive Borel measure on $\Cset$, with an infinite number of points in its support $D$, and such that $\int \abs{z}^m \dif\mu(z)<\infty$ for all non-negative integers $m$.
Further we require that $D$ is symmetric about the real axis, i.e.~$z\in D$ if and only if $\cconj{z} \in D$.
For any $f,g \in \Cset[z]$, we define the following skew-symmetric form 
\begin{equation}\label{eq:skewproddef}
	\skp{f}{g} \defequals
		\int
		\p{f(z) g(\cconj{z}) - g(z) f(\cconj{z})}
		\p{z - \cconj{z}}  \dif\mu(z).
\end{equation} 
Equivalently, $\skp{\cdot}{\cdot}$ is an alternating form. In particular, when the polynomials $f, g$ have real coefficients, $\skp{\cdot}{\cdot}$ is also a skew-Hermitian form.

\begin{definition}\label{def:sopdef}
A family of polynomials $\p{q_n}_{n \in \Nset}$ with $\deg q_n = n$ is called \emph{skew-orthogonal} corresponding to $\mu$, if they satisfy for all non-negative integers $k, l \in \Nset$:
\begin{align}
	&\skp{q_{2 k}}{q_{2 l}} = \skp{q_{2 k + 1}}{q_{2 l + 1}} = 0, \label{eq:SOPdef1} \\
	&\skp{q_{2 k}}{q_{2 l + 1}} = -\skp{q_{2 l + 1}}{q_{2 k}} = r_k \delta_{k, l}, \label{eq:SOPdef2}
\end{align}
with $r_k > 0$ being their skew-norms.
\end{definition}

The SOP are called \emph{monic} if their leading coefficient is unity, i.e.~$q_n(z) = z^n + \mathcal{O}(z^{n - 1})$. 
Notice that this choice of the leading coefficient does not make the SOP unique.
%%%%%%%
%. 
%%%%%%%%%
The reason is that the odd polynomials $q_{2 n + 1}$ can be modified by adding an arbitrary multiple of the even polynomial $q_{2 n}$, $\tilde{q}_{2 n + 1}(z)=q_{2 n + 1}(z)+d_nq_{2n}(z)$, without changing the skew-orthogonality relations \eqref{eq:SOPdef1} and \eqref{eq:SOPdef2}. Is it  not difficult to see, using the Heine-like representation of the SOPs given in terms of $2n-$fold integrals in \cite{Kanzieper}, that different transformations do not preserve the skew orthogonality conditions \eqref{eq:SOPdef1} and \eqref{eq:SOPdef2}. Hence, by imposing the SOP to be monic and the coefficient of $z^{2n}$ in $q_{2n+1}(z)$ to be zero, 
we fix this ambiguity and make them unique, see \autoref{uniquesop} below.
%%%
However, sometimes it is convenient to 
choose the coefficient of $z^{2n}$ to be non-vanishing, in order 
to obtain  closed formulas for the odd SOP, see Section \ref{sec:construct} for various examples. The existence of the SOP is guaranteed by Gram-Schmidt skew-orthogonalisation to be provided below in \autoref{skewevenGramTHM}. As a consequence of Gram-Schmidt skew-orthogonalisation and the definition of the skew symmetric form $\skp{\cdot}{\cdot}$, the SOP belong to the polynomial ring $\mathds{R}[z]$. As usual, we denote by $\mathds{R}_{n}[z]$ the space of polynomials with real coefficients and  degree at most $n$ .

\begin{lemma}\label{uniquesop}
Let $\p{q_n}_{n \in \Nset}$ be a sequence of monic skew orthogonal polynomials, such that the coefficient $z^{2n}$ in $q_{2n+1}(z)$ is zero.  Then, the sequence of SOPs in \autoref{def:sopdef} is unique.
\end{lemma}

\begin{proof}
Suppose the existence of two monic SOPs of even degree $q_{2n},\tilde{q}_{2n}\; \in \mathds{R}_{2n}[z]$. Let $q:=q_{2n}-\tilde{q}_{2n} \in \mathds{R}_{2n-1}[z]$. Hence $\skp{q_{2n}}{zq}=\skp{\tilde{q}_{2n}}{zq}=0$ (this follows by expanding $zq$ either in terms of $q_{\ell}$ or $\tilde{q}_{\ell}$, and by using the skew orthogonality conditions \eqref{eq:SOPdef1} and \eqref{eq:SOPdef2}). But this implies that $0=\skp{q}{zq}=\overline{\skp{q}{zq}}$, from which we deduce that $
\int |q(z)|^2|z-\overline{z}|^2d\mu(z)=0$. Hence $q = 0$ $\mu-$almost everywhere, and because $\mu$ contains an infinite number of points in its support, we conclude that $q \equiv0$.

Now, suppose the existence of two monic SOPs of odd degree $q_{2n+1},\tilde{q}_{2n+1}\; \in \mathds{R}_{2n+1}[z]$. Let $p:=q_{2n+1}-\tilde{q}_{2n+1} \in \mathds{R}_{2n-1}[z]$. Since $q_{2n}=\tilde{q}_{2n}$, we obtain $-\tilde{r}_n=\skp{\tilde{q}_{2n+1}}{\tilde{q}_{2n}}=\skp{\tilde{q}_{2n+1}}{q_{2n}}=-r_n$. The last equality follows after expanding the $\tilde{q}_{2n+1}$ in terms of the $q_{\ell}$, and using the fact that the $\tilde{q}_{2n+1}$ are monic polynomials. Again, since $q_{2n}=\tilde{q}_{2n}$, we have for the Fourier expansion of $zp(z)$,

\be
z\,p(z)=\lambda_{2n}q_{2n}(z)+\sum_{\ell=0}^{2n-1}\lambda_{\ell}q_{\ell}(z)=\lambda_{2n}q_{2n}(z)+\sum_{\ell=0}^{2n-1}\tilde{\lambda}_{\ell}\tilde{q}_{\ell}(z).
\ee
Thus, $\skp{q_{2n+1}}{zp}=\skp{\tilde{q}_{2n+1}}{zp}=-\lambda_{2n}r_{n}$, therefore $0=\skp{p}{zp}=\overline{\skp{p}{zp}}$, which concludes the proof.
\end{proof}

\begin{remark}
When the measure $\mu$ has density $w$ on some domain $D$, then we talk about  
SOP with respect to the weight function $w$.
\end{remark}

A closed expression can be obtained for these SOP in terms of the \emph{skew-complex moments} of $\mu$
\begin{equation}
g_{i,j} \defequals \skp{z^i}{z^j}
	=\int
		\p{z^i \cconj{z}^j - z^j \cconj{z}^i}
		\p{z - \cconj{z}} \dif\mu(z)
	\in \Rset
\end{equation}
and the \emph{real skew-symmetric Gram matrix}
\begin{equation} \label{skewGram}
	G_k = 
	\begin{pmatrix}
		0 & g_{0,1} & \cdots & g_{0,2k-1} \\
		-g_{0,1} & 0 & \cdots & g_{1,2k-1} \\
		\vdots & \vdots & \ddots & \vdots \\
		-g_{0,2k-1} & -g_{1,2k-1} & \cdots & 0
	\end{pmatrix}_{2k \times 2k}.
\end{equation} 
De~Bruijn's integration formula implies that the Pfaffian of the skew-symmetric Gram matrix satisfies
\begin{equation}
\Delta_{-1} \defequals 1, \quad
\Delta_{k} \defequals \Pf G_{k+1} = \frac{1}{(k+1)!} \,Z_{k+1}\,>0, \quad \forall \,k \in \Nset,
\end{equation} 
with $Z_k$ defined in \eqref{skewpartitionfunction}. In analogy with the OP, the skew-complex moments already determine the SOP. In terms of these one can give explicit Pfaffian formulae
%%%%%%%%
for the orthonormalised polynomials.
%%%%%%%%%%%

\begin{theorem}[Gram-Schmidt skew-orthogonalisation]\label{skewevenGramTHM} The \emph{skew-orthonormal} polynomials of even degree 
%%%%%%%
$\hat{q}_{2k}$ are formed by replacing in $G_{k+1}$ the $(2k+2)$nd row and column by powers of $z$,
\begin{equation} \label{skewevenGram}
	{\hat q}_{2k}(z) = \frac{1}{\sqrt{\Delta_{k}\Delta_{k-1}}} \Pf 
	\begin{pmatrix}
		0 & g_{0,1} & \cdots & g_{0,2k} & 1 \\
		-g_{0,1} & 0 & \cdots & g_{1,2k} & z \\
		\vdots & \vdots & \ddots & \vdots & \vdots \\
		-g_{0,2k} & -g_{1,2k} & \cdots & 0 & z^{2k} \\
		-1 & -z & \cdots & -z^{2k} & 0
	\end{pmatrix},
\end{equation}
and for the odd degree 
%%%%%%%%%
${\hat q}_{2k+1}$, by replacing the $(2k+2)$nd row and column by powers of $z$ except $z^{2k}$, as well as shifting the index of the Gram matrix in the $(2k+1)$st row and column by one, 
\begin{eqnarray} \label{skeweoddGram}
	{\hat q}_{2k+1}(z) = \frac{1}{\sqrt{\Delta_{k}\Delta_{k-1}}} \Pf 
	\begin{pmatrix}
		0 & g_{0,1} & \cdots &g_{0,2k-1}& g_{0,2k+1} & 1 \\
		-g_{0,1} & 0 & \cdots &g_{1,2k-1}& g_{1,2k+1} & z \\
		\vdots & \vdots & \ddots & \vdots & \vdots & \vdots \\
		-g_{0,2k-1}&-g_{1,2k-1}&\cdots &0&g_{2k-1,2k+1}& z^{2k-1}\\
		-g_{0,2k+1} & -g_{1,2k+1} & \cdots &-g_{2k-1,2k+1}& 0&z^{2k+1}\\
				-1 & -z & \cdots &-z^{2k-1}& -z^{2k+1} &0 
	\end{pmatrix}.	\qquad
\end{eqnarray} 
\end{theorem}
In particular this implies that the SOP have real coefficients.
%%%%%%%%
The proof of this theorem involves properties of Pfaffians. It was presented first in \cite{AHM} for the case of SOP over the real line\footnote{Here, the skew-symmetric product is $\angles{f, g}_4=\frac{1}{2}\int (f(x)g'(x)-f'(x)g(x))\dif\mu(x)$.}, and extended to planar SOP in \cite{AKP}. Here, we emphasise that this construction holds without specifying the support of the measure $\mu$.

\begin{remark}
We stress that the SOP always exist, under the minimal assumptions on the measure that we stated at the beginning of this subsection, even without the support being symmetric about the real axis.

Also note that, by using de~Bruijn's formula, we can easily express the partition function \eqref{skewpartitionfunction} in terms of the skew-norms from \eqref{eq:SOPdef2}:
\begin{equation} \label{eq:ZNprod}
	Z_N = N! \prod_{k=0}^{N-1} r_k,
\end{equation}
see also \cite[Sec.~15.2]{Mehta} for the symplectic Ginibre case.
Given the positivity of $Z_N$ for all $N$ this automatically implies the positivity of the skew-norms $r_k$. Conversely, all examples in Section \ref{sec:construct} where we determine these skew-norms explicitly for a given measure provide instances of Selberg-like integrals.
\end{remark}

The representation of the SOP in terms of Gram matrices may be important from a theoretical point of view, but is not very useful  for the actual computation of the SOP since it involves the evaluation of Pfaffians. In Section \ref{sec:construct} we will propose a more explicit construction, given an orthonormal system in $L^2(\dif\mu)$ that satisfies certain properties.

Let $\mathcal{P}_{2n}$ be the space of analytic polynomials of degree at most $2n-1$ 
%%%%%%%%
(i.e. $p \in \mathcal{P}_{2n}$ with $\frac{\partial}{\partial\bar{z}}p=0$), 
%%%%
and we equip $\mathcal{P}_{2n}$ with the skew-product $\skp{\cdot}{\cdot}$.

\begin{lemma}\label{lem:skewrepro} The polynomial kernel $\sigma_{n}(u,v)$ defined as follows
\begin{equation}\label{eq:prek-def}
	\sigma_n(u, v)
	\defequals \sum_{k = 0}^{n - 1} \frac{
			q_{2 k + 1}(u) q_{2 k}(v) - q_{2 k}(u) q_{2 k + 1}(v)
		}{r_k}, 
\end{equation}
is the reproducing skew-kernel of $(\mathcal{P}_{2n}, \skp{\cdot}{\cdot})$.
\end{lemma}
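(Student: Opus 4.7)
The plan is to verify the reproducing identity directly on a convenient basis of $\mathcal{P}_{2n}$. Since $\deg q_j = j$ for all $j$, the family $\{q_0, q_1, \ldots, q_{2n-1}\}$ is triangular with respect to the monomial basis and hence spans $\mathcal{P}_{2n}$. By bilinearity of $\skp{\cdot}{\cdot}$ in the first argument, it suffices to check that
\begin{equation*}
\skp{q_j(\cdot)}{\sigma_n(\cdot, v)} = q_j(v), \qquad j = 0, 1, \ldots, 2n-1,
\end{equation*}
which then characterises $\sigma_n(\cdot, v)$ as the reproducing kernel (the precise form of the reproducing property being encoded in the order of arguments and the sign conventions of \eqref{eq:SOPdef2}).

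First I would substitute the defining expansion \eqref{eq:prek-def} of $\sigma_n$ and pull the scalar coefficients $q_{2k}(v)/r_k$ and $q_{2k+1}(v)/r_k$ out of the skew-product, using bilinearity. For $j = 2l$ even, all terms $\skp{q_{2l}}{q_{2k}}$ vanish by \eqref{eq:SOPdef1}, while \eqref{eq:SOPdef2} picks out only $k = l$ in $\skp{q_{2l}}{q_{2k+1}} = r_l \delta_{k,l}$, leaving exactly $q_{2l}(v)$. For $j = 2l+1$ odd, the terms $\skp{q_{2l+1}}{q_{2k+1}}$ vanish by \eqref{eq:SOPdef1}, while $\skp{q_{2l+1}}{q_{2k}} = -r_l \delta_{k,l}$ from \eqref{eq:SOPdef2} again isolates $k = l$ and produces $q_{2l+1}(v)$ after cancelling the minus sign with the minus in the definition of $\sigma_n$.

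These two computations together establish the reproducing property on the basis, and extension to arbitrary $p \in \mathcal{P}_{2n}$ is immediate by linearity. There is no real obstacle: the result is an algebraic consequence of Definition \ref{def:sopdef} and the structure of the sum in \eqref{eq:prek-def}. The only points requiring a little care are bookkeeping of the signs coming from the antisymmetry $\skp{f}{g} = -\skp{g}{f}$ and noting that the positivity $r_k > 0$ guarantees the denominators are well-defined, so that the sum in \eqref{eq:prek-def} makes sense.
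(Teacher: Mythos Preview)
Your proposal is correct and follows essentially the same route as the paper's own proof: reduce to the basis $\{q_0,\ldots,q_{2n-1}\}$ of $\mathcal{P}_{2n}$, expand $\sigma_n(\cdot,v)$ by bilinearity, and apply the skew-orthogonality relations \eqref{eq:SOPdef1}--\eqref{eq:SOPdef2}. The only cosmetic difference is that the paper handles the even and odd cases simultaneously in a single line, whereas you spell them out separately and keep more explicit track of the signs.
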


\begin{proof} Let $f \in \mathcal{P}_{2n}$ and $\sigma_{v}(u) \defequals \sigma_{n}(u,v)$. Without restriction we can assume that $f=q_{2m}$ or $f=q_{2m+1}$ with $m \leq n-1$, thus
\begin{align*}
\skp{f}{\sigma_v} &= \sum_{k = 0}^{n - 1}\frac{1}{r_k} \bk{q_{2k}(v) \skp{f}{q_{2k+1}} + q_{2k+1}(v) \skp{q_{2k}}{f}} \\
&= f(v).
\end{align*}
\end{proof}

\begin{remark}\label{Rem2.6}
\autoref{lem:skewrepro} tells us that $\sigma_n$ reproduces itself, $\skp{\sigma_v}{\sigma_u} = \sigma_v(u) = \sigma_n(u,v)$, and that $\sigma_n$ is a skew-symmetric function, $\sigma_n(u,v) = -\sigma_n(v,u)$. These two properties are in complete analogy to the Hermitian inner product space $L^2(\dif\mu)$ with scalar product $\scp{\cdot}{\cdot}$, where the polynomial kernel
\begin{equation}
K_v(u) \defequals K_n(u,v) = \sum_{k=0}^{n-1}P_{k}(u)\cconj{P_k(v)}
\end{equation} 
is expressed in terms of the orthonormal polynomials $P_n$ corresponding to $\mu$. It satisfies the reproducing property $\scp{K_v}{K_u} = K_n(u,v)$ and is a Hermitian function, i.e.~$K_n(v,u) = \cconj{K_n(u,v)}$.

The polynomial kernel $\sigma_{n}(u,v)$ is not affected by the non-uniqueness of the odd SOP. It remains unchanged if we redefine $\tilde{q}_{2 n + 1}(z)=q_{2 n + 1}(z)+d_nq_{2n}(z)$ as the latter terms drop out in \eqref{eq:prek-def}, due to anti-symmetry.
\end{remark}

\begin{remark} As was shown in \cite{Kanzieper} the pre-kernel $\sigma_n(u,v)$ is the main input of the $2 \times 2$ matrix-valued kernel $\kernel_n$ in \eqref{eq:Kerneldef}.
\end{remark}

\begin{remark}\label{pfaffcocycle} Let $c(u,v) = g(u)g(v)$, with $g$ a continuous unimodular function, such that $g\p{\cconj{u}} = 1/g(u)$. Then, the Pfaffian of a $2k \times 2k$ skew-symmetric matrix $(a_{i,j})_{1 \leq i,j \leq 2k}$ remains unchanged if we multiply each $a_{i,j}$ by $c(u_i, u_j)$, where the $u_1, \dots, u_{2k}$ come in complex conjugate pairs, $u_{k + i} = \cconj{u_i}$.
In particular, $\widetilde{\prekernel}(u, v) = g(u) g(v) \prekernel(u, v)$ is another pre-kernel giving rise to the same correlation functions $\corrfct_{N, k}$.
%%%%%%%%%%%%%
Two such pre-kernels as well as their corresponding kernels are then called \emph{equivalent}.
%%%%%%%%%%%%%%%
\end{remark}

%%%%%%%%%%%%%%%%%%%%%%%%%%%%%%%%%%
\section{Construction of skew-orthogonal polynomials}\label{sec:construct}

In this section we reduce the construction of the SOP corresponding to $\mu$ to that of OP corresponding to the same measure $\mu$, that satisfy a (suitable) three-term recurrence relation. Let us emphasise that, in general, the construction of an orthonormal system in $L^2(\Cset,\dif\mu)$ is straightforward using Gram-Schmidt orthogonalisation, see \cite{Walter}. However, generic OP do \emph{not} satisfy such a recurrence, see \cite{Lempert} for a recent discussion. On the other hand all (polynomial) orthogonal systems in  $L^2(\Rset,\dif\nu)$ do satisfy such a three-term recurrence with real coefficients, and thus provide many potential candidates. If we can thus find a measure $\mu$ with support on the complex plane, such that  the same orthogonal system in  $L^2(\Rset,\dif\nu)$ gives rise to an orthogonal system in $L^2(\Cset,\dif\mu)$, our Theorem \ref{thm:sop_from_op} below immediately leads to SOP  corresponding to the same measure $\mu$. Below we will give several examples of such a situation. 

Let $\mu$ be a measure that satisfies the properties from the previous Subsection \ref{subsec:sop}, 
then by the Gram-Schmidt process one can construct a unique sequence of polynomials
\begin{equation}
p_n(z)=\gamma_n z^n+\mathcal{O}(z^{n-1}), \quad \gamma_n>0,
\end{equation}
that form an orthogonal system in $L^{2}(\dif\mu)$,
\begin{equation}\label{eq:proddef}
\scp{p_n}{p_m} =\int p_n(z) \cconj{p_m(z)} \dif\mu(z)=h_n\delta_{n,m}.
\end{equation}
Due to the linearity of the Hermitian form $\scp{\cdot}{\cdot}$ over $\Rset$ 
we can assume that $\gamma_n \equiv 1$. In this case we say that the sequence $(p_n)_{n \in \Nset}$ of OP is  chosen in \emph{monic normalisation} (i.e.~$p_n(z)=z^n+\dots$).
These OP (corresponding to $\mu$) satisfy a three-term recurrence relation with real coefficients, if
\begin{equation}\label{eq:3term}
	z p_k(z) =  
	p_{k + 1}(z) + b_k p_k(z) + c_k p_{k - 1}(z), \qquad
	b_k, c_k \in \Rset, \ k = 1, 2, \ldots
\end{equation}
The condition for $b_k$ and $c_k$ to be real is obviously equivalent to the condition that the planar OP have real coefficients, $\cconj{p_n(z)} = p_n(\cconj{z})$ for all $n\in \Nset$.

%%%%%%%%%%%%%%%%%%%%%%%%%%%%%%%%%%%%%%%%%%%%%%%%%%%%%%
%%%%%%%%%%%%%%%%%%%%%%%%%%%%%%%%%%%%%%%%%%%%%%%%%%%%%%
\begin{theorem} \label{thm:sop_from_op} Let $(\mu_{k,j})_{k,j \in \Nset}$ be a sequence of real numbers such that $\mu_{k, k} = 1$ and $\mu_{k,j}=\lambda_{k-1} \mu_{k-1,j}$, with $\lambda_{k-1}$ independent of $j$, for $j < k$. And let $(p_n)_{n \in \Nset}$ be a sequence of monic OP 
in $L^{2}(\dif\mu)$. Assume that  the sequence of monic polynomials $(q_n)_{n \in \Nset}$ constructed via
\begin{equation} \label{eq:sop-construct}
\begin{split}
	q_{2 k}(z) &\defequals \sum_{j = 0}^{k} \mu_{k, j} \, p_{2 j}(z), \\
	q_{2 k + 1}(z) &\defequals p_{2 k + 1}(z),
\end{split}
\end{equation}
satisfies the skew-orthogonality conditions \eqref{eq:SOPdef1} and \eqref{eq:SOPdef2}. Then, the $(p_n)_{n \in \Nset}$ satisfy a three-term recurrence relation \eqref{eq:3term}.

Conversely, if the sequence $(p_n)_{n \in \Nset}$ of monic OP satisfies a three-term recurrence relation \eqref{eq:3term}, then, the  sequence of monic polynomials defined in \eqref{eq:sop-construct} forms a SOP system and the $\mu_{n,j}$'s in \eqref{eq:sop-construct} can be explicitly computed:
\begin{align}
	r_k &= 2 \p{h_{2 k + 1} - c_{2 k + 1} h_{2 k}} > 0, \label{eq:rk-construct} \\
	\mu_{k, j} &= \prod^{k - 1}_{l = j} \lambda_l, \quad
	\lambda_l = \frac{
		h_{2 l + 2} - c_{2 l + 2} h_{2 l + 1}
	}{
		h_{2 l + 1} - c_{2 l + 1} h_{2 l}
	} \quad
	\forall j < k \in \Nset. \label{eq:mu-construct}
\end{align}
\end{theorem}

%%%%%%%%%%%%%%%%%%%%%%%%%%%%%%%%%%%%%%%%%%%%%%%%%%%%%%
%%%%%%%%%%%%%%%%%%%%%%%%%%%%%%%%%%%%%%%%%%%%%%%%%%%%%

\begin{proof}
Assuming that the sequence in \eqref{eq:sop-construct} satisfies the skew-orthogonality conditions \eqref{eq:SOPdef1} and \eqref{eq:SOPdef2},  we want to show the three-term recurrence relation for the OP. 
Therefore consider the following Fourier expansion, starting with the odd polynomials and expressing $zp_{2m+1}(z)$ in terms of the $p_{l}$'s,
\begin{equation}
zp_{2m+1}(z)=p_{2m+2}(z)+b_{2m+1}p_{2m+1}(z)+c_{2m+1}p_{2m}(z)+ f_{2m-1}(z),
\end{equation}
where we have to show that 
\begin{equation}
f_{2m-1}(z) \defequals \sum_{l=0}^{2m-1}a_{2m+1,l}\,p_{l}(z)
\end{equation}
is the null polynomial.
From $p_{2n+1}=q_{2n+1}$ we have for all $n, m$
\begin{equation*}
0 = \skp{p_{2n+1}}{p_{2m + 1}} = 2\Re\bk{
		\scp{z p_{2 n + 1}}{p_{2 m + 1}}
		- \scp{p_{2 n + 1}}{z p_{2 m + 1}}
	}.
\end{equation*}
Choosing in particular $n<m$, the first term vanishes due to orthogonality, $\scp{z p_{2n+1}}{p_{2m+1}}=0$.
From this we obtain $0 = \Re[\scp{p_{2n+1}}{z p_{2m+1}}] = a_{2m+1,2n+1}h_{2n+1}$, for $0 \leq n \leq m-1$. Consequently the odd coefficients vanish, due to $h_k \neq 0$ for all $k$.
Next, let us show that all even Fourier coefficients vanish too, $a_{2m+1,2l}=0$ for $0 \leq l \leq m-1$. For this, we use that $0 = \skp{q_{2l}}{q_{2m+1}}$ for $0 \leq l \leq m-1$, and the fact that $q_{2l}$ is a linear combination of even polynomials $p_{2j}$, see \eqref{eq:sop-construct}.
Then, by inspection starting with $l=0$, this implies that $0 = \skp{p_{2l}}{p_{2m+1}}$ for $0 \leq l \leq m-1$, and therefore for any linear combination of the $p_{2l}$ up to degree $2m-2$ as in $f_{2m-1}$.
In particular $0 = \skp{f_{2m-1}}{p_{2m+1}} = 2\Re[\scp{z f_{2m-1}}{p_{2m+1}} - \scp{f_{2m-1}}{z p_{2m+1}}]$. The first scalar product is trivially zero due to orthogonality, and the second term yields $0=\norm{f_{2m-1}}^2$, the desired result. 

Now we turn to the Fourier expansion of the even polynomials $zp_{2m}(z)$ given by 
\begin{equation}\label{expansioneven}
zp_{2m}(z)=p_{2m+1}(z)+b_{2m}p_{2m}(z)+c_{2m}p_{2m-1}(z)+ f_{2m-2}(z),
\end{equation}
with
\begin{equation}\label{eq:f2m-2def}
f_{2m-2}(z) \defequals \sum_{l=0}^{2m-2}a_{2m,l}\,p_{l}(z).
\end{equation}
By assumption $\mu_{m,j}=\lambda_{m-1}\mu_{m-1,j}$  and that $\lambda_{m-1}$  is independent of $j$, this implies that $q_{2m}(z) = p_{2m}(z) + \lambda_{m-1} q_{2(m-1)}(z)$. Using $0 = \skp{q_{2n}}{q_{2m}}$ for all $n, m$, we obtain $0 = \langle{q_{2n}},{p_{2m} + \lambda_{m-1} q_{2(m-1)}}\rangle_\text{s} = \skp{q_{2n}}{p_{2m}}$. Starting with $n=0$, by inspection we have $0 = \skp{p_{2n}}{p_{2m}}$. As before (choosing in particular $n<m$) this implies that all even Fourier coefficients $a_{2m,2n}$ vanish for $n=0,\ldots,m-1$.

Next we apply the skew-orthogonality $0 = \skp{q_{2n+1}}{q_{2m}}$ for $n<m$. In particular for $0 \leq l \leq m-2$, we have $0 = \skp{q_{2l+1}}{q_{2m}} = \langle{q_{2l+1}}{p_{2m} + \lambda_{m-1} q_{2(m-1)}}\rangle_\text{s} = \skp{p_{2l+1}}{p_{2m}}$.
Therefore, any linear combination of the $p_{2l+1}$ up to degree $2m-3$ (as in $f_{2m-2}$) will give zero in the skew-product. Applying the same argument as for the case of odd polynomials, we conclude that $0=\norm{f_{2m-2}}^2$. This concludes the proof of the first part of the theorem.

%%%%%%%%%%%%%%%%%%%%%%%%%%%%%%%%%%%%%
Conversely, let us assume that the OP satisfy a three-term recurrence relation, and define the $(q_n)_n$ by \eqref{eq:sop-construct}. For any polynomials $f, g$ with real coefficients, the evaluation of the skew-product \eqref{eq:skewproddef} can be reduced to the evaluation of the scalar product \eqref{eq:proddef}, $\skp{f}{g} = 2 \Re\bk{\scp{z f}{g} - \scp{f}{z g}}$. Thus, for the polynomials with odd indices we can write
\begin{equation*}
	\skp{q_{2 k + 1}}{q_{2 l + 1}}
	= \skp{p_{2 k + 1}}{p_{2 l + 1}}
	= 2 \Re\bk{
		\scp{z p_{2 k + 1}}{p_{2 l + 1}}
		- \scp{p_{2 k + 1}}{z p_{2 l + 1}}
	}.
\end{equation*}
To calculate the two scalar products we use the three-term recurrence relation and the orthogonality of the polynomials $p_n$. This leads to $\scp{z p_{2 k + 1}}{p_{2 l + 1}}= \scp{p_{2 k + 1}}{z p_{2 l + 1}}= b_{2 k + 1} h_{2 k + 1} \delta_{k, l}$. Hence, $\skp{q_{2 k + 1}}{q_{2 l + 1}} = 0$ for all $k$ and $l$. Similarly,  $\scp{p_{2 i}}{z p_{2 j}}=\scp{z p_{2 i}}{p_{2 j}}=b_{2 i}h_{2 i} \delta_{i, j}$, and after expanding the $q_{2n}$'s in terms of the $p_{2n}$'s, we obtain  $\skp{q_{2 k}}{q_{2 l}}=0$ $\forall k, l$.

For the combination of odd and even indices in \eqref{eq:SOPdef2} we need to evaluate 
\begin{align*}
	\scp{z p_{2 j}}{p_{2 l + 1}}
	&= 	h_{2 j + 1} \delta_{j, l} + c_{2 j} h_{2 j - 1} \delta_{j, l + 1} ,\quad \scp{p_{2 j}}{z p_{2 l + 1}}= h_{2 j} \delta_{j, l + 1} + c_{2 l + 1} h_{2 j} \delta_{j, l},
\end{align*}
which leads to
\begin{align*}
	\skp{q_{2 k}}{q_{2 l + 1}}
	&= \sum_{j = 0}^{k}
		2 \mu_{k, j} \Re[
			\scp{z p_{2 j}}{p_{2 l + 1}} - \scp{p_{2 j}}{z p_{2 l + 1}}
		] \\
	&= \sum_{j = 0}^{k}
		2 \mu_{k, j} \bk{
			\p{
			h_{2 j + 1} - c_{2 l + 1} h_{2 j}} \delta_{j, l}
			- \p{
			h_{2 j} - c_{2 j} h_{2 j - 1}} \delta_{j, l + 1}
		}.
\end{align*}
Depending on the values of $k$ and $l$ we need to distinguish between 3 cases:
\begin{enumerate}
\item Case $k < l$: \\
Here, we have $\delta_{j, l} = 0$ and $\delta_{j, l + 1} = 0$ for all $j \leq k$, and therefore $\skp{q_{2 k}}{q_{2 l + 1}} = 0$ as claimed.

\item Case $k = l$: \\
It holds $\delta_{j, l} = 1$ only for $j = l = k$, and $\delta_{j, l + 1} = 0$ for all $j \leq k$, therefore
\begin{equation}
\label{eq:rks}
	\skp{q_{2 k}}{q_{2 k + 1}}
	= 2 
	\p{
	h_{2 k + 1} - c_{2 k + 1} h_{2 k}}
	= r_k.
\end{equation}

\item Case $k > l$: \\
We have $\delta_{j, l} = 1$ only for $j = l < k$, and $\delta_{j, l + 1} = 1$ only for $j = l + 1 \leq k$. Then, we want that
\begin{equation*}
	\skp{q_{2 k}}{q_{2 l + 1}}
	= 2 \mu_{k, l} \p{
	h_{2 l + 1} - c_{2 l + 1} h_{2 l}}
	- 2 \mu_{k, l + 1} \p{
	h_{2 l + 2} - c_{2 l + 2} h_{2 l + 1}}
\end{equation*}
is zero.
This equation is fulfilled if and only if the $\mu_{k, l}$ satisfy
\begin{align}
\label{eq:murel}
	&\mu_{k, l + 1} \p{
	h_{2 l + 2} - c_{2 l + 2} h_{2 l + 1}}
	= \mu_{k, l} \p{
	h_{2 l + 1} - c_{2 l + 1} h_{2 l}} .
\end{align}
Thanks to our monic choice of the leading coefficient $\mu_{k,k}=1$, this recursion can be solved to obtain the explicit form as claimed
\begin{equation*}
	\mu_{k, j} = \prod^{k-1}_{l=j} \frac{
			h_{2 l + 2} - c_{2 l + 2} h_{2 l+1}
	}{
		h_{2 l + 1} - c_{2 l + 1} h_{2 l}
	} \quad
	\forall j < k \in \Nset.
\end{equation*}
In particular, note that the $\mu_{k,j}$ satisfy the assumption made in the first part of the theorem.
\end{enumerate}

%%%%%%%%%%%%%%%%%%%%%%%%%%%%%

%%%%%%%%%%%%%%%%%%%%%%%%%%%%%%%%%%%%%%%%%%%%%%%%%%%%%%%%%%%%%%%%%%%%%%%%%%%%%%%%%%%%%%%%%%%%%%%%%%%%%%%%%%%%%%%%%%%%%%%%%
\end{proof}

\begin{remark}
The recurrence coefficients $b_k$ (although non-zero in general) are not needed to express $\mu_{k, j}$ \eqref{eq:mu-construct} and $r_k$ \eqref{eq:rk-construct}.
Therefore, in the examples below, we will only give formulas for $c_k$ and $h_k$.

Also note that we can invert \eqref{eq:sop-construct} to get the following representation of the OP in terms of the SOP for all $k \in \Nset$:
\begin{equation}
\begin{split}
	p_{2k+2}(z) &= q_{2k+2}(z) - \lambda_k q_{2k}(z), \\
	p_{2k+1}(z) &= q_{2k+1}(z).
\end{split}
\end{equation}
Formula \eqref{eq:sop-construct} and its inverted form were observed before for the classical SOP 
on subsets of the real line based on Hermite, Laguerre and Jacobi polynomials \cite[Eqs.~(3.29), (3.28)]{AFNM}.
\end{remark}
For radially symmetric weight functions  $w(z)=w(\abs{z})$, the OP 
are given by  monomials. Due to $c_k=0\, (=b_k)$ for all $k\in\Nset$ in that case, the above formulas simplify considerably (this result is known, see~\cite[Eqs.~(31), (32)]{Jesper}):

%%%%%%%%%%%%%%%%%%%%%%%%%%%%%%%%%%%%%%%%%%%%%%%
\begin{corollary} \label{thm:sop_from_radial_weight}
If the weight function of the planar OP in \autoref{thm:sop_from_op} is radially symmetric, $w(z)=w(\abs{z})$, we have for the OP and their squared norms:
\begin{equation}
	p_n(z) = z^n, \quad
	h_n = 2 \pi \int_{0}^{\infty} r^{2 n + 1} w(r) \chi_{D_r}\dif r,
\end{equation}
where $\chi_{D_r}$ denotes the characteristic function on $D$ in radial direction. For the planar SOP we obtain:
\begin{align}
	q_{2 k}(z) &= z^{2k}+\sum_{j = 0}^{k-1}  z^{2 j}\prod^{k-1}_{l=j} \frac{
			h_{2 l + 2}}{
			h_{2 l + 1}} , \\
	q_{2 k + 1}(z) &= z^{2 k + 1} ,
	\label{eq:oddqrot}
\end{align}
with skew-norms
\begin{equation}
	r_k = 2 h_{2 k + 1}.
\end{equation}
\end{corollary}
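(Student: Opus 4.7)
The plan is to verify this as a direct specialisation of \autoref{thm:sop_from_op} once one checks that the monomials are orthogonal in $L^2(\dif\mu)$. First I would pass to polar coordinates $z = r \euler^{\iunit \theta}$ in the Hermitian inner product \eqref{eq:proddef}: for a radial weight $w(z) = w(\abs{z})$ on a domain $D$ symmetric in angle,
\begin{equation*}
\scp{z^n}{z^m} = \int_0^\infty r^{n+m+1} w(r) \chi_{D_r}(r) \int_0^{2\pi} \euler^{\iunit (n-m)\theta}\dif\theta \dif r = 2\pi\,\delta_{n,m}\int_0^\infty r^{2n+1} w(r) \chi_{D_r}(r)\dif r.
\end{equation*}
Hence the monomials $p_n(z) = z^n$ form a monic orthogonal system; by uniqueness of monic OP these are the ones produced by Gram--Schmidt, and $h_n$ takes the stated form.

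Next I would observe that the trivial identity $z \cdot z^n = z^{n+1}$ gives the three-term recurrence \eqref{eq:3term} with $b_k = c_k = 0$ for all $k$. This puts us in the setting of \autoref{thm:sop_from_op}, so the SOP are obtained by substituting $c_k = 0$ into formulas \eqref{eq:rk-construct} and \eqref{eq:mu-construct}: the skew-norms collapse to
\begin{equation*}
r_k = 2 h_{2k+1},
\end{equation*}
and the mixing coefficients become $\lambda_l = h_{2l+2}/h_{2l+1}$, giving
\begin{equation*}
\mu_{k,j} = \prod_{l = j}^{k-1} \frac{h_{2l+2}}{h_{2l+1}}.
\end{equation*}
Plugging these into \eqref{eq:sop-construct} together with $p_n(z) = z^n$ yields the claimed formulas for $q_{2k}$ and $q_{2k+1}$.

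There is really no obstacle here beyond bookkeeping; the only point worth checking carefully is that the positivity condition $r_k > 0$ imposed in \autoref{thm:sop_from_op} is automatically satisfied, which is immediate from $r_k = 2 h_{2k+1} > 0$ since the $h_n$ are squared $L^2$-norms with respect to a positive measure supported on infinitely many points. Thus no extra hypothesis on $w$ beyond the standing assumptions of Section \ref{sec:setup} is needed.
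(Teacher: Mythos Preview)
Your proposal is correct and follows exactly the approach the paper takes: the corollary is presented there as an immediate specialisation of \autoref{thm:sop_from_op}, noting only that for a radially symmetric weight the OP are monomials with $b_k=c_k=0$, so that \eqref{eq:rk-construct} and \eqref{eq:mu-construct} simplify as you wrote. You have simply spelled out the polar-coordinate orthogonality computation and the positivity check that the paper leaves implicit.
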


Let us give some examples for planar SOP (and SOP over a weighted analytic Jordan curve) that are constructed using \autoref{thm:sop_from_op} or \autoref{thm:sop_from_radial_weight}. These immediately lead to examples of Pfaffian point processes \eqref{eq:eigenvalue_distribution} that are integrable, in the sense that the SOP and thus the corresponding (pre-) kernel is known explicitly. We will only state the SOP in what follows, and not the pre-kernel.

%%%%%%%%%%%%%%%%%%%%%%%%%%%%%%%%%%%%%%%%%

\begin{example}[Gegenbauer ensemble]
Consider the measure $\mu$ supported in the interior of the standard ellipse $\partial E$, with semi-axes $a > b > 0$, such that $\mu$ has density function $w$ with respect to planar Lebesgue measure:
\begin{equation}\label{eq:Eweight}
	w(z) = (1 + \alpha) \p{1 - h(z)}^\alpha,  \quad
	h(z) = \p{\frac{\Re(z)}{a}}^2 + \p{\frac{\Im(z)}{b}}^2,\; \alpha>-1.
\end{equation}
As shown in \cite[Thm.~3.1]{ANPV}, the Gegenbauer polynomials (also called ultraspherical) form an orthonormal basis on the Bergman space with respect to this weight function. The monic OP, recurrence coefficients and norms read
\begin{align}\label{eq:COP}
	p_n(z) &= \frac{n!}{(1+\alpha)_n}
		\p{\frac{c}{2}}^n
		C_n^{(1 + \alpha)}\p{\frac{z}{c}}, \quad c = \sqrt{a^2 - b^2} > 0, \\
		c_n&=\frac{n(n+1+2\alpha)}{(n+\alpha)(n+1+\alpha)}\p{\frac{c}{2}}^{2},
		\\
	h_n &= \pi a b \frac{1 + \alpha}{n + 1 + \alpha} \p{\frac{c}{2}}^{2n}
		\p{\frac{n!}{(1+\alpha)_n}}^2
		C_n^{(1 + \alpha)}\p{R}, \quad R = \frac{a^2 + b^2}{a^2 - b^2}.
\end{align}
From \autoref{thm:sop_from_op} we obtain
\begin{align}
	q_{2 k}(z) &= \sum_{j = 0}^{k} \mu_{k,j} \frac{(2 j)!\,\Gamma\p{1 + \alpha}}{\Gamma\p{2 j + 1 + \alpha}}
		\p{\frac{c}{2}}^{2 j}
		C_{2 j}^{(1 + \alpha)}\p{\frac{z}{c}} , \\
	q_{2 k + 1}(z) &= 
		\frac{(2 k + 1)!\,\Gamma\p{1 + \alpha}}{\Gamma\p{2 k + 2 + \alpha}}
		\p{\frac{c}{2}}^{2 k + 1}
		C_{2 k + 1}^{(1 + \alpha)}\p{\frac{z}{c}} ,
\end{align}
with coefficients
\begin{equation*}
	\mu_{k,_j}\! =\! 
	\prod_{l = j}^{k - 1} \!
	\frac{(2l+1)(2l+2)c^2}{4\p{2 l + 2 + \alpha}\p{2l+3 + \alpha} }
	\frac{
		(2 l + 2) C_{2 l + 2}^{(1 + \alpha)}\p{R}
		- (2 l + 3 + 2 \alpha) C_{2 l+1}^{(1 + \alpha)}\p{R}
	}{
		(2 l + 1) C_{2 l + 1}^{(1 + \alpha)}\p{R}
		- (2 l + 2 + 2 \alpha) C_{2 l}^{(1 + \alpha)}\p{R}
	},
\end{equation*}
and skew-norms
\begin{align}
	r_k &= 2\pi a b 
		\frac{2^{-2(2 k + 1)}((2k+1)!)^2c^{2(2 k + 1)}}{(2k+1)(1+\alpha)_{2k+1}(2+\alpha)_{2k+1}}	\bk{
			(2 k + 1) C_{2 k + 1}^{(1 + \alpha)}\p{R}
			- (2 k + 2 + 2 \alpha) C_{2 k}^{(1 + \alpha)}\p{R}
		}.
\end{align}
\end{example}

In the special case when $\alpha=0$ the weight function is constant and the Gegenbauer polynomials reduce to the Chebyshev polynomials of the second kind, $C_n^{(1)}(x)=U_n(x)$. Apart from the example with Gegenbauer polynomials, which are symmetric Jacobi polynomials, for non-symmetric Jacobi polynomials with parameters $ (\alpha + 1/2, \pm 1/2) $ the SOP can also be constructed. In particular they include the Chebyshev polynomials of the first up to fourth kind, see \cite{ANPV} and references therein.

\begin{example}[Truncated symplectic ensemble]
When $b \to a = 1$ in the above example, the eccentricity $\varepsilon$ of the ellipse takes its critical value $\varepsilon = 0$.
In this limit only the leading coefficient of the Gegenbauer polynomials survives and the monic OP become monomials. The weight function \eqref{eq:Eweight}, defined on the unit disk, reduces to
\begin{equation}\label{eq:TUweight}
	w(z) = (1+\alpha)\p{1 - \abs{z}^2}^\alpha, \quad
	\alpha >-1.
\end{equation}
The SOP coefficients and skew-norms take the form:
\begin{equation}
\left.\mu_{k,j}\right|_{a=b=1} = \prod_{l=j}^{k-1} \frac{2l+2}{2l+3+\alpha}, \quad
\left.r_k\right|_{a=b=1} = 2 \pi \Gamma(2+\alpha) \frac{\Gamma(2k+2)}{\Gamma(2k+3+\alpha)}.
\end{equation}
They also can be obtained directly from our \autoref{thm:sop_from_radial_weight}, since in this limit the weight is rotationally invariant.

This weight describes the truncated unitary ensemble \cite{ZS}: for integer values of $\alpha=N-M-1$ it appears in the eigenvalue distribution of Haar distributed unitary matrices of size $N$ truncated to the upper sub-block of size $M<N$.
The SOP 
can be used for the corresponding symplectic point process, see \cite{BL} for further details.

Also, when the weight function includes an extra charge insertion at the origin, that is, $w(z) = \abs{z}^{2c}(1 - \abs{z}^2)^\alpha, \alpha >-1$, $c>-1$, the $\mu_{j,k}$ and $r_k$ can be readily obtained.
\end{example}

%%%%%%%%%%%%%%%%%%%%%%%%%%%%%%%%%%%%%%%%%

\begin{example}[Mittag-Leffler ensemble]\label{ex:ML}
The weight function, that leads to a Bergman kernel in terms of the two-parametric {Mittag-Leffler} function \cite[Thm 1.6]{AKS}, is given by 
\begin{equation}\label{eq:MLweight}
	w(z) = \abs{z}^{2 c} \euler^{-\abs{z}^{2 \lambda}}, \quad \lambda > 0, \ c > -1,
\end{equation}
 including a charge insertion at the origin. 
When setting $\lambda=1$, it reduces to the induced Ginibre ensemble, see  \cite{Fischman} for the OP, and the SOP were determined in \cite{A05}. Setting also  $c=0$, it reduces to the standard Ginibre ensemble, see \cite{Mehta,Kanzieper}.
The squared norms $h_n$ of the  monic OP $p_n(z) = z^n$ with respect to the weight \eqref{eq:MLweight} can be easily obtained:
\begin{equation}\label{eq:MLOP}
	h_n = \frac{\pi}{\lambda}\Gamma\p{\frac{n + 1 + c}{\lambda}}.
\end{equation}
From \autoref{thm:sop_from_radial_weight} we can read off the monic planar SOP with respect to \eqref{eq:MLweight} and their skew-norms.
For $\lambda=1$ and $c=0$ this leads to the same pre-kernel as in the symplectic Ginibre ensemble \cite{Mehta}.
\end{example}

\begin{example}[Chebyshev on an elliptic contour]
Consider the measure $\mu$ supported on the standard ellipse $\partial E$, with semi-axes $a > b > 0$, such that $\mu$ has the density function: 
\begin{equation}
	w(z) = \sqrt{\abs{\frac{c + z}{c - z}}}, \quad
	c = \sqrt{a^2 - b^2}.
\end{equation}
Chebyshev polynomials of the third kind $V_n$ satisfy an orthogonality relation on this contour \cite{Mason} (here $\abs{\dif z}$ stands for the  arc length measure):
\begin{align}
	&\scp{p_n}{p_m} = \int_{\partial E} p_n(z) \cconj{p_m(z)} w(z) \abs{\dif z} = h_n \delta_{n, m} \quad \text{for} \\
	&p_n(z) = \p{\frac{c}{2}}^n V_n\p{\frac{z}{c}}, \quad
	h_n = \pi \frac{(a + b)^{2 n + 1} + (a - b)^{2 n + 1}}{2^{2 n}},\quad c_n = \frac{c^2}{ 4}.
\end{align}
Similar relations hold for Chebyshev polynomials of first, second and fourth kind for different weight functions \cite{Mason}.
From \autoref{thm:sop_from_op} we get the following Szeg\H{o} SOP 
\begin{equation}\begin{split}
		q_{2 k}(z) &= \sum_{j = 0}^{k} \mu_{k,j} \p{\frac{c}{2}}^{2 j} V_{2 j}\p{\frac{z}{c}}, \quad
		\mu_{k,j} = \frac{1}{2^{2 (k - j)}} \prod_{l=j}^{k-1} \frac{(a + b)^{4 l + 4} - (a - b)^{4 l + 4}}{(a + b)^{4 l + 2} - (a - b)^{4 l + 2}}, \\
		q_{2 k + 1}(z) &= \p{\frac{c}{2}}^{2 k + 1} V_{2 k + 1}\p{\frac{z}{c}},
\end{split}\end{equation}
and for the skew-norms we obtain
\begin{equation}
	r_k = \frac{\pi b}{2^{4 k}} \bk{(a + b)^{4 k + 2} - (a - b)^{4 k + 2}}.
\end{equation}

Note that in the limit $b \to a = 1$ this example simplifies to the uniform weight on the unit circle, for which the Pfaffian point process is already well known \cite{EMeckes}: it describes the eigenvalues of the circular quaternion ensemble (matrices of $\mathbb{S}p(2N)$ distributed according to Haar-measure).
As mentioned already, this point process is determinantal and can be analysed via OP. 
\end{example}

%%%%%%%%%%%%%%%%%%%%%%%%%%%%%%%%%%%%%%%%%%

Further examples -- including rotationally invariant weights on $\Cset$ -- will be provided in Section~\ref{sec:PK-sop} and \autoref{appA}.
The case resulting from products of $M$ rectangular Ginibre matrices is deferred to the Appendix \ref{appA} as the resulting planar SOP are known \cite{Jesper}.

\section{Bergman-like kernel of skew-orthogonal polynomials}\label{sec:PK-sop}

In this section we will derive the Bergman-like kernel for planar skew-orthogonal Hermite and Laguerre polynomials in separate subsections. 
%%%%
They are given by the limiting pre-kernel, the sum over orthonormal SOP, hence the terminology. In both cases the corresponding weights are given by a one parameter family of elliptic ensembles, see e.g. \eqref{eq:ell_weight} below for Hermite and Appendix \ref{appA}.
The proofs will be based on the rotationally invariant limit, that is when the 
parameter is chosen such that the 
underlying domain and weight function has rotational symmetry. In that case the corresponding limiting pre-kernels are known, see \cite{Kanzieper} and \cite{A05} respectively.
As a consequence, at the end of each subsection we will present  the universality of a one-parameter family of kernels  in such symplectic elliptic Ginibre ensembles, hinting at a much larger universality.

\subsection{Bergman-like kernel of skew-orthogonal Hermite polynomials}
In random matrix theory the planar Hermite polynomials appear in the solution of the elliptic complex Ginibre ensemble \cite{FKS98}, with the one-parameter complex potential
\begin{equation} \label{eq:ell_weight}
	Q_\tau(z) = \frac{1}{1 - \tau^2} \bk{\abs{z}^2 - \tau \Re(z^2)}
%%%%%%%	
	=\frac{\Re(z)^2}{1+\tau}+\frac{\Im(z)^2}{1-\tau},
	 \quad 0 \leq \tau < 1, \quad w_\tau(z) = e^{-Q_\tau(z)}.
\end{equation}
The monic polynomials (here with their third recurrence coefficient $c_n$),
\begin{equation}\label{eq:monicH}
	p_n(z) = \left(\frac{\tau}{2}\right)^{\frac n2} H_n\p{\frac{z}{\sqrt{2 \tau}}}, \quad
	c_n = \tau n,
\end{equation}
satisfy
\begin{equation}\label{eq:HermiteOP}
	\int_{\Cset} p_n(z) \overline{p_m({z})} \, e^{-Q_\tau(z)} \dif A(z) = h_n \delta_{n,m}, \quad
	h_n = n!\,
\end{equation}
Here, $\dif A(z)$ denotes the area measure on the complex plane, divided by 
$\pi\sqrt{1 - \tau^2} $ to provide a probability measure, i.e. with $h_0=1$.
We refer to \autoref{ex-HSOP} for details, including a two-parameter complex normal distribution and the matching planar Hermite polynomials.
The corresponding Bergman kernel 
\cite{EM} is the standard one of Hermite polynomials on $\Rset$ \cite[18.18.28]{NIST}, after continuation in the arguments $\zeta,\eta\in\Cset$, $0\leq\tau<1$:
\begin{equation}\label{eq:Mehler}
K_\tau(\zeta,{\eta})=\sum_{n=0}^\infty\frac{1}{n!}\left(\frac{\tau}{2}\right)^{n}H_n\left(\frac{\zeta}{\sqrt{2\tau}}\right)H_n\left(\frac{\overline{\eta}}{\sqrt{2\tau}}\right) =
\frac{1}{\sqrt{1-\tau^2}} \exp\bk{\frac{1}{1-\tau^2} \p{\zeta\overline{\eta}-\frac{\tau}{2}(\zeta^2+\overline{\eta}^2)}}.
\end{equation}
This identity is also called Mehler-Hermite formula
or Poisson kernel.

The Hermite SOP and skew-norms with respect to the weight \eqref{eq:ell_weight} are known \cite{Kanzieper} and recollected in \autoref{ex-HSOP}, as following from \autoref{thm:sop_from_op}. We only give the resulting pre-kernel:
\begin{equation}\label{eq:preKHermite}
\begin{split}
\prekernel_{\tau,N}(z,u) &= 
	\sum_{k = 0}^{N - 1} \frac{1}{(2 k + 1)!!}\p{\frac{\tau}{2}}^{k+1/2}
	\sum_{l = 0}^k \frac{1}{(2 l)!!} \p{\frac{\tau}{2}}^l 
	\\
	&\quad \times
	\left[
		H_{2 k + 1}\left(\frac{z}{\sqrt{2\tau}}\right) H_{2 l}\left(\frac{u}{\sqrt{2\tau}}\right)
		- H_{2 l}\left(\frac{z}{\sqrt{2\tau}}\right) H_{2 k + 1}\left(\frac{u}{\sqrt{2\tau}}\right)
	\right].
\end{split}
\end{equation}
Here, the area measure is normalised by $\frac{1}{2\pi(1-\tau)\sqrt{1-\tau^2}}$, to achieve $r_0=1$ in analogy to above.
Both expressions \eqref{eq:Mehler} and \eqref{eq:preKHermite} are the sum over (skew-)orthonormal polynomials. 
We are thus led to consider the following limit which is the first main result of this section.
From now on we use the following notation $f_N(z)  \rightrightarrows f(z)$ to express that the sequence of functions $(f_N)_N$ converges to $f$ uniformly on any compact subset of $\Cset$ as $N \to \infty$.

\begin{theorem}[Bergman-like Hermite kernel] \label{thm:PKsopH}
Let $0<\tau<1$ and $z,u\in\Cset$, then we have that
	$\prekernel_{\tau,N}(z, u) \rightrightarrows 	S_{\tau}(z, u)$, 
given by
\begin{equation}\label{eq:PKsopH}
\begin{split}
S_{\tau}(z, u)&=	\sum_{k = 0}^{\infty} \frac{\p{\sfrac{\tau}{2}}^{k + \sfrac{1}{2}}}{(2 k + 1)!!}\!
	\sum_{l = 0}^k \frac{\p{\sfrac{\tau}{2}}^l}{(2 l)!!}\!
\left[
		H_{2 k + 1}\left(\frac{z}{\sqrt{2\tau}}\right) H_{2 l}\left(\frac{u}{\sqrt{2\tau}}\right)
		- H_{2 l}\left(\frac{z}{\sqrt{2\tau}}\right) H_{2 k + 1}\left(\frac{u}{\sqrt{2\tau}}\right)
	\right]
		\\
	&= \frac{\sqrt{\pi}}{\sqrt{2}(1+\tau)}
		\exp\left[\frac{1}{2(1 + \tau)} (z^2 + u^2)\right]
			\erf\p{\frac{1}{\sqrt{2(1 - \tau^2)}} (z - u)}.
\end{split}
\end{equation}
\end{theorem}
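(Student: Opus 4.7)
The plan is to prove the two equalities separately: first that $\prekernel_{\tau,N} \rightrightarrows S_\tau$ as the partial sum tends to its formal infinite series, and then that this series $S_\tau$ coincides with the stated closed form.

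For uniform convergence on compact subsets of $\Cset^2$, I would estimate each summand using the Cauchy bound applied to the generating function $\sum_n H_n(x)s^n/n! = e^{2sx - s^2}$. This gives $\abs{H_n(x)} \leq n!\,r^{-n} e^{2r\abs{x}+r^2}$ for any $r>0$; optimizing $r$ and combining with the $(2k+1)!!(2l)!!$ in the denominator together with the geometric factor $(\tau/2)^{k+l+1/2}$ (with $\tau < 1$) yields absolute and uniform convergence on compacts.

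The identification of $S_\tau$ with $T(z, u) \defequals \frac{\sqrt{\pi}}{\sqrt{2}(1+\tau)}\exp\bk{\frac{z^2+u^2}{2(1+\tau)}}\erf\p{\frac{z-u}{\sqrt{2(1-\tau^2)}}}$ is the core of the argument. Both are antisymmetric, $F(z,u) = -F(u,z)$, and the plan is to show both satisfy the common first-order PDE $\mathcal{L}_z F \defequals (1+\tau)\partial_z F - zF = K_\tau(z,u)$, then invoke uniqueness. For $T$, direct differentiation together with the algebraic identity $\frac{z^2+u^2}{2(1+\tau)} - \frac{(z-u)^2}{2(1-\tau^2)} = \frac{1}{1-\tau^2}\p{zu - \frac{\tau}{2}(z^2+u^2)}$, which reproduces precisely the Mehler exponent, yields $\mathcal{L}_z T = K_\tau$.

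For $S_\tau$, I would apply $\mathcal{L}_z$ termwise using the Hermite identities $H_n'(x)=2nH_{n-1}(x)$ and $2xH_n(x)=H_{n+1}(x)+2nH_{n-1}(x)$, which combine to give $\mathcal{L}_z H_n(z/\sqrt{2\tau}) = n\sqrt{2/\tau}\,H_{n-1}(z/\sqrt{2\tau}) - \sqrt{\tau/2}\,H_{n+1}(z/\sqrt{2\tau})$. Four families of sums over $\{k\geq l\}$ result; after reindexing $k \mapsto k+1$ in the family involving $H_{2k+2}(z/\sqrt{2\tau})$ (respectively $l \mapsto l+1$ in the family involving $H_{2l-1}(z/\sqrt{2\tau})$), each shifted family matches the coefficients of one of the unshifted families but with the strict range $\{k > l\}$ instead of $\{k \geq l\}$. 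All off-diagonal contributions cancel in pairs, leaving only the $k=l$ diagonals, which recombine as $\sum_n \frac{(\tau/2)^n}{n!}H_n(z/\sqrt{2\tau})H_n(u/\sqrt{2\tau}) = K_\tau(z,u)$ by the Mehler formula \eqref{eq:Mehler}. Finally, $\mathcal{L}_z(S_\tau - T) = 0$ gives $S_\tau - T = C(u)\exp\p{z^2/[2(1+\tau)]}$; evaluating antisymmetry at $z=u$ forces $C\equiv 0$. The main obstacle is the telescoping bookkeeping in the computation of $\mathcal{L}_z S_\tau$ --- specifically, matching the shifted and unshifted ranges so that everything collapses to the Mehler diagonal; the rest is essentially routine.
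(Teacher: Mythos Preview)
Your proposal is correct and takes a genuinely different route from the paper. The paper first establishes the $\tau=0$ case (\autoref{thm:hermite_origin_monomials}) by solving an ODE for the monomial double sum, then lifts this to general $\tau$ by inserting the integral representation $H_n(x)=\frac{(-2\iunit)^n}{\sqrt{\pi}}\int_{\Rset} t^n\euler^{(\iunit t+x)^2}\dif t$, applying dominated convergence, and evaluating the resulting two-dimensional Gaussian integral against an error function (\autoref{thm:hermite_origin_generalized} and \eqref{eq:Gauss3}). You instead work directly at fixed $\tau$: you identify a first-order operator $\mathcal{L}_z=(1+\tau)\partial_z-z$ that maps both $S_\tau$ and the explicit target $T$ to the Mehler kernel $K_\tau$, verify this for $T$ by direct differentiation and for $S_\tau$ by a clean telescoping using the Hermite recursion, and then conclude by uniqueness for the linear ODE in $z$ together with antisymmetry. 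The telescoping does collapse exactly as you describe: sums A+B give the even Mehler diagonal and C+D the odd one. Notably, the paper records precisely your differential relation $\mathcal{L}_z S_\tau=K_\tau$ as a remark \emph{after} its proof (the identity below \eqref{limskhermit}), but does not use it as the proof mechanism. Your approach is more algebraic and self-contained; the paper's integral-representation method has the advantage that it transfers almost verbatim to the Laguerre case in Section~\ref{sec:PK-Laguerre}. One caution: your uniform-convergence sketch via the Cauchy bound $\abs{H_n(x)}\le n!\,r^{-n}\euler^{2r\abs{x}+r^2}$ requires $r$ to depend on $n$ (roughly $r\sim\sqrt{n}$) to beat the factorial; with a fixed $r$ the bound is too weak. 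This is what you mean by ``optimizing $r$'', but it is not entirely routine---the paper sidesteps this by getting convergence for free from dominated convergence inside the integral representation.
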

We use the standard notation for the error function, in the form  $\erf(x)=\frac{2x}{\sqrt{\pi}}\int_0^1\euler^{-x^2s^2}ds$, that can be continued to $x\in\Cset$. 
The result \eqref{eq:PKsopH} was already given in \cite[Eq. (18.6.53)]{KS}, without providing any details.
%%%%%
In analogy to \eqref{eq:Mehler} being the infinite sum over orthonormalised OP, we call the corresponding sum over orthonormal SOP \eqref{eq:PKsopH} Bergman-like Hermite kernel.
%%%%%%%%%%%%%%%%%%%%

The proof of \autoref{thm:PKsopH} will be in two steps. First, we recall the rotationally invariant case $\tau=0$, see \autoref{Ginibreskernel}. In the second step we derive \eqref{eq:PKsopH} by using an integral representation for the Hermite polynomials together with \autoref{thm:hermite_origin_monomials}.

\begin{lemma} \label{thm:hermite_origin_monomials}
For $u, v \in \Cset$ define
\begin{equation}\label{gN}
	g_N(u, v) \defequals
		\sum_{k = 0}^{N - 1}
		\frac{u^{2 k + 1}}{(2 k + 1)!!}
		\sum_{l = 0}^{k}
		\frac{v^{2 l}}{(2 l)!!}.
\end{equation}
Then, as $N \to \infty$ we have $g_N(u,v) \rightrightarrows g(u,v)$, where the limiting function is given by
\begin{equation}\label{eq:preKpart1}
	g(u,v) = \frac{1}{2} \sqrt{\frac{\pi}{2}}
		\euler^{\frac{1}{2} \p{u^2 + v^2}}
		\bk{
			\erf\p{\frac{u - v}{\sqrt{2}}}
			+ \erf\p{\frac{u + v}{\sqrt{2}}}
		}.
\end{equation}
\end{lemma}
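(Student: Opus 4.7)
The plan is to reduce the lemma to a first-order linear ODE in $u$ with $v$ as a parameter, solve it via an integrating factor, and identify the result with \eqref{eq:preKpart1}.

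First I would establish uniform convergence on compacta. Bounding the general term by $\abs{u}^{2k+1}\abs{v}^{2l}/((2k+1)!!\,(2l)!!)$ and using $(2m)!!=2^m m!$ together with $(2k+1)!!\geq 2^k k!$, the double series is majorised by $\abs{u}\,e^{(\abs{u}^2+\abs{v}^2)/2}$ on any compact set. Hence $g_N \rightrightarrows g$ on compact subsets of $\Cset^2$, where $g$ is the limiting double power series; in particular $g$ is entire and admits termwise differentiation.

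Next I would derive the ODE $\partial_u g(u,v) - u\,g(u,v) = \cosh(uv)$ with initial condition $g(0,v)=0$. Writing $E_k(v) \defequals \sum_{l=0}^k v^{2l}/(2l)!!$, termwise differentiation gives $\partial_u g = \sum_{k\geq 0}\frac{u^{2k}}{(2k-1)!!}E_k(v)$ (with the convention $(-1)!! \defequals 1$), while the index shift $k\mapsto k-1$ in $u\,g$ yields $u\,g = \sum_{k\geq 1}\frac{u^{2k}}{(2k-1)!!}E_{k-1}(v)$. Subtracting, the telescoping $E_k - E_{k-1} = v^{2k}/(2k)!!$ combined with the double-factorial identity $(2k-1)!!\,(2k)!! = (2k)!$ produces
\begin{equation*}
\partial_u g - u\,g \;=\; 1 + \sum_{k\geq 1}\frac{(uv)^{2k}}{(2k)!} \;=\; \cosh(uv).
\end{equation*}
Finally I would solve this ODE. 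Multiplying by the integrating factor $e^{-u^2/2}$ and using $g(0,v)=0$ gives $e^{-u^2/2}g(u,v) = \int_0^u e^{-t^2/2}\cosh(tv)\,\dif t$. Completing the square rewrites the integrand as $\tfrac{1}{2}e^{v^2/2}\bk{e^{-(t-v)^2/2} + e^{-(t+v)^2/2}}$, and the substitutions $s = (t \mp v)/\sqrt{2}$ recast each integral in terms of $\erf$, producing precisely \eqref{eq:preKpart1}.

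The main obstacle is the combinatorial step producing $\cosh(uv)$ on the right-hand side of the ODE: one must separate the $k=0$ boundary term in $\partial_u g$ (which contributes the constant $1$) and combine the telescoping $E_k - E_{k-1}$ with the identity $(2k-1)!!\,(2k)!! = (2k)!$ to recognise the Taylor series of the hyperbolic cosine. Once this identity is secured, the remaining steps are a routine integrating-factor computation, and uniqueness of solutions to first-order linear ODEs guarantees that \eqref{eq:preKpart1} is the limit.
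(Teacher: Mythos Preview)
Your proposal is correct and follows essentially the same route as the paper: establish locally uniform convergence by majorisation, derive the first-order linear ODE $\partial_u g - u\,g = \cosh(uv)$ by termwise differentiation, and solve it with the integrating factor $e^{-u^2/2}$ from the initial condition $g(0,v)=0$. The only difference is cosmetic: you spell out the telescoping and the identity $(2k-1)!!\,(2k)!!=(2k)!$ that produces $\cosh(uv)$, whereas the paper cites references for this step.
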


\begin{proof} Let $u,v \in B(0,r)$, where $B(a,r)$ denotes the disk of center $a$ and radius $r$. Then each summand in \eqref{gN}  is bounded by $re^{r^2}r^{2k}/(k!)$.
Hence, by the Weierstra{\ss} M-test, its sum is an analytic function of $u$ and $v$, being absolutely and uniformly convergent in each compact subset of the plane. Thus, the limiting function, as $N \to \infty$, is given by the power series
\begin{equation}
	g(u, v) =
		\sum_{k = 0}^{\infty}
		\frac{u^{2 k + 1}}{(2 k + 1)!!}
		\sum_{l = 0}^{k}
		\frac{v^{2 l}}{(2 l)!!}.
\end{equation}
Now, we take the derivative term-wise to obtain: $\partial_u g(u, v) = u g(u, v) + \cosh(u v)$, see \cite[Sect.~15.2 and App.~A.34]{Mehta} and  \cite[Sect.~5.2.1]{Kanzieper} for details.
A convenient initial value of this linear inhomogeneous ordinary differential equation is $u_0 = 0$, then $g(u_0, v) = 0$ for every $v \in \Cset$.
The solution of the initial value problem is given by (see \cite{H.Amann})
\begin{equation}\begin{split}
	g(u, v) &= \int_{0}^{u} \exp\bk{\int_{t}^{u} s \dif s} \cosh(v t) \dif t
	= \frac{1}{2} \euler^{\sfrac{u^2}{2}} \int_{0}^{u}
		\bk{\euler^{-\sfrac{t^2}{2} + v t} + \euler^{-\sfrac{t^2}{2} - v t}} \dif t \\
	&= \frac{1}{2} \sqrt{\frac{\pi}{2}} \euler^{\frac{1}{2} (u^2 + v^2)} \bk{
		\erf\p{\frac{u - v}{\sqrt{2}}} + \erf\p{\frac{u + v}{\sqrt{2}}}
	}.
\end{split}\end{equation}
In the last step we have used the definition of the error function provided below \eqref{eq:PKsopH}, as well as its anti-symmetry.
\end{proof}

\begin{remark}[Symplectic Ginibre pre-kernel]\label{Ginibreskernel}
Note that the right hand side of \eqref{eq:preKHermite} is a continuous function of the parameter $\tau$ in a neighbourhood of $\tau=0$.
\autoref{thm:hermite_origin_monomials} tells us that $S_{0}(u,v)$ 
is equal  to 
$g(u,v)-g(v,u)$:
\begin{equation}
	S_{0}(u,v)=
	g(u,v)-g(v,u)=\sum_{k = 0}^{\infty} \sum_{l = 0}^k
		\frac{
			u^{2 k + 1} v^{2 l} - u^{2 l} v^{2 k + 1}
		}{(2k + 1)!! \, (2 l)!!}
	= \sqrt{\frac{\pi}{2}} 
		\euler^{\frac{1}{2} \p{u^2 + v^2}}
		\erf\p{\frac{u - v}{\sqrt{2}}}.
		\label{eq:preKGin}
\end{equation}
This is a well-known limiting pre-kernel, first found by Mehta in \cite{Mehta} and later calculated in \cite{Kanzieper}. 
%%%%%%
As can be seen from \eqref{eq:ell_weight},
the parameter $\tau$ controls the degree of Hermiticity of the elliptic ensemble. The case $\tau=0$ here corresponds to maximal non-Hermiticity, when
real and imaginary part share the same variance.
In Theorem \ref{thm:PKsopH} the limit $N\to\infty$ is taken at fixed $\tau$, which we call strong non-Hermiticity. The case when $\tau\to1$ at a rate depending on $N$ called weak non-Hermiticity will not be discussed further, and we refer to \cite{FKS98,Kanzieper} for details.
%%%
The kernel in \eqref{eq:preKGin} is the symplectic analogon of its Hermitian partner $\exp[u\overline{v}]$ in the holomorphic Fock-space \cite{VBarg} 
in $L^2(\euler^{-|z|^2}\dif A(z))$, as in Remark \ref{Rem2.6}.
\end{remark}

%%%%%%%%%%%%%%%%%%%%%%%%%%%%%%%%%%%%%
Let us turn to the second step, involving double-sums of Hermite polynomials.
We begin with the following Lemma.
\begin{lemma} \label{thm:hermite_origin_generalized}
Let $\zeta, \eta \in \Cset$, $\varphi, \phi \in [0, 1)$ and define
\begin{equation}
	f_N(\zeta, \eta) \defequals
	\sum_{k = 0}^{N - 1}
		\frac{\p{\sfrac{\varphi}{2}}^{k + \sfrac{1}{2}}}{(2 k + 1)!!}
		H_{2 k + 1}(\zeta)
		\sum_{l = 0}^{k}
		\frac{\p{\sfrac{\phi}{2}}^l}{(2 l)!!}
		H_{2 l}(\eta).
\end{equation}
Then, as $N \to \infty$ we have $f_N(\zeta, \eta) \rightrightarrows f_{\varphi,\phi}(\zeta,\eta)$, where the limiting function is given by
\begin{equation}\label{eq:flim}
\begin{split}
	f_{\varphi,\phi}(\zeta, \eta)
	&= \frac{\sqrt{\pi}}{2 \sqrt{2 (1 + \varphi) (1 + \phi)}}
		\exp\p{\frac{\varphi}{1 + \varphi} \zeta^2 + \frac{\phi}{1 + \phi} \eta^2} \\
	&\quad \times \bk{
			\erf\parentheses[\Big]{a(\varphi,\phi)\zeta-a(\phi,\varphi)\eta}
			+ \erf\parentheses[\Big]{a(\varphi,\phi)\zeta+a(\phi,\varphi)\eta}
		},
\end{split}
\end{equation}
with
\begin{equation}
a(\varphi,\phi)=\sqrt{\frac{\varphi(1+\phi)}{(1+\varphi)(1-\varphi\phi)}}.
\end{equation}
\end{lemma}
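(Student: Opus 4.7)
The strategy is to reduce \autoref{thm:hermite_origin_generalized} to the already proven \autoref{thm:hermite_origin_monomials} (the monomial case) by means of the integral representation of Hermite polynomials,
\begin{equation*}
H_n(x)=\frac{2^n}{\sqrt{\pi}}\int_{-\infty}^{\infty} (x+\iunit y)^n\euler^{-y^2}\dif y.
\end{equation*}
Applying this representation to $H_{2k+1}(\zeta)$ and to $H_{2l}(\eta)$ inside $f_N$, the scaling factors arrange themselves perfectly: $(\varphi/2)^{k+1/2}\,2^{2k+1} = \sqrt{2\varphi}\,(2\varphi)^k$ and $(\phi/2)^l\,2^{2l}=(2\phi)^l$, so that after setting $u=\sqrt{2\varphi}(\zeta+\iunit s)$ and $v=\sqrt{2\phi}(\eta+\iunit t)$ and interchanging the finite sums with the $s,t$-integrals, I obtain the key identity
\begin{equation*}
f_N(\zeta,\eta)=\frac{1}{\pi}\iint_{\Rset^2} g_N\!\p{\sqrt{2\varphi}(\zeta+\iunit s),\sqrt{2\phi}(\eta+\iunit t)}\euler^{-s^2-t^2}\dif s\,\dif t.
\end{equation*}

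Next I would pass to the limit $N\to\infty$ under the integral. Using the crude bound $(2k+1)!!\geq 2^k k!$ and $(2l)!!=2^l l!$, the same manipulation as in the proof of \autoref{thm:hermite_origin_monomials} yields the uniform estimate $\abs{g_N(u,v)}\leq \abs{u}\euler^{(\abs{u}^2+\abs{v}^2)/2}$. Evaluating at $u=\sqrt{2\varphi}(\zeta+\iunit s)$ gives $\abs{u}^2/2 = \varphi(\abs{\zeta}^2+s^2-2s\Im\zeta)$, hence the full weight is dominated by
\begin{equation*}
\euler^{-(1-\varphi)s^2+O(\abs{s})}\cdot\euler^{-(1-\phi)t^2+O(\abs{t})}\cdot \text{polynomial in }s,
\end{equation*}
which is integrable since $\varphi,\phi<1$ and locally uniform in $(\zeta,\eta)$. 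Dominated convergence and the convergence $g_N\rightrightarrows g$ from \autoref{thm:hermite_origin_monomials} then give the same formula with $g_N$ replaced by $g$.

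Finally, I would evaluate this integral explicitly. Substituting the closed form of $g$ and using $\erf(x)=\frac{2x}{\sqrt{\pi}}\int_0^1\euler^{-x^2\sigma^2}\dif\sigma$, the two error functions become $\sigma$-integrals whose integrand is a Gaussian in $(s,t)$ (with a cross term arising from $uv$). Completing the square in $s$ and $t$ for fixed $\sigma$, the $\dif s\,\dif t$ integral is elementary, after which the $\sigma$-integral reassembles, via the reverse of the same $\erf$ representation, into $\erf\bigl(a(\varphi,\phi)\zeta\pm a(\phi,\varphi)\eta\bigr)$ with the values of $a(\varphi,\phi)$ stated in the lemma. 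The main obstacle is this last step: the Gaussian in $(s,t)$ has the mixed quadratic form
\begin{equation*}
(1-\varphi(1-\sigma^2))s^2+(1-\phi(1-\sigma^2))t^2 \mp 2\sigma^2\sqrt{\varphi\phi}\,st - 2\iunit\bigl[\varphi(1-\sigma^2)\zeta\mp\sigma^2\sqrt{\varphi\phi}\,\eta\bigr]s-\dotsb,
\end{equation*}
and the combinatorial work of completing the square, simplifying the resulting prefactor to $\frac{\varphi}{1+\varphi}\zeta^2+\frac{\phi}{1+\phi}\eta^2$, and recognizing the remaining $\sigma$-dependent exponent as $-\bigl(a(\varphi,\phi)\zeta\pm a(\phi,\varphi)\eta\bigr)^2\sigma^2$ is where all the algebraic content sits. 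A useful cross-check along the way is that the specialisation $\varphi=\phi=\tau$ together with $f(\zeta,\eta)-f(\eta,\zeta)$ must reproduce the Bergman-like Hermite pre-kernel $S_\tau(z,u)$ of \autoref{thm:PKsopH}, and the rotationally-invariant case $\varphi\to 0$ (with $\phi=0$) must degenerate to the Ginibre pre-kernel computed in \autoref{Ginibreskernel}; these two limits fix the overall normalisation and the signs in the parameters $a(\varphi,\phi),a(\phi,\varphi)$ unambiguously.
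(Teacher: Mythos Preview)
Your proposal is correct and follows essentially the same route as the paper: use an integral representation of the Hermite polynomials to rewrite $f_N$ as a double Gaussian integral of $g_N$, apply dominated convergence together with \autoref{thm:hermite_origin_monomials}, and then evaluate the resulting Gaussian integral against the error functions. The paper uses the equivalent representation $H_n(x)=\frac{(-2\iunit)^n}{\sqrt{\pi}}\int t^n\euler^{(\iunit t+x)^2}\dif t$, which places the arguments of $g_N$ on the imaginary axis rather than at $\sqrt{2\varphi}(\zeta+\iunit s)$, and then dispatches the final step in one line by invoking the ready-made Gaussian--erf identity~\eqref{eq:Gauss3} from \autoref{appC}; your plan to expand $\erf$ and complete the square is simply an inline derivation of that same identity.
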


\begin{proof}
The Hermite polynomials have the integral representation (\cite[18.10.10]{NIST})
\begin{align}
	H_n(x)
	&= \frac{(- 2 \iunit)^n}{\sqrt{\pi}}
		\int_{-\infty}^{\infty} t^n \euler^{(\iunit t + x)^2} \dif t\,,
\end{align}
which we use to derive
\begin{equation}\label{estimateH}
	f_N(\zeta, \eta)
	= \frac{1}{\pi}
		\int_{\Rset^2}
		\euler^{(\iunit t + \zeta)^2 + (\iunit s + \eta)^2}
		\sum_{k = 0}^{N - 1}
		\sum_{l = 0}^{k}
		\frac{
			\p{- \iunit \sqrt{2 \varphi} t}^{2 k + 1}
			\p{- \iunit \sqrt{2 \phi} s}^{2 l}
		}{(2 k + 1)!! \, (2 l)!!}
		\dif s \dif t.
\end{equation}
A simple estimate of the integrand, as in \autoref{thm:hermite_origin_monomials}, leads to
\begin{equation}
	\abs{f_N(\zeta, \eta)} \leq \euler^{\Re(\zeta^2) + \Re(\eta^2)}\int_{\Rset^2}
		\abs{t} \euler^{-(1 - \varphi) t^2 - 2 \Im(\zeta) t}
		\euler^{-(1 - \phi) s^2 - 2 \Im(\eta) s}\dif s \dif t<\infty \quad \forall N.
\end{equation}
By Lebesgue's dominated convergence theorem we obtain
\begin{equation*}
\begin{split}
	\lim_{N \to \infty} f_N(\zeta, \eta)
	&= \frac{1}{2 \sqrt{2 \pi}}
		\euler^{\zeta^2 + \eta^2}
		\int_{\Rset^2}
		\euler^{-(1 + \varphi) t^2 - (1 + \phi) s^2 + 2 \iunit (t \zeta + s \eta)} \\
	&\quad \times
		\bk{
			\erf\p{- \iunit \sqrt{\varphi} t + \iunit \sqrt{\phi} s}
			+ \erf\p{- \iunit \sqrt{\varphi} t - \iunit \sqrt{\phi} s}
		} \dif s \dif t\,.
\end{split}
\end{equation*}
For the limit we used \autoref{thm:hermite_origin_monomials}, with $u = - \iunit \sqrt{2 \varphi} t$ and $v = - \iunit \sqrt{2 \phi} s$. Then, we can evaluate this integral via \eqref{eq:Gauss3} from \autoref{appC},  with $A = 1 + \varphi$, $B = 1 + \phi$, $C = - \iunit \sqrt{\varphi}$, $D = \pm \iunit \sqrt{\phi}$.
This leads to  the claimed limit \eqref{eq:flim}.
\end{proof}

Finally we can complete the proof of our main result.
\begin{proof}[Proof of \autoref{thm:PKsopH}]
As for \eqref{estimateH} we can obtain a simple estimate for the sum in \eqref{eq:PKsopH} with $\zeta, \eta \in B(0,r)$.
Thus, the sum converges absolutely and uniformly on $B(0,r)$, and therefore on each compact subset of $\Cset$. The absolute convergence allows us to rearrange the summands and, using \autoref{thm:hermite_origin_generalized} with $\varphi=\phi=\tau$, we obtain:
\begin{equation}
	S_{\tau}(z,u) = 
	f_{\tau,\tau}\left(\frac{z}{\sqrt{2\tau}},\frac{u}{\sqrt{2\tau}}\right)-f_{\tau,\tau}\left(\frac{u}{\sqrt{2\tau}},\frac{z}{\sqrt{2\tau}}\right).
\end{equation}
\end{proof}

\begin{remark} We note that the Bergman kernel $K_\tau$ in \eqref{eq:Mehler} of the analytic space in $L^2(e^{-Q_\tau} \dif A)$,
and the Bergman-like kernel $S_\tau$ in \eqref{eq:PKsopH}, both restricted to $\Rset \times \Rset$, are related by the following differential equation
\begin{equation}
	(1+\tau)\,e^{\frac{1}{2}(Q_\tau(x)+Q_\tau(y))} \partial_x \bk{e^{-\frac{1}{2}(Q_\tau(x)+Q_\tau(y))} S_\tau(x,y)} = K_\tau(x,y).
\end{equation}

\end{remark}

%%%%%%%%%%%%%%%%%%%%%%%%%%%%%%%%%%%%%%%%%%%%%%%%%%%%%%%%%%%%%%%%%%%%%
\subsubsection{Universality of the symplectic elliptic Ginibre kernel
}\label{sec:univH}

In this subsection we will prove the universality of all $k$-point correlation functions \eqref{eq:RkPf} in the symplectic elliptic Ginibre ensemble, in the large-$N$ limit at strong non-Hermiticity close to the origin. 
Throughout this article we have considered weight functions that are $N$-independent. Therefore, at large-$N$ the eigenvalues condense on a droplet with $N$-dependent radius, 
%%%%%%%%%
the support of the equilibrium measure. We refer to \cite{Ameur} for a discussion of the equilibrium problem for a general potential. 
%%%%%%%%
In the case of our elliptic weight \eqref{eq:ell_weight}, we have that (\cite[Thm.~2.1]{B-GC} applied to the result of \cite{HJS}) yields
\begin{equation}\label{eq:ell-law}
	\corrfct_{N, 1}(z)\approx 
	\begin{cases}
		\frac{1}{2\pi(1-\tau^2)} & \text{if} \quad \p{\frac{\Re(z)}{1+\tau}}^2 + \p{\frac{\Im(z)}{1-\tau}}^2 \leq 2N, \\
		0 & \text{else}. \\
	\end{cases}
\end{equation}
%%%%%%
The fact that in \autoref{thm:PKsopH} we take the limit of the kernel at arguments independent of $N$, 
implies that we consider the 
vicinity of the origin. In contrast, to investigate the neighbourhood of a bulk (or edge) point, we would have to %scale 
centre the arguments around this point $z=\sqrt{2N} z_0$, with $\abs{z_0}$ of order unity, and rescale accordingly. 
%%%%
Furthermore, because we keep $\tau$ fixed when $N\to\infty$, this implies that we consider strong non-Hermiticity.

Let us first quote the known result at maximal non-Hermiticity $\tau=0$ from \cite{Kanzieper}\footnote{Notice, that there the mean level spacing is rescaled to unity, compared to $2\pi$ in our case.}, the symplectic Ginibre kernel at the origin.
In that case, 
%%%%%%
from \autoref{Ginibreskernel} 
we can read off  the matrix elements of the limiting kernel of $\kernel_N(z, u)$ in \eqref{eq:Kerneldef},
times the normalisation from the area measure
\begin{align}
\lim_{N\to\infty} \frac{1}{2\pi}\sqrt{w_{\tau=0}(z)w_{\tau=0}(u)}\prekernel_{0,N}(z,u) 
	&
	=\euler^{-\frac12 (|z|^2+|u|^2)} \frac{1}{2\pi}
	\sum_{k = 0}^{\infty} \sum_{l = 0}^k
		\frac{
			z^{2 k + 1} u^{2 l} - z^{2 l} u^{2 k + 1}
		}{(2k + 1)!! \, (2 l)!!}
		\nonumber\\
	&= \frac{1}{2\sqrt{2\pi}} \,
		\euler^{-\frac{1}{2} \p{|z|^2+|u|^2-z^2-u^2}}
		\erf\p{\frac{z - u}{\sqrt{2}}}.
		\label{eq:Gin-prekernel}
\end{align}
As a corollary from \autoref{thm:PKsopH}, we can now prove the following universality statement.
\begin{corollary}\label{Cor4.6}
The large-$N$ limit of the matrix elements $\sigma_N$ given in \eqref{eq:preKHermite} of the kernel $K_N$ \eqref{eq:Kerneldef} with respect to the weight function $w_\tau$ in \eqref{eq:ell_weight} 
are equivalent 
%%%%%%%
in the sense of Remark \ref{pfaffcocycle}
%%%%%%%%%
to \eqref{eq:Gin-prekernel} for general values of $0<\tau<1$ and thus universal.
\end{corollary}
\begin{proof}
In order to compare the 
%%%%%%%%%%
limits 
%%%%%%%%%%%
of the pre-kernels \eqref{eq:preKHermite} and \eqref{eq:Gin-prekernel}
and their pre-factors, we have to map to the same equilibrium measure (also called macroscopic density) first, implying the same mean level spacing. This is called 
unfolding and we refer to \cite{Haake} for a standard reference that includes complex spectra, see also \cite{Ameur} for details about recentering and rescaling.
Because the symplectic Ginibre kernel in \eqref{eq:Gin-prekernel} is given with respect to the limiting density $\frac{1}{2\pi}$, in this case to unfold we simply have to rescale all arguments $z\to\sqrt{1-\tau^2} \, z$, in view of \eqref{eq:ell-law}. 
%%%%%%%%%%%%%
Therefore we consider the limit 
\begin{align}\label{limskhermit}
	\MoveEqLeft \lim_{N\to\infty} \frac{(1 - \tau^2)^{\frac32}}{2\pi(1-\tau)\sqrt{1-\tau^2}} \sqrt{w_{\tau}\p{\sqrt{1 - \tau^2} z} w_{\tau}\p{\sqrt{1 - \tau^2} u}} \, \prekernel_{\tau,N}\p{\sqrt{1 - \tau^2} z, \sqrt{1 - \tau^2} u} \nonumber \\
	&= \frac{1}{2\pi}(1+\tau)\, {\euler^{-\frac12 (|z|^2+|u|^2)+\frac{\tau}{4}(z^2+\cconj{z^2}+u^2+\cconj{u^2})} }
	\sum_{k = 0}^{\infty} \frac{\p{\sfrac{\tau}{2}}^{k + \sfrac{1}{2}}}{(2 k + 1)!!}
	\sum_{l = 0}^k \frac{\p{\sfrac{\tau}{2}}^l}{(2 l)!!}\nonumber\\
	&\qquad\times\left[
		H_{2 k + 1}\left(\frac{\sqrt{1-\tau^2}z}{\sqrt{2\tau}}\right) H_{2 l}\left(\frac{\sqrt{1-\tau^2}u}{\sqrt{2\tau}}\right)
		- H_{2 l}\left(\frac{\sqrt{1-\tau^2}z}{\sqrt{2\tau}}\right) H_{2 k + 1}\left(\frac{\sqrt{1-\tau^2}u}{\sqrt{2\tau}}\right)\right]\nonumber\\
	&=\euler^{\frac{\tau}{4}(\cconj{z^2}-z^2)}\euler^{\frac{\tau}{4}(\cconj{u^2}-u^2)}
		\frac{1}{2\sqrt{2\pi}} \,\euler^{-\frac{1}{2} \p{|z|^2+|u|^2-z^2-u^2}}
		\erf\p{\frac{z - u}{\sqrt{2}}}.
\end{align}
The additional factor $(1 - \tau^2)^{\frac32}$ in the first line originates from the rescaling of the measure $\dif A(z)$ and the factors $(\cconj{z} - z)$ in the Pfaffian representation of the correlation functions, eq.~\eqref{eq:RkPf}. In the first line we also multiply with the $\tau$-dependent normalisation of the area measure.
After inserting \eqref{eq:PKsopH} in the second line, we arrive at \eqref{eq:Gin-prekernel}, apart from two pre-factors.
These can be disregarded as they satisfy the condition under complex conjugation explained in \autoref{pfaffcocycle} to establish an equivalent kernel. Thus the universality of the kernel \eqref{eq:Gin-prekernel} holds for arbitrary fixed $\tau$, with $0\leq\tau<1$.
\end{proof}
%%%%%
After completing this work, the universality we have found at the origin at strong non-Hermiticity has been extended to the entire bulk (and edge) along the real line \cite{BE}. 
The universality of the elliptic Ginibre ensemble, a one-parameter family of Gaussian random matrices, strongly suggests a more general universality to hold, when allowing for a larger class of potentials $Q$ in the weight function \eqref{eq:ell_weight}. Despite our progress in Section \ref{sec:construct} in the construction of SOP for more general weight functions based on OP, this is beyond the scope of the current article.

%%%%%%%%%%%%%%%%%%%%%%%%%%%%%%%%%%
\subsection{Bergman-like kernel of skew-orthogonal Laguerre polynomials}\label{sec:PK-Laguerre}
The generalised Laguerre polynomials, denoted by $L_n^{(\nu)}(x)$, are orthogonal on the interval $(0, \infty)$ with respect to the weight function $x^\nu \exp(-x)$, $\nu > -1$.
These polynomials also appear in the solution of the chiral elliptic complex Ginibre ensemble, which was introduced in \cite{James} to model Quantum Chromodynamics  with a baryon chemical potential $\mu$.
The weight function, defined on the complex plane, reads
\begin{equation} \label{eq:chEll_weight}
	w_\tau^{(\nu)}(z) = \abs{z}^{\nu}
		K_{\nu}\p{\frac{2}{1 - \tau^2} \abs{z}}
		\exp\left[\frac{2 \tau}{1 - \tau^2} \Re(z)\right], \quad
		0 \leq \tau < 1,
\end{equation}
where $K_\nu(z)$ is the modified Bessel function of the second kind.
We use the notation with the non-Hermiticity parameter $\tau$, instead of the chemical potential $\mu = \sqrt{1 - \tau}$ as in \cite{James}.
The monic polynomials (with their third recurrence coefficient $c_n$) are
\begin{equation}\label{eq:monicL}
	p_n(z) = (-1)^n n!\, \tau^n L_n^{(\nu)}\p{\frac{z}{\tau}}, \quad
	c_n = \tau^2 n (n + \nu),
\end{equation}
and they fulfil
\begin{equation}\label{eq:LaguerreOP}
	\int_{\Cset} p_n(z) \overline{p_m({z})} w_\tau^{(\nu)}(z) \dif A (z) = h_n \delta_{n,m}, \quad
	h_n = 
	n!\, \frac{\Gamma(n + \nu+1)}{\Gamma(\nu+1)},
\end{equation}
see \autoref{ex-LSOP} for details. Here, the area measure is divided by 
$\frac{\pi}{2} (1 - \tau^2)\Gamma(\nu+1)$, to achieve $h_0=1$.
The orthogonality on the complex plane was proven in \cite{Karp}, and independently in \cite{A05}.
The Poisson kernel for Laguerre polynomials is given by  \cite[18.18.27]{NIST} 
\begin{equation}\label{Bergmanforlaguerre}
 K_\tau(\zeta,\eta)=
 \sum_{n = 0}^{\infty} \frac{n!\, \Gamma(\nu+1)\,\tau^{2 n}}{\Gamma\p{n + \nu + 1}} L_n^{(\nu)}\left(\frac{\zeta}{\tau}\right) L_n^{(\nu)}\left(\frac{\overline{\eta}}{\tau}\right)
	= \frac{\Gamma(\nu+1)\exp\left[- \frac{\tau}{1 - \tau^2} (\zeta + \overline{\eta})\right]}{(1 - \tau^2)(\zeta \overline{\eta})^{\nu / 2}}
		I_{\nu}\p{\frac{2}{1 - \tau^2} \sqrt{\zeta \overline{\eta}}},
\end{equation}
where $I_\nu(z)$ denotes the modified Bessel function of the first kind. This identity corresponds to the Bergman kernel 
of the analytic space in $L^2(w_\tau^{(\nu)} \dif A)$.

The Laguerre SOP and skew-norms with respect to the weight \eqref{eq:chEll_weight} can be constructed from \autoref{thm:sop_from_op} (see \autoref{ex-LSOP}) and agree with \cite{A05}, with the resulting pre-kernel
\begin{equation}\label{eq:preKLaguerre}
\begin{split}
	\prekernel_{\tau,N}(z, u) &= 
	- \sum_{k = 0}^{N - 1} \frac{\sqrt{\pi}\Gamma(\nu+2)(2 k)!! \tau^{2 k + 1}}{2^{k+\nu+1} \Gamma\p{k + \frac{\nu}{2} + \frac{3}{2}}}
		\sum_{l = 0}^{k} \frac{(2 l - 1)!! \tau^{2 l}}{2^l \Gamma\p{l + \frac{\nu}{2} + 1}} \\
		&\qquad \times 
	\bk{
			L_{2 k + 1}^{(\nu)}\p{\frac{z}{\tau}} L_{2 l}^{(\nu)}\p{\frac{u}{\tau}}
			- L_{2 l}^{(\nu)}\p{\frac{z}{\tau}} L_{2 k + 1}^{(\nu)}\p{\frac{u}{\tau}}
		}.
\end{split}
\end{equation}
The normalising factor for the area measure is $\frac{1}{{\pi} (1 - \tau^2)^2\Gamma(\nu+2)}$, to have again $r_0=1$.
Expressions \eqref{Bergmanforlaguerre} and \eqref{eq:preKLaguerre} are the sum over (skew-)orthonormal 
%%%%%%
Laguerre polynomials. 
%%%%%
In analogy to \eqref{Bergmanforlaguerre} we thus call the limiting sum  \eqref{eq:PKsopL} below Bergman-like Laguerre kernel.
%%%%%%%%%%%%%%%%%%%%

The main result of this subsection is the following limit.
\begin{theorem}[Bergman-like Laguerre kernel] \label{thm:PKsopL}
Let $0<\tau<1$ and $z,u\in\Cset$, then for $N \to \infty$ we have
	$\prekernel_{\tau,N}(z,u) \rightrightarrows
	S_{\tau}(z,u)$, 
where 
\begin{equation} \label{eq:PKsopL}
\begin{split}
	S_\tau(z,u)&=
	-\sum_{k = 0}^{\infty} \frac{\sqrt{\pi}\Gamma(\nu+2)(2 k)!! \tau^{2 k + 1}}{2^{k+\nu+1} \Gamma\p{k + \frac{\nu}{2} + \frac{3}{2}}}
		\sum_{l = 0}^{k} \frac{(2 l - 1)!! \tau^{2 l}}{2^l \Gamma\p{l + \frac{\nu}{2} + 1}}
\\
&\qquad\qquad\qquad\qquad\qquad\qquad\qquad\times	
		\bk{
			L_{2 k + 1}^{(\nu)}\left(\frac{z}{\tau}\right) L_{2 l}^{(\nu)}\left(\frac{u}{\tau}\right)
			- L_{2 l}^{(\nu)}\left(\frac{z}{\tau}\right) L_{2 k + 1}^{(\nu)}\left(\frac{u}{\tau}\right)
		} \\
	&= \frac{\Gamma(\nu+2)\ \euler^{-\frac{\tau}{1 - \tau^2} (z + u)}}{(1 - \tau^2) (z u)^{\nu / 2}}
		\int_{0}^{\pi / 2}
			\sinh\p{\frac{1}{1 - \tau^2} (z - u) \cos(\alpha)}
			I_{\nu}\p{\frac{2}{1 - \tau^2} \sqrt{zu} \sin(\alpha)}
		\dif \alpha.
\end{split}
\end{equation}
\end{theorem}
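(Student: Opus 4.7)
The plan is to mirror the two-step strategy used in the proof of \autoref{thm:PKsopH}.

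In the first step I would establish a Laguerre analog of \autoref{thm:hermite_origin_monomials}, computing in closed form an auxiliary monomial double sum of the shape
\begin{equation*}
	g^{(\nu)}_N(u,v) \defequals \sum_{k=0}^{N-1} \frac{u^{2k+1}}{\Gamma(k+\sfrac{\nu}{2}+\sfrac{3}{2})\,(2k+1)!!}
	\sum_{l=0}^{k} \frac{v^{2l}}{\Gamma(l+\sfrac{\nu}{2}+1)\,(2l-1)!!},
\end{equation*}
or an equivalent variant whose coefficients match those in \eqref{eq:preKLaguerre} once the Laguerre polynomials are temporarily replaced by monomials. The Weierstrass $M$-test provides uniform convergence on compacta of $\Cset^2$, and the closed form is accessed by differentiating in $u$, which yields an inhomogeneous linear ODE whose inhomogeneity involves the modified Bessel function $I_\nu$ in place of the $\cosh$ appearing in \autoref{thm:hermite_origin_monomials}. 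Solving this ODE with initial condition $g^{(\nu)}(0,v)=0$ by variation of constants then reproduces the rotationally invariant ($\tau=0$) pre-kernel of the chiral symplectic Ginibre ensemble established in \cite{A05}, presumably already in the trigonometric integral form featured in \eqref{eq:PKsopL}.

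In the second step I would use the classical Hankel-type integral representation
\begin{equation*}
	L_n^{(\nu)}(x) = \frac{\euler^{x} x^{-\nu/2}}{n!} \int_0^{\infty} \euler^{-t}\, t^{n+\nu/2}\, J_\nu\p{2\sqrt{x t}}\, \dif t
\end{equation*}
to turn each Laguerre factor in \eqref{eq:preKLaguerre} into a one-dimensional integral of a monomial against a Bessel weight. Substituting into both $L_{2k+1}^{(\nu)}(z/\tau)$ and $L_{2l}^{(\nu)}(u/\tau)$, exchanging the finite double sum with the two integrals, and bounding the integrand uniformly in $N$ by an integrable exponential--Bessel majorant analogous to \eqref{estimateH}, dominated convergence lets $N \to \infty$ inside the integral. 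The inner double sum collapses to the limiting function $g^{(\nu)}$ of Step one evaluated at scaled arguments, so that $\prekernel_{\tau,N}(z,u)$ reduces to the antisymmetrisation in $(z,u)$ of a double integral in $(s,t)$ of a product of two $J_\nu$'s against $g^{(\nu)}$ and exponential factors.

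The main obstacle will be this residual double Bessel integral, which is the Laguerre counterpart of the closed-form Gaussian integral \eqref{eq:Gauss3} invoked in \autoref{thm:hermite_origin_generalized}. My plan is to evaluate it using a Weber--Schafheitlin type product formula for $J_\nu(a) J_\nu(b)$ from \autoref{appC}, which should collapse the two radial integrals into the single angular integral over $\alpha \in (0, \sfrac{\pi}{2})$ and produce exactly the combination $\sinh\p{(z-u)\cos\alpha/(1-\tau^2)} I_\nu\p{2\sqrt{zu}\sin\alpha/(1-\tau^2)}$ in \eqref{eq:PKsopL}. The remaining task is routine prefactor bookkeeping: tracking $\euler^{-\tau(z+u)/(1-\tau^2)}$, $(zu)^{-\nu/2}$ and $1/(1-\tau^2)$, in parallel to the unfolding calculation in \eqref{limskhermit} for the Hermite case.
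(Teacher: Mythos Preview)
Your two-step strategy matches the paper's proof exactly. Two places where the execution is more delicate than your sketch suggests: in Step one, a single $u$-derivative does not close the recursion for $g^{(\nu)}$; the paper applies the second-order modified Bessel operator $u\partial_u^2 + (\nu+1)\partial_u - u$, obtaining an inhomogeneity proportional to $I_\nu(2\sqrt{uv}) + J_\nu(2\sqrt{uv})$, and the homogeneous solutions $I_{\nu/2}(u)/u^{\nu/2}$, $K_{\nu/2}(u)/u^{\nu/2}$ are handled by discarding the singular one via continuity at $u=0$ (so a single initial condition suffices after all). In Step two, a single Weber--Schafheitlin identity is not enough: the paper first integrates out the two Laguerre variables using \eqref{eq:Bessel product integrals}, then substitutes $p = qr$ and integrates $q$ again via \eqref{eq:Bessel product integrals}, and only a final tangent half-angle substitution produces the $\alpha$-integral in \eqref{eq:PKsopL}.
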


The proof of the above theorem will proceed in two steps.
First, we treat the rotationally invariant case $\tau = 0$, then we apply the integral representation of Laguerre polynomials to establish a lemma analogous to \autoref{thm:hermite_origin_generalized}.

\begin{lemma} \label{thm:laguerre_origin_monomials}
For $u, v \in \Cset$ define
	\begin{equation}\label{thm:laguerre_origin_monomials1}
		g_N(u, v) \defequals
			\sum_{k = 0}^{N - 1} \frac{u^{2 k + 1}}{2^k \Gamma\p{k + \frac{\nu}{2} + \frac{3}{2}} (2 k + 1)!!}
			\sum_{l = 0}^{k} \frac{v^{2 l}}{2^l \Gamma\p{l + \frac{\nu}{2} + 1} (2 l)!!}.
	\end{equation}
	Then, as $N \to \infty$ we have $g_N(u,v) \rightrightarrows g(u,v)$, where the limiting function is given by
	\begin{equation}\begin{split}
	g(u,v)&= \frac{2^{\nu}}{\sqrt{\pi} (u v)^{\nu / 2}} \int_{0}^{\pi / 2} \bk{
			\euler^{(u + v) \cos(\alpha)} J_{\nu}\p{2 \sqrt{u v} \sin(\alpha)}
			- \euler^{-(u - v) \cos(\alpha)} I_{\nu}\p{2 \sqrt{u v} \sin(\alpha)}
		} \dif \alpha.
	\end{split}\end{equation}
\end{lemma}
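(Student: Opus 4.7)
The plan is to mimic the structure of \autoref{thm:hermite_origin_monomials}: first establish uniform convergence of the defining series, then identify the entire limiting function with the claimed closed-form integral. I would begin with the convergence step. On any polydisc $B(0,r)\times B(0,r)\subset\Cset^2$, each summand of $g_N(u,v)$ is dominated by
\begin{equation*}
\frac{r^{2k+1}}{2^k(2k+1)!!\,\Gamma(k+\tfrac{\nu+3}{2})}\cdot\frac{r^{2l}}{2^l(2l)!!\,\Gamma(l+\tfrac{\nu}{2}+1)},
\end{equation*}
whose double series converges by comparison with $r^{2k+2l+1}/\bigl((2k+1)!\,(2l)!\bigr)$ times bounded Gamma ratios. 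The Weierstra{\ss} M-test then implies absolute and uniform convergence on every compact subset of $\Cset^2$, so $g \defequals \lim_{N\to\infty} g_N$ is entire and equals the termwise infinite series.

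To identify this series with the target integral, the plan is to use Euler-Beta integral representations for the reciprocal Gamma factors,
\begin{equation*}
\frac{1}{\Gamma(k+\tfrac{\nu+3}{2})}=\frac{2}{k!\,\Gamma(\tfrac{\nu+1}{2})}\int_0^{\pi/2}\!\sin^{2k+1}\!\alpha\,\cos^\nu\!\alpha\,d\alpha,\qquad \frac{1}{\Gamma(l+\tfrac{\nu}{2}+1)}=\frac{2\cdot 4^l\, l!}{(2l)!\sqrt\pi\,\Gamma(\tfrac{\nu+1}{2})}\int_0^{\pi/2}\!\sin^{2l}\!\beta\,\cos^\nu\!\beta\,d\beta,
\end{equation*}
together with the elementary identities $2^k(2k+1)!!\,k!=(2k+1)!$ and $2^l(2l)!!=4^l\,l!$, to recast each summand as $\tfrac{(u\sin\alpha)^{2k+1}}{(2k+1)!}\cdot\tfrac{(v\sin\beta)^{2l}}{(2l)!}$ weighted by $\cos^\nu\!\alpha\cos^\nu\!\beta$ and $u,v$-independent constants. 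Absolute convergence justifies interchanging sum and integrals, reducing the problem to evaluating the $u,v$-independent double series $T(X,Y)\defequals\sum_{k\geq l\geq 0}\tfrac{X^{2k+1}Y^{2l}}{(2k+1)!(2l)!}$ at $X=u\sin\alpha$, $Y=v\sin\beta$. This is the single-factorial analogue of the series appearing in \autoref{thm:hermite_origin_monomials} and can be evaluated in terms of $\sinh$ and $\cosh$ by writing it as the $l\leq k$ restriction of $\sinh X\cosh Y$ and symmetrising under $X\leftrightarrow Y$ (equivalently, by the same ODE trick $\partial_X T=\cosh X\cosh Y-($remainder$)$ used in the Hermite case).

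Finally, one of the two remaining angular integrations collapses into a Bessel function via the Poisson representations
\begin{equation*}
I_\nu(z)=\frac{2(z/2)^\nu}{\sqrt\pi\,\Gamma(\nu+\tfrac12)}\int_0^{\pi/2}\!\cosh(z\cos\theta)\sin^{2\nu}\!\theta\,d\theta,\qquad J_\nu(z)=\frac{2(z/2)^\nu}{\sqrt\pi\,\Gamma(\nu+\tfrac12)}\int_0^{\pi/2}\!\cos(z\cos\theta)\sin^{2\nu}\!\theta\,d\theta,
\end{equation*}
producing the prefactor $\tfrac{2^\nu}{\sqrt\pi(uv)^{\nu/2}}$ together with Bessel arguments $2\sqrt{uv}\sin\alpha$. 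The hard part is the parity bookkeeping that sorts the $(u+v)$-exponential (attached to the $J_\nu$-term, which absorbs the alternating-sign Bessel series) from the $-(u-v)$-exponential (attached to the $I_\nu$-term, which absorbs the same-sign series). As an independent cross-check, I would expand both sides as formal power series in $u,v$ and match coefficients using Legendre duplication $\Gamma(m+\nu+1)=\tfrac{2^{2m+\nu}}{\sqrt\pi}\Gamma(m+\tfrac{\nu+1}{2})\Gamma(m+\tfrac{\nu}{2}+1)$, which reduces the identification to a finite combinatorial identity and verifies the initial condition $g(0,v)=0$ automatically.
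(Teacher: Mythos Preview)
Your convergence argument is fine and essentially identical to the paper's. The Beta-integral substitutions and factorial identities you write down are also correct, and they do reduce the series to
\[
g(u,v)=\frac{4}{\sqrt{\pi}\,\Gamma\!\left(\tfrac{\nu+1}{2}\right)^{2}}
\int_{0}^{\pi/2}\!\!\int_{0}^{\pi/2}
T(u\sin\alpha,\,v\sin\beta)\,\cos^{\nu}\!\alpha\,\cos^{\nu}\!\beta\,\dif\alpha\,\dif\beta,
\qquad
T(X,Y)=\sum_{k\ge l\ge 0}\frac{X^{2k+1}Y^{2l}}{(2k+1)!\,(2l)!}.
\]
The gap is the evaluation of $T$. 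Your claim that $T$ ``can be evaluated in terms of $\sinh$ and $\cosh$ by \ldots\ symmetrising under $X\leftrightarrow Y$'' does not hold: $T$ is odd in $X$ and even in $Y$, so swapping the arguments lands you in a different parity sector and there is no cancellation against $\sinh X\cosh Y$ that isolates the $l\le k$ piece. If instead you run ``the same ODE trick'', note that with single factorials one differentiation does not reproduce the lower-index term; you get
\[
\partial_{X}^{2}T-T=\sum_{k\ge 0}\frac{X^{2k+1}Y^{2k+2}}{(2k+1)!\,(2k+2)!}
=\tfrac12\sqrt{\tfrac{Y^{3}}{X}}\bigl(I_{1}(2\sqrt{XY})-J_{1}(2\sqrt{XY})\bigr),
\]
a \emph{second}-order equation with Bessel inhomogeneity, whose solution is not elementary in $X,Y$. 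So even if you solve it, the $\beta$-integration will not collapse cleanly via Poisson: the Poisson formulae you quote carry weight $\sin^{2\nu}$ (equivalently $\cos^{2\nu}$ after $\theta\mapsto\tfrac\pi2-\theta$), whereas your construction produces weight $\cos^{\nu}$, and the $Y$-dependence of $T$ is through Bessel functions of $\sqrt{XY}$, not through $\cosh(Y\cdot\text{const})$.

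The paper takes a different route that avoids this obstruction: it derives a Bessel-type ODE directly in $u$ for the \emph{full} function $g$,
\[
\bigl(u\partial_{u}^{2}+(\nu+1)\partial_{u}-u\bigr)g(u,v)
=\frac{2^{\nu}}{\sqrt{\pi}\,(uv)^{\nu/2}}\bigl(I_{\nu}(2\sqrt{uv})+J_{\nu}(2\sqrt{uv})\bigr),
\]
identifies the homogeneous solutions $u^{-\nu/2}I_{\nu/2}(u)$ and $u^{-\nu/2}K_{\nu/2}(u)$, and writes down an explicit particular solution as a double $(q,p)$-integral of products of Bessel functions. The initial condition $g(0,v)=0$ fixes the constants, and the claimed single-$\alpha$ integral emerges only after a further simplification using the identities in Appendix~\ref{appC}. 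If you want to salvage your approach, the realistic fix is to drop one of the two Beta integrals and instead recognise the $v$-sum directly as a partial Bessel series, but at that point you are effectively back to the paper's ODE-in-$u$ strategy.
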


\begin{proof} Let $u,v \in B(0,r)$, then the same upper bound as in \autoref{thm:hermite_origin_monomials} serves as a summable upper bound for \eqref{thm:laguerre_origin_monomials1}, and it only depends $r$. Hence, by the Weierstra{\ss} M-test, its sum is an analytic function of $u$ and $v$. The convergence is absolute and uniform in each compact subset of the plane. 
The limiting function -- as $ N \to \infty$ -- is given by the power series:
\begin{equation}
	g(u, v) \defequals
		\sum_{k = 0}^{\infty} \frac{u^{2 k + 1}}{2^k \Gamma\p{k + \frac{\nu}{2} + \frac{3}{2}} (2 k + 1)!!}
		\sum_{l = 0}^{k} \frac{v^{2 l}}{2^l \Gamma\p{l + \frac{\nu}{2} + 1} (2 l)!!}.
\end{equation}

Next we derive a differential equation for this limit.
Following the same steps as in \cite[App.~B.1]{A05},  
%%%%%%%%%%%%%
where a differential equation was derived for $g(u,v)-g(v,u)$, we can be brief. 
We obtain
\begin{equation}\begin{split}
	u^{-\nu} \partial_u u^{\nu + 1} \partial_u g(u, v)
	&= \sum_{k = 0}^{\infty} \frac{u^{2 k}}{2^{k - 1} \Gamma\p{k + \frac{\nu}{2} + \frac{1}{2}} (2 k - 1)!!}
		\sum_{l = 0}^{k} \frac{v^{2 l}}{2^l \Gamma\p{l + \frac{\nu}{2} + 1} (2 l)!!} \\
	&= \sum_{k = 0}^{\infty} \frac{u^{2 k + 2}}{2^{k} \Gamma\p{k + \frac{\nu}{2} + \frac{3}{2}} (2 k + 1)!!}
		\sum_{l = 0}^{k} \frac{v^{2 l}}{2^l \Gamma\p{l + \frac{\nu}{2} + 1} (2 l)!!} \\
	&\quad + \sum_{k = 0}^{\infty} \frac{(u v)^{2 k}}{2^{k - 1} \Gamma\p{k + \frac{\nu}{2} + \frac{1}{2}} (2 k - 1)!! 2^k \Gamma\p{k + \frac{\nu}{2} + 1} (2 k)!!}.
\end{split}\end{equation}
In the last step we have taken out the $l = k$ term of the inner sum, before shifting the index $k \to k + 1$.
The first sum that we are left with is equal to $u g(u, v)$. In the last sum we can simplify the denominator, by using the doubling formula for the gamma function and $(2 k - 1)!! (2 k)!! = (2 k)!$, leading to
\begin{equation}
 \frac{2^{\nu + 1}}{\sqrt{\pi}} \sum_{k = 0}^{\infty} \frac{(u v)^{2 k}}{\Gamma\p{2 k + \nu + 1} (2 k)!} 
	= \frac{2^{\nu}}{\sqrt{\pi} (u v)^{\nu / 2}} \p{I_{\nu}\p{2 \sqrt{u v}} + J_{\nu}\p{2 \sqrt{u v}}},
\end{equation}
where we used the series representations of the Bessel-J  and Bessel-I functions (\cite[10.2.2 and 10.25.2]{NIST})
\begin{equation}\label{def:Bessel's}
	J_{\nu}(z) = \p{\frac{z}{2}}^\nu \sum_{n = 0}^{\infty} \frac{(-1)^n \p{z / 2}^{2 n}}{n! \Gamma\p{n + \nu + 1}}, \quad
	I_{\nu}(z) =  \p{\frac{z}{2}}^\nu \sum_{n = 0}^{\infty} \frac{\p{z / 2}^{2 n}}{n! \Gamma\p{n + \nu + 1}}.
\end{equation}
In summary, $g(u, v)$ satisfies the following second order linear inhomogeneous differential equation
\begin{equation}
	\p{u \partial_u^2 + (\nu + 1) \partial_u - u} g(u, v)
	= \frac{2^{\nu}}{\sqrt{\pi} (u v)^{\nu / 2}} \left(I_{\nu}\p{2 \sqrt{u v}} + J_{\nu}\p{2 \sqrt{u v}}\right).
\end{equation}

To solve this equation, we may again use $u_0 = 0$ as an initial value, because $g(u_0, v) = 0$ for every $v \in \Cset$ (and we will see later that a second initial condition is not needed for the uniqueness of the solution).
We will solve this initial value problem in three steps: First, we find two linearly independent solutions $\gamma_{\text{homA}}(u)$ and $\gamma_{\text{homB}}(u)$ for the homogeneous equation, then we construct a solution $\gamma_{\text{inhom}}(u)$ for the inhomogeneous equation, finally we set $g(u, v) = a \gamma_{\text{homA}}(u) + b \gamma_{\text{homB}}(u) + \gamma_{\text{inhom}}(u)$ where $a, b \in \Cset$ are determined by the initial condition.

A simple computation shows that if $f(u)$ solves 
$u^2 f^{\prime \prime}(u) + u f^\prime(u) - (u^2 + \nu^2 / 4) f(u) = 0$, 
the (modified) Bessel-ODE, then $f(u) / u^{\nu / 2}$ solves the homogeneous ODE for $g(u, v)$.
Therefore we obtain
\begin{equation}
	\gamma_{\text{homA}}(u) = \frac{I_{\nu / 2}(u)}{u^{\nu / 2}}, \quad
	\gamma_{\text{homB}}(u) = \frac{K_{\nu / 2}(u)}{u^{\nu / 2}},
\end{equation}
where $K_\alpha$ is the modified Bessel function of the second kind.
As was shown in \cite[App.~B.2]{A05}, it holds
\begin{equation}
	\frac{I_{\nu}\p{2 \sqrt{u v}}}{2^{\nu} (u v)^{\nu / 2}}
	= \p{u \partial_u^2 + (\nu + 1) \partial_u - u} \p{- \frac{4}{\p{2 \sqrt{u v}}^{\nu}} \euler^{u + v}
			\int_{0}^{\infty} \int_{0}^{q} \euler^{-q^2 - p^2} J_{\nu}\p{2 q \sqrt{2 u}} J_{\nu}\p{2 p \sqrt{2 v}} \dif p \dif q
		},
\end{equation}
and similarly via the relation $J_{\nu}\p{2 \sqrt{u v}} = (-\iunit)^\nu I_{\nu}\p{2 \iunit \sqrt{u v}} = (-\iunit)^\nu I_{\nu}\p{2 \sqrt{u (-v)}}$ we get
\begin{equation}
	\frac{J_{\nu}\p{2 \sqrt{u v}}}{2^{\nu} (u v)^{\nu / 2}}
	= \p{u \partial_u^2 + (\nu + 1) \partial_u - u} \p{- \frac{4}{\p{2 \sqrt{u v}}^{\nu}} \euler^{u - v}
		\int_{0}^{\infty} \int_{0}^{q} \euler^{-q^2 - p^2} J_{\nu}\p{2 q \sqrt{2 u}} I_{\nu}\p{2 p \sqrt{2 v}} \dif p \dif q
	}.
\end{equation}
Hence, a solution for the inhomogeneous equation is
\begin{equation}
	\gamma_{\text{inhom}}(u) = - \frac{2^{\nu + 2}}{\sqrt{\pi} (u v)^{\nu / 2}} \euler^{u} \int_{0}^{\infty} \int_{0}^{q} \euler^{-q^2 - p^2} J_{\nu}\p{2 q \sqrt{2 u}} \bk{\euler^{v} J_{\nu}\p{2 p \sqrt{2 v}} + \euler^{-v} I_{\nu}\p{2 p \sqrt{2 v}}} \dif p \dif q.
\end{equation}
Using the series representations \eqref{def:Bessel's} of the Bessel functions $J_\nu$ and $I_\nu$, one can see that $\gamma_{\text{homA}}(u)$ and $\gamma_{\text{inhom}}(u)$ are continuous functions in a neighbourhood of $u=0$. Since $g(u,v)$ is continuous around $0$, with $g(0,v)=0$, we can set $b=0$. A simple calculation using \eqref{def:Bessel's} for $\gamma_{\text{homA}}(u)$ and \eqref{eq:Bessel Gamma integrals} together with \cite[10.32.2]{NIST} for $\gamma_{\text{inhom}}(u)$ gives
\begin{equation}
\begin{split}
	\lim_{u \to 0} \gamma_{\text{homA}}(u) = \frac{1}{2^{\nu / 2} \Gamma(\frac \nu2 + 1)}, \quad \lim_{u \to 0} \gamma_{\text{inhom}}(u)
	&= -\sqrt{\pi} \frac{2^{\nu / 2}}{\Gamma\p{\frac{\nu}{2} + 1}} \frac{I_{\nu / 2}(v)}{v^{\nu / 2}},
\end{split}
\end{equation}
and thus
\begin{equation}
	a = -2^{\nu / 2} \Gamma\p{\frac{\nu}{2} + 1},\quad  \gamma_{\text{inhom}}(0)
	= \sqrt{\pi} 2^{\nu} \frac{I_{\nu / 2}(v)}{v^{\nu / 2}}.
\end{equation}
By replacing the Bessel-$I$ functions with the integral representations \cite[10.22.52]{NIST} and \cite[10.43.24]{NIST}, we can match this term to one of the double-integrals in $\gamma_{\text{inhom}}(u)$, then
\begin{equation}\label{gBessel}
\begin{split}
	g(u, v) &= \frac{2^{\nu + 2}}{\sqrt{\pi} (u v)^{\nu / 2}} \left[
		\euler^{u - v} \int_{0}^{\infty} \int_{0}^{\infty} \euler^{-q^2 - p^2} J_{\nu}\p{2 q \sqrt{2 u}} I_{\nu}\p{2 p \sqrt{2 v}} \dif p \dif q \right. \\
		&\hphantom{= \frac{2^{\nu + 2}}{\sqrt{\pi} (u v)^{\nu / 2}} \quad}
			-\euler^{u + v} \int_{0}^{\infty} \int_{0}^{q} \euler^{-q^2 - p^2} J_{\nu}\p{2 q \sqrt{2 u}} J_{\nu}\p{2 p \sqrt{2 v}} \dif p \dif q \\
		&\hphantom{= \frac{2^{\nu + 2}}{\sqrt{\pi} (u v)^{\nu / 2}} \quad}
			\left. -\euler^{u - v} \int_{0}^{\infty} \int_{0}^{q} \euler^{-q^2 - p^2} J_{\nu}\p{2 q \sqrt{2 u}} I_{\nu}\p{2 p \sqrt{2 v}} \dif p \dif q \right] \\
	&= \frac{2^{\nu + 2}}{\sqrt{\pi} (u v)^{\nu / 2}} \left[
		\euler^{u - v} \int_{0}^{\infty} \int_{q}^{\infty} \euler^{-q^2 - p^2} J_{\nu}\p{2 q \sqrt{2 u}} I_{\nu}\p{2 p \sqrt{2 v}} \dif p \dif q \right. \\
		&\hphantom{= \frac{2^{\nu + 2}}{\sqrt{\pi} (u v)^{\nu / 2}} \quad}
			\left. -\euler^{u + v} \int_{0}^{\infty} \int_{0}^{q} \euler^{-q^2 - p^2} J_{\nu}\p{2 q \sqrt{2 u}} J_{\nu}\p{2 p \sqrt{2 v}} \dif p \dif q \right] \\
	&= \frac{2^{\nu + 2}}{\sqrt{\pi} (u v)^{\nu / 2}} \int_{0}^{\infty} \int_{0}^{q} \euler^{-q^2 - p^2} \bk{
		\euler^{u - v} J_{\nu}\p{2 p \sqrt{2 u}} I_{\nu}\p{2 q \sqrt{2 v}}
		- \euler^{u + v} J_{\nu}\p{2 q \sqrt{2 u}} J_{\nu}\p{2 p \sqrt{2 v}}
	}\! \dif p \dif q.
\end{split}
\end{equation}
In the last step we used Fubini's theorem, and then we switched the variable names $q$ and $p$.
Finally, the claimed formula for $g(u, v)$ follows from \eqref{eq:Bessel product double-integrals}.
\end{proof}

\begin{remark}[Chiral symplectic Ginibre pre-kernel]\label{prop:preKchGin}
Note that the right hand side of \eqref{eq:preKLaguerre} is a continuous function of the parameter $\tau$ in a neighbourhood of $\tau=0$.
\autoref{thm:laguerre_origin_monomials} tells us that $S_{0}(u,v)$ is 
proportional to $g(u,v)-g(v,u)$:
\begin{equation}\label{eq:chellS}
\begin{split}
	S_{0}(u,v)=
\frac{\sqrt{\pi}\Gamma(\nu+2)}{2^{\nu+1}}		\left(
	g(u,v)-g(v,u)
	\right)
	&= 
	\sum_{k = 0}^{\infty} \sum_{l = 0}^{k} \frac{\sqrt{\pi}\Gamma(\nu+2)\left(
			u^{2 k + 1} v^{2 l} - u^{2 l} v^{2 k + 1}\right)
		}{
			2^{k+\nu+1} \Gamma\p{k + \frac{\nu}{2} + \frac{3}{2}} (2 k + 1)!!
			2^l \Gamma\p{l + \frac{\nu}{2} + 1} (2 l)!!
		} \\
	&= \frac{\Gamma(\nu+2)}{(u v)^{\nu / 2}} \int_{0}^{\pi / 2} \sinh\parentheses[\Big]{(u - v) \cos(\alpha)}	I_{\nu}\parentheses[\Big]{2 \sqrt{u v} \sin(\alpha)} \dif \alpha.
\end{split}
\end{equation}
This is the known limiting kernel found in \cite[Appendix~B]{A05} and is expected to be universal. As in \autoref{Ginibreskernel}, $\tau=0$ considered here corresponds to  maximal non-Hermiticity of the underlying ensemble.

The kernel in \eqref{eq:chellS} is the symplectic analogon of its Hermitian partner $I_\nu(u\overline{v})$ (in terms of squared variables) of the generalised Fock-space \cite{FC}  in $L^2\left(|z|^{2\nu+2}K_\nu(2|z|^2)\dif A(z)\right)$.
\end{remark}

We can now turn to the double-sum of Laguerre polynomials.
\begin{lemma} \label{thm:laguerre_origin_generalized}
	Let $\zeta, \eta \in \Cset$, $\vartheta, \theta \in [0, 1)$ and define:
	\begin{equation}\label{partialsumlaguerre}
		f_N(\zeta, \eta) \defequals
		\sum_{k = 0}^{N - 1}
		\frac{(2k)!! \vartheta^{2 k + 1}}{2^k \Gamma\p{k + \frac{\nu}{2} + \frac{3}{2}}}
		L_{2 k + 1}^{(\nu)}(\zeta)
		\sum_{l = 0}^{k}
		\frac{(2 l - 1)!! \theta^{2 l}}{2^l \Gamma\p{l + \frac{\nu}{2} + 1}}
		L_{2 l}^{(\nu)}(\eta).
	\end{equation}
	Then, as $N \to \infty$ we have $f_N(\zeta, \eta) \rightrightarrows f_{\vartheta,\theta}(\zeta,\eta)$, where the limiting function is given by
	\begin{equation}\label{limlaguerretau}
	\begin{split}
		f_{\vartheta,\theta}(\zeta, \eta)
		&= \frac{2^\nu\sqrt{a(\vartheta)a(\theta)}}{\sqrt{\pi \vartheta \theta} (\vartheta \theta \zeta \eta)^{\nu / 2}} \euler^{- \vartheta a(\vartheta)\zeta - \theta a(\theta)\eta} 
	\left[ \int_{0}^{\lambda}
			\euler^{\p{-a(\vartheta)\zeta - a(\theta)\eta} \cos(t)}
			J_\nu\p{2 \sqrt{a(\vartheta)a(\theta)\zeta\eta} \, \sin(t)}
			\dif t \right. \\
		&\qquad\qquad\qquad\qquad\qquad\qquad\qquad\left.-\int_{0}^{\mu}
			\euler^{\p{a(\vartheta)\zeta - a(\theta)\eta} \cos(t)}
			J_\nu\p{2 \sqrt{a(\vartheta)a(\theta)\zeta\eta} \, \sin(t)}
			\dif t \right],
	\end{split}
	\end{equation}
	with
	\begin{equation}
		a(\theta) = \frac{\theta}{1-\theta^2}, \quad
		\lambda = 2 \arctan\p{\sqrt{\frac{(1 + \vartheta) (1 + \theta)}{(1 - \vartheta) (1 - \theta)}}}, \quad
		\mu = 2 \arctan\p{\sqrt{\frac{(1 - \vartheta) (1 + \theta)}{(1 + \vartheta) (1 - \theta)}}}.
	\end{equation}
\end{lemma}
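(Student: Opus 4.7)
The strategy mirrors the proof of Lemma \ref{thm:hermite_origin_generalized}: turn the partial sum into a double integral whose $(k,l)$-summand matches the monomial sum of Lemma \ref{thm:laguerre_origin_monomials}, pass to the limit by dominated convergence, and finally evaluate the limiting integral in closed form.

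First, I will insert an integral representation of Laguerre polynomials of the form
\begin{equation*}
L_n^{(\nu)}(x) = \frac{2\,e^{x}\,x^{-\nu/2}}{n!}\int_0^\infty e^{-s^2}\,s^{2n+\nu+1}\,J_\nu\p{2 s\sqrt{x}}\,ds
\end{equation*}
into each factor $L_{2k+1}^{(\nu)}(\zeta)$ and $L_{2l}^{(\nu)}(\eta)$ in \eqref{partialsumlaguerre}. The elementary factorial identities $(2k+1)!=2^{k} k!\,(2k+1)!!$ and $(2l-1)!!\,(2l)!!=(2l)!$ rearrange the coefficients so that, after pulling out $s^{\nu+1}$ from $s^{4k+\nu+3}\vartheta^{2k+1}=s^{\nu+1}(\vartheta s^2)^{2k+1}$ (and likewise $t^{\nu+1}$ from the $\eta$-integral), the remaining $k,l$-dependent factors coincide with the summand of $g_N(u,v)$ from Lemma \ref{thm:laguerre_origin_monomials}, with $u=\vartheta s^2$ and $v=\theta t^2$. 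This yields
\begin{equation*}
f_N(\zeta,\eta) = \frac{4\, e^{\zeta+\eta}}{(\zeta\eta)^{\nu/2}}\int_0^\infty\!\!\int_0^\infty e^{-s^2-t^2}(st)^{\nu+1} J_\nu\p{2s\sqrt{\zeta}}J_\nu\p{2t\sqrt{\eta}}\, g_N(\vartheta s^2,\theta t^2)\,ds\,dt.
\end{equation*}

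Second, I will pass to the limit $N\to\infty$ under the integral sign. Because $g_N$ is a partial sum of a power series with positive real coefficients in $|u|,|v|$, the bound $|g_N(\vartheta s^2,\theta t^2)|\le g^{+}(s,t)$ holds with $g^{+}$ the corresponding absolute series; combined with the Gaussian weight $e^{-s^2-t^2}$ this supplies an $N$-independent integrable majorant on $(0,\infty)^2$. Dominated convergence together with Lemma \ref{thm:laguerre_origin_monomials} then replaces $g_N$ by its pointwise limit $g$ inside the integral.

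Third and finally, I will evaluate the resulting integral explicitly. The bracket defining $g$ consists of two contributions, $e^{(u+v)\cos\alpha}J_\nu(2\sqrt{uv}\sin\alpha)$ and $-e^{-(u-v)\cos\alpha}I_\nu(2\sqrt{uv}\sin\alpha)$, integrated over $\alpha\in(0,\pi/2)$. For each contribution the inner $(s,t)$-integral factorises into products of Hankel--Weber type integrals $\int_0^\infty e^{-As^2}s\,J_\nu(Bs)J_\nu(Cs)\,ds$, which are evaluated in closed form by the Bessel-function identities collected in Appendix \ref{appC} and produce exponentials multiplied by modified Bessel functions. Substituting back and performing a trigonometric change of variable tailored to the $\arctan$-endpoints $\lambda$ and $\mu$ will reduce the $\alpha$-integration to the two claimed pieces $\int_0^\lambda$ and $\int_0^\mu$, with all prefactors organised around $a(\vartheta)=\vartheta/(1-\vartheta^2)$ and $a(\theta)=\theta/(1-\theta^2)$.

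The main obstacle is expected to be this last step. While each individual manipulation is classical, coordinating them so that the two bracket contributions collapse precisely into two $\arctan$-endpoints rather than a more unwieldy four-limit expression is delicate. I anticipate that introducing $a(\vartheta)$ and $a(\theta)$ at the outset of this computation, as suggested by the form of the prefactor in \eqref{limlaguerretau}, will be essential to keeping the combination of exponential prefactors, Bessel-function arguments, and trigonometric substitutions tractable.
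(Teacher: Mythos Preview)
Your overall strategy matches the paper's exactly: insert the integral representation of $L_n^{(\nu)}$, recognise the inner sum as $g_N(\vartheta s^2,\theta t^2)$, pass to the limit by dominated convergence, and evaluate. The difference lies in which representation of the limit $g$ you insert. You use the final single-$\alpha$-integral form of $g$ from Lemma~\ref{thm:laguerre_origin_monomials}; the paper instead uses the intermediate $(p,q)$-double-integral form derived in the course of that lemma's proof (the expression just before the final change of variables). With the paper's choice, the Bessel functions coming from $g$ have arguments $2p\sqrt{2\vartheta}\,s$ and $2q\sqrt{2\theta}\,t$, so the $s$- and $t$-integrals genuinely decouple into two independent Hankel--Weber integrals of the type in \eqref{eq:Bessel product integrals}. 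A substitution $p=qr$ and one more application of \eqref{eq:Bessel product integrals} then reduces everything to a single $r$-integral, after which the $\tan(\alpha/2)$-type substitutions produce the endpoints $\lambda$ and $\mu$ directly.

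With your choice the factorisation claim is not correct as stated: the Bessel factor coming from $g$ is $J_\nu(2st\sqrt{\vartheta\theta}\sin\alpha)$ (or its $I_\nu$ counterpart), which couples $s$ and $t$. The $(s,t)$-integral is therefore not a product of two one-dimensional Hankel--Weber integrals. It can still be done iteratively---first the $s$-integral via \eqref{eq:Bessel product integrals} with one argument proportional to $t$, then the resulting $t$-integral again via \eqref{eq:Bessel product integrals}---but the intermediate expressions acquire rational functions of $\cos\alpha$ in the Gaussian exponents and Bessel arguments, and the final $\alpha$-substitution needed to reach the claimed $\lambda,\mu$ endpoints is correspondingly messier. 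So your route is viable but the wording ``factorises into products'' should be replaced by ``can be evaluated iteratively''; if you want the cleaner computation, switch to the $(p,q)$-integral form of $g$.
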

\begin{proof}
We replace the Laguerre polynomials with the integral (compare \cite[18.10.9]{NIST})
\begin{equation}
	L_n^{(\nu)}(x) = \frac{2}{n!} x^{-\nu / 2} \euler^x
		\int_{0}^{\infty} t^{2 n + \nu + 1} \euler^{-t^2} J_\nu\p{2 \sqrt{x} t} \dif t.
\end{equation}
Then, we get
\begin{equation}
\begin{split}
	f_N(\zeta, \eta) &= 4 (\zeta \eta)^{-\nu / 2} \euler^{\zeta + \eta}
		\int_{0}^{\infty} \int_{0}^{\infty} (t s)^{1 + \nu} \euler^{-t^2 - s^2} J_\nu\p{2 \sqrt{\zeta} t} J_\nu\p{2 \sqrt{\eta} s} \\
	&\quad \times \sum_{k = 0}^{N - 1} \sum_{l = 0}^{k} \frac{(\vartheta t^2)^{2 k + 1}}{2^k \Gamma\p{k + \frac{\nu}{2} + \frac{3}{2}} (2 k + 1)!!} \frac{(\theta s^2)^{2 l}}{2^l \Gamma\p{l + \frac{\nu}{2} + 1} (2 l)!!}
		\dif t \dif s.
\end{split}
\end{equation}
This integral is uniformly bounded\footnote{We estimate the Bessel-$J$ functions with \cite[10.14.4]{NIST} and the double-sum with $\euler^{\vartheta t^2 + \theta s^2}$.} for $\zeta,\eta \in B(0,r)$ by
\begin{equation}
	\abs{f_N(\zeta, \eta)} \leq r^{\nu} \int_{\Rset_{+}^2}(t s)^{1 + 2 \nu} \exp\left[
		-(1 - \vartheta) t^2 - (1 - \theta) s^2 + 2rt + 2rs\right] \dif t \dif s < \infty, \quad \forall N.
\end{equation}
Using Lebesgue's dominated convergence theorem, we replace the limit of the double-sum with the double-integral from \eqref{gBessel} and evaluate the $t$- and $s$-integrals with \eqref{eq:Bessel product integrals}:
\begin{equation}\begin{split}
	f_{\vartheta,\theta}(\zeta, \eta) &= \frac{2^{\nu + 4}}{\sqrt{\pi}} \p{\vartheta \theta \zeta \eta}^{-\nu / 2} \euler^{\zeta + \eta} \int_{0}^{\infty} \int_{0}^{q} \euler^{-q^2 - p^2} \\
	&\quad \times \left[
		\int_{0}^{\infty} t \euler^{-(1 - \vartheta) t^2} J_\nu\p{2 \sqrt{\zeta} t} J_\nu\p{2 \sqrt{2 \vartheta} p t} \dif t
		\int_{0}^{\infty} s \euler^{-(1 + \theta) s^2} J_\nu\p{2 \sqrt{\eta} s} I_\nu\p{2 \sqrt{2 \theta} q s} \dif s \right. \\
	&\qquad \left.
		-\int_{0}^{\infty} t \euler^{-(1 - \vartheta) t^2} J_\nu\p{2 \sqrt{\zeta} t} J_\nu\p{2 \sqrt{2 \vartheta} q t} \dif t
		\int_{0}^{\infty} s \euler^{-(1 - \theta) s^2} J_\nu\p{2 \sqrt{\eta} s} J_\nu\p{2 \sqrt{2 \theta} p s} \dif s \right] \\
	&= \frac{2^{\nu + 2}}{\sqrt{\pi}} \p{\vartheta \theta \zeta \eta}^{-\nu / 2} \euler^{\zeta + \eta} \int_{0}^{\infty} \int_{0}^{q} \euler^{-q^2 - p^2} \\
	&\quad \times \left[ \frac{1}{(1 - \vartheta) (1 + \theta)}
		\exp\p{-\frac{\zeta + 2 \vartheta p^2}{1 - \vartheta} - \frac{\eta - 2 \theta q^2}{1 + \theta}}
		I_\nu\p{\frac{2 \sqrt{2 \vartheta \zeta} p}{1 - \vartheta}}
		J_\nu\p{\frac{2 \sqrt{2 \theta \eta} q}{1 + \theta}} \right. \\
	&\qquad \left. -\frac{1}{(1 - \vartheta) (1 - \theta)}
		\exp\p{-\frac{\zeta + 2 \vartheta q^2}{1 - \vartheta} - \frac{\eta + 2 \theta p^2}{1 - \theta}}
		I_\nu\p{\frac{2 \sqrt{2 \vartheta \zeta} q}{1 - \vartheta}}
		I_\nu\p{\frac{2 \sqrt{2 \theta \eta} p}{1 - \theta}} \right] \dif p \dif q.
\end{split}\end{equation}
Applying the substitution $p = q r$, $r \in [0, 1]$, we can calculate the $q$-integral with \eqref{eq:Bessel product integrals}.
In the last step we bring the two resulting one-dimensional integrals into the form in \eqref{limlaguerretau}. This is achieved by following changes of variables (for the first and second integral respectively)
\begin{equation*}
	r \to \sqrt{\frac{(1 - \vartheta) (1 - \theta)}{(1 + \vartheta) (1 + \theta)}} \, \tan(\alpha / 2), \quad
	r \to \sqrt{\frac{(1 + \vartheta) (1 - \theta)}{(1 - \vartheta) (1 + \theta)}} \, \tan(\alpha / 2).
\end{equation*}
\end{proof}
We can now complete the proof of our main result.
\begin{proof}[Proof of \autoref{thm:PKsopL}] \autoref{thm:laguerre_origin_monomials} together with \autoref{thm:laguerre_origin_generalized} shows that the sum in \eqref{partialsumlaguerre} converges absolutely and uniformly in each ball of center $0$ and radius $r$, and therefore in each compact subset of $\Cset$. Due to the absolute convergence of the series we can rearrange it and, in particular when $\vartheta = \theta = \tau$, we obtain 
\begin{equation}
S_\tau(z,u)=
\frac{\sqrt{\pi}\Gamma(\nu+2)}{2^{\nu+1}}	
\left(
f_{\tau,\tau}\left(\frac{z}{\tau},\frac{u}{\tau}\right)-f_{\tau,\tau}\left(\frac{u}{\tau},\frac{z}{\tau}\right)
\right).
\end{equation}
\end{proof}

%%%%%%%%%%%%%%%%%%%%%%%%%%%%%%%%%%%%%%%%%%%%%%%%%%%%%%%%%%%%%%%%%%%%%
\subsubsection{Universality of the symplectic chiral elliptic Ginibre kernel}\label{sec:univL}

In this subsection we will prove the universality of all $k$-point correlation functions \eqref{eq:RkPf} in the symplectic chiral elliptic Ginibre ensemble, in the large-$N$ limit at strong non-Hermiticity close to the origin.
The comments from Subsection \ref{sec:univH} about scaling and $N$-dependence of the weight apply here as well.
To derive the macroscopic density and the droplet for our weight \eqref{eq:chEll_weight}, we exploit the Bessel asymptotic $K_\nu(z) \sim \sqrt{\pi / (2 z)} \euler^{-z}$ to construct the limiting potential.
Then, from \cite[Thm.~2.1]{B-GC} together with \cite[Thm.~1]{ABK} we obtain
\begin{equation} \label{eq:chEll-law}
	\corrfct_{N, 1}(z) \approx \begin{cases}
		\frac{1}{4 \pi (1 - \tau^2) \abs{z}} & \text{if} \quad \p{\frac{\Re(z) - 4 \tau}{1 + \tau^2}}^2 + \p{\frac{\Im(z)}{1 - \tau^2}}^2 \leq 4 N, \\
		0 & \text{else}.
	\end{cases}
\end{equation}
For the behaviour at the origin let us first quote the known result at maximal non-Hermiticity $\tau = 0$ from \cite{A05}.
In that case we can read off from \autoref{prop:preKchGin} the matrix elements of the limiting kernel of $\kernel_N(z, u)$ from \eqref{eq:Kerneldef}, 
times the normalisation from the area measure
\begin{equation}
\begin{split}
	\MoveEqLeft \lim_{N \to \infty} 
\frac{1}{\pi\Gamma(\nu+2)}	
	\sqrt{w_{\tau=0}^{(\nu)}\p{z} w_{\tau=0}^{(\nu)}\p{u}} \, \prekernel_{0,N}\p{z, u}\\
	&= \left(\frac{|zu|}{zu}\right)^{\nu / 2} \frac{1}{\pi} \sqrt{K_\nu\p{2 \abs{z}} K_\nu\p{2 \abs{u}}} \int_{0}^{\pi / 2} \sinh\parentheses[\Big]{(z - u) \cos(\alpha)} I_\nu\parentheses[\Big]{2 \sqrt{z u} \sin(\alpha)} \dif \alpha.
	\label{eq:chGin-prekernel}
\end{split}
\end{equation}
We call this end result the limiting symplectic chiral Ginibre kernel at the origin, after removing the first factor due to \autoref{pfaffcocycle}.
We can now prove the following universality statement.
\begin{corollary}
The large-$N$ limit of the matrix elements 
%%%%%%%%%%%%%%%
$\sigma_N$ given in \eqref{eq:preKLaguerre} of the kernel $K_N$ \eqref{eq:Kerneldef} with respect to weight function $w_\tau^{(\nu)}$ \eqref{eq:chEll_weight} 
are equivalent to \eqref{eq:chGin-prekernel} 
in the sense of Remark \ref{pfaffcocycle} 
for general values $0< \tau <1$ and thus universal.
%%%%%%%%
\end{corollary}

\begin{proof}
%%%%%%%%%%%%%
As explained already in the proof of Corollary \ref{Cor4.6} we have to unfold. In this case
%%%%%%%%%%%%
we have to rescale all arguments $z \to (1 - \tau^2) z$. Therefore, we take the limit
\begin{equation}
\begin{split}
&	\lim_{N \to \infty} 
\frac{(1 - \tau^2)^3 }{\pi \Gamma(\nu + 2)(1 - \tau^2)^2}
\sqrt{w_{\tau}^{(\nu)}\parentheses[\Big]{(1 - \tau^2) z} w_{\tau}^{(\nu)}\parentheses[\Big]{(1 - \tau^2) u}} \, \prekernel_{\tau,N}\parentheses[\Big]{(1 - \tau^2) z, (1 - \tau^2) u} \\
	&= \euler^{-\iunit \tau \Im(z + u)} 
\left(\frac{|zu|}{zu}\right)^{\nu / 2} 	
	\frac{1}{\pi} \sqrt{K_\nu\p{2 \abs{z}} K_\nu\p{2 \abs{u}}} \,
		\int_{0}^{\pi / 2} \sinh\parentheses[\Big]{(z - u) \cos(\alpha)} I_\nu\parentheses[\Big]{2 \sqrt{z u} \sin(\alpha)} \dif \alpha.
\end{split}
\end{equation}
The pre-factor $(1 - \tau^2)^3$ 
originates from the rescaling of the 
arguments  
and the factors $(\cconj{z} - z)$ in \eqref{eq:RkPf}, times the normalisation from the area measure. 
Furthermore, we have multiplied with the $\tau$-dependent factor from the area measure. 
After inserting \eqref{eq:PKsopL} in the second line we arrive at  \eqref{eq:chGin-prekernel}, apart from the two pre-factors which 
lead to an 
equivalent kernel, cf. \autoref{pfaffcocycle}. Thus the universality of the kernel \eqref{eq:chGin-prekernel} holds.
\end{proof}
%%%%%%
Once again we expect the universality found for the chiral elliptic Ginibre ensemble to hold for a more general class of weight functions, that share the same singularity
of the weight \eqref{eq:chEll_weight} at the origin.

%%%%%%%%%%%%%%%%%%%%%%%%%%%%%%%%%%
%%%%%%%%%%%%%%%%%%%%%%%%%%%%%%%%%%
\section{Christoffel perturbation for skew-orthogonal polynomials}\label{sec:CP}

In this section we will relate the SOP $q_n$ with respect to the weight function $w(z)$ to those $q_n^{(1)}$ skew-orthogonal with respect to the weight function $w^{(1)}(z)=|z-m|^2w(z)$. For OP on subsets of the real line such a relation between OP with respect to weights $w(x)$ and $(x-m)w(x)$ (or in fact $P(x)w(x)$ for a polynomial $P(x)$) is well known under the name of Christoffel perturbation, and determinantal formulas exist, compare \cite{BDS}. Such a transformation, including multiplication of the measure by a rational function, is closely related to the Darboux transformation of integrable systems. In the complex plane we consider quadratic factors $|z-m|^2$, in order to preserve the non-negativity of the resulting weight\footnote{In applications in physics, e.g. in field theory with chemical potential, linear factors $(z-m)$ leading to signed measures also play an important role, see \cite{James}.}. For planar OP (and also weighted Szeg\H{o} polynomials) such a Christoffel perturbation has already been studied for $w^{(M)}(z)=\prod_{l=1}^M|z-m_l|^2w(z)$ in \cite{AV}, from which we borrow the notation. There, determinantal formulas similar to those in \cite{BDS} have been derived for arbitrary $M$. 
%%%%%%%%%%%%%
For SOP, no such formulas were know. 
Only the polynomial kernel $\prekernel_n^{(M)}$
of $w^{(M)}$ was given in terms of the Pfaffian determinant of the polynomial kernel $\prekernel_n$ and odd SOP $q_n$ of $w$, see \cite{ABa}. We will use these expressions to provide an explicit representation of the perturbed SOPs $q_n^{(1)}$ in the following theorem. 
%%%%%%%%5

%%%%%%%%%%%%%%%%%%%%%%%%%%%%%%%%%%%%%%%%%%%%%%%%%%%%%%
\begin{theorem}\label{thm:CP_SOP}
Let $(q_n)_{n\in\Nset}$ be the family of monic SOP with respect to the weight function $w(z)$, with norms $r_n$ and pre-kernel $\prekernel_n(z,u)$. Then, the following expressions hold for the monic SOP $q_n^{(1)}(z)$, their norms $r_n^{(1)}$ and kernel  $\prekernel_n^{(1)}(z,u)$ with respect to the perturbed weight $w^{(1)}(z)=|z-m|^2w(z)$, where we assume $m\in\Rset$:
\begin{equation}\label{SOP1}
\begin{split}
	q_{2 n}^{(1)}(z) &= \frac{r_n\prekernel_{n + 1}(m, z)}{ (m - z)q_{2 n}(m)} , \\
	q_{2 n + 1}^{(1)}(z) &= \frac{
			q_{2 n + 2}(m) q_{2 n}(z)
			- q_{2 n}(m) q_{2n + 2}(z)}{(m - z)q_{2 n}(m)} + d_n q_{2 n}^{(1)}(z), \\
	r_n^{(1)} &= r_n \frac{q_{2n + 2}(m)}{q_{2 n}(m)},
\end{split}
\end{equation}
where $d_n \in \Rset$ is an arbitrary constant. Furthermore, it holds
\begin{equation}\label{prek1}
	\prekernel_{n + 1}^{(1)}(z, u) = \frac{\prekernel_{n + 1}(z, u) q_{2n + 2}(m) - \prekernel_{n + 1}(z, m) q_{2n + 2}(u) + \prekernel_{n + 1}(u, m) q_{2n + 2}(z)}{(m - z) (m - u) q_{2n + 2}(m)}.
\end{equation}
\end{theorem}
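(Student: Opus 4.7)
The plan rests on a single algebraic identity that converts the new skew-product into the old one. Since $m\in\Rset$ one has $\abs{z-m}^2=(z-m)(\cconj{z}-m)$, hence for any polynomials $f,g$
\begin{equation*}
\skp{f}{g}^{(1)}=\skp{(z-m)f(z)}{(z-m)g(z)}.
\end{equation*}
Writing $\tilde{q}_n^{(1)}(z):=(z-m)q_n^{(1)}(z)$, the skew-orthogonality of $(q_n^{(1)})_n$ with respect to $w^{(1)}$ translates into skew-orthogonality of $(\tilde{q}_n^{(1)})_n$ with respect to $w$, with the same skew-norms $r_n^{(1)}$. This reduction is the backbone of the proof.

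I would first confirm that the formulas in \eqref{SOP1} produce honest monic polynomials. The antisymmetry $\prekernel_{n+1}(m,m)=0$ shows that $(m-z)$ divides the numerator defining $q_{2n}^{(1)}$, and a leading-coefficient computation based on the $k=n$ summand of $\prekernel_{n+1}(m,z)$ delivers the correct monic normalisation; for $q_{2n+1}^{(1)}$ polynomiality at $z=m$ is immediate. This yields the compact forms $\tilde{q}_{2n}^{(1)}=(r_n/q_{2n}(m))\,\prekernel_{n+1}(z,m)$ and $\tilde{q}_{2n+1}^{(1)}=q_{2n+2}-\alpha_n q_{2n}+d_n\tilde{q}_{2n}^{(1)}$ with $\alpha_n=q_{2n+2}(m)/q_{2n}(m)$. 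The skew-orthogonality relations for the $\tilde{q}_n^{(1)}$ then follow from the reproducing property of Lemma \ref{lem:skewrepro}: $\skp{f}{\prekernel_{n+1}(\cdot,m)}=f(m)$ for $f\in\mathcal{P}_{2(n+1)}$. Even-even orthogonality reduces to $\tilde{q}_{2l}^{(1)}(m)=0$; the off-diagonal odd-odd case reduces to $q_{2n+2}(m)-\alpha_n q_{2n}(m)=0$, which is built into the definition of $\alpha_n$; and the mixed diagonal product produces $\alpha_n r_n=r_n^{(1)}$ once one verifies, by expanding $\prekernel_{n+1}(z,m)$ in the SOP basis, that $\skp{q_{2n+2}}{\tilde{q}_{2n}^{(1)}}=0$. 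The $d_n$-contributions drop out of every nontrivial product by the even-even orthogonality, in analogy with Remark \ref{Rem2.6}.

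For the Bergman-like kernel formula I would substitute the closed forms into
\begin{equation*}
(z-m)(u-m)\,\prekernel_{n+1}^{(1)}(z,u)=\sum_{k=0}^n\frac{\tilde{q}_{2k+1}^{(1)}(z)\tilde{q}_{2k}^{(1)}(u)-\tilde{q}_{2k}^{(1)}(z)\tilde{q}_{2k+1}^{(1)}(u)}{r_k^{(1)}}.
\end{equation*}
The $d_k$-terms cancel by antisymmetry of the summand, and using $r_k^{(1)}=r_k\alpha_k$ each summand rearranges into a difference
\begin{equation*}
\frac{q_{2k+2}(z)\phi^{(k+1)}(u)-\phi^{(k+1)}(z)q_{2k+2}(u)}{q_{2k+2}(m)}-\frac{q_{2k}(z)\phi^{(k+1)}(u)-\phi^{(k+1)}(z)q_{2k}(u)}{q_{2k}(m)},
\end{equation*}
with $\phi^{(k+1)}(z):=\prekernel_{k+1}(z,m)$. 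An index shift plus Abel summation, using the one-step difference $\phi^{(k+1)}-\phi^{(k)}=[q_{2k+1}(z)q_{2k}(m)-q_{2k}(z)q_{2k+1}(m)]/r_k$, collapses the interior of the telescope into $\prekernel_{n+1}(z,u)$; the top boundary contributes $[q_{2n+2}(z)\prekernel_{n+1}(u,m)-\prekernel_{n+1}(z,m)q_{2n+2}(u)]/q_{2n+2}(m)$, and the bottom boundary at $k=0$ is absorbed using $q_0\equiv 1$ together with the explicit form of $\phi^{(1)}$. Dividing through by $(m-z)(m-u)$ yields precisely \eqref{prek1}.

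The main technical obstacle will be this pre-kernel telescoping: one must keep careful track of which superscript of $\phi^{(k)}$ appears in each summand during the index shift, and verify the $k=0$ boundary cancellation. A secondary care-point is the mixed product $\skp{q_{2n+2}}{\tilde{q}_{2n}^{(1)}}$ appearing in the skew-orthogonality step: since $q_{2n+2}$ just fails to lie in $\mathcal{P}_{2(n+1)}$, the reproducing kernel identity cannot be applied directly, and one must fall back on the explicit SOP-expansion of $\prekernel_{n+1}(z,m)$ and the skew-orthogonality of the $q_k$'s.
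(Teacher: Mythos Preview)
Your proposal is correct and takes a genuinely different route from the paper. The paper works top-down: it quotes the pre-kernel identity \eqref{prek1} as a special case of the Pfaffian formula in \cite{ABa} for the density with $M$ characteristic polynomial insertions, and then \emph{extracts} the skew-norms and SOP from it by reading off the leading and next-to-leading asymptotics of $\prekernel_{n+1}^{(1)}(z,u)$ as $\abs{z},\abs{u}\to\infty$ and as $\abs{u}\to\infty$ alone. Your argument is bottom-up and entirely self-contained: you exploit the identity $\skp{f}{g}^{(1)}=\skp{(z-m)f}{(z-m)g}$ (valid precisely because $m\in\Rset$) to reduce everything to the unperturbed skew-product, verify the skew-orthogonality relations of the $\tilde{q}_n^{(1)}$ directly via the reproducing property of $\prekernel_{n+1}(\cdot,m)$, and then recover \eqref{prek1} by an Abel-type telescoping of the defining sum. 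The paper's approach explains the provenance of the formulas but relies on the external result \cite{ABa}; your approach is more elementary and independent of that reference, at the price of having to carry out the telescoping computation (which does collapse exactly as you describe: the interior differences $C_k^{(k+1)}-C_k^{(k)}$ produce the summands of $\prekernel_{n+1}(z,u)$, the $k=0$ boundary supplies the missing first summand since $q_0\equiv1$, and the top boundary gives the remaining two terms). Your identification of the two care-points---the bookkeeping in the telescope and the fact that $q_{2n+2}\notin\mathcal{P}_{2(n+1)}$ so that $\skp{q_{2n+2}}{\tilde{q}_{2n}^{(1)}}$ must be computed from the explicit SOP expansion rather than the reproducing property---is accurate.
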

Notice that for $z=m$, in eqs.~\eqref{SOP1} and \eqref{prek1} both numerator and denominator vanish, leading to a finite expression after applying the rule of l'H{\^ o}pital.
For comparison we state here the corresponding result for OP from \cite{AV}, where the same statement about $z=m$ applies.

\begin{remark}[Perturbed OP] \label{cor:OP1}
Let us assume that $\mu$ has density function $w$ on some domain $\subseteq \Cset$, $m \in \Cset$. Then it follows from \cite{AV}, that one can express the sequence $(p_n^{(1)})_n$ of OP in $L^2(|\cdot - m|^2w(\cdot))$ in terms of the sequence $(p_n)_n$ of OP in $L^2(w(\cdot))$ as:
\begin{equation}\label{pertOP}
	p_n^{(1)}(z) = \frac{
			K_{n + 1}(z, {m}) p_{n + 1}(m)
			- K_{n + 1}(m, {m}) p_{n + 1}(z)}{(m - z) K_{n + 1}(m, {m})},
\end{equation}
where $K_{n+1}(z,u)$ is the polynomial kernel constituted by the partial sum -- up to $n$ -- of  the orthonormal polynomials $p_k/\sqrt{h_k}$.  For the norms $h_n^{(1)}$  and polynomial kernel $K_{n}^{(1)}$, we have
\begin{equation}\label{pertOP1}
\begin{split}
	h_n^{(1)} &= h_{n + 1}\frac{K_{n + 2}(m, {m})}{K_{n + 1}(m, {m})},\\
	K_{n}^{(1)}(z, u) &= \frac{K_{n + 1}(m, {m}) K_{n + 1}(z, u) - K_{n + 1}(z,{m}) K_{n + 1}(m, u)}{(m - z) (\overline{m} - \overline{u}) K_{n + 1}(m, {m})}.
\end{split}
\end{equation}
\end{remark}
The proof of this remark can be found as a special case in \cite[Section~3]{AV}, where both $p_n^{(M)}(z)$ and $K_{n}^{(M)}(z, u)$ are given in terms of a ratio of two determinants of sizes $M+1$ and $M$, respectively.
Analogously we can express $q_n^{(M)}$ and $\prekernel_n^{(M)}$ as a ratio of Pfaffians following \cite{ABa}, but for the proof of the above theorem we will only consider the simplest case $M = 1$.

We note that the simple relationship between the odd SOP and odd OP found in \autoref{thm:sop_from_op} breaks down for $q_{2 k + 1}^{(1)}(z) $ and $p_{2k+1}^{(1)}(z) $, even if the initial OP $p_n(z)$ were to satisfy a three-term recurrence relation.  
\begin{corollary}
%%%%%
Let $\mu$ be a measure that cannot be made rotationally symmetric under an affine transformation.
%%%
The sequence of OP 
$(p_n^{(1)})_n$ in $L^2(|\cdot - m|^2\dif\mu)$ does not satisfy a three-term recurrence relation. 
\end{corollary}
For a particular case of $p_n^{(1)}(z)$ given in terms of Gegenbauer polynomials $p_n(z)$, this was indeed shown in \cite[Section~5]{ANPV}.

Let us present the proof now for \autoref{thm:CP_SOP}.
\begin{proof}
We start with eq.~\eqref{prek1}. In \cite[Eq.~(2.14)]{ABa} the density $\corrfct_{N, 1}^{(1)}(z)$, defined as in \eqref{eq:Rkdef}, was expressed in terms of a ratio of Pfaffian determinants for an arbitrary product of $M$ characteristic polynomials, which for $M=1$ reads:
\begin{align}
	\corrfct_{N, 1}^{(1)}(z) &= \p{\cconj{z} - z} w^{(1)}(z) \prekernel_N^{(1)}(z, \cconj{z}) \nonumber\\
	&= \p{\cconj{z} - z} w^{(1)}(z)
		\frac{\prekernel_{n + 1}(z, \cconj{z}) q_{2n + 2}(m) - \prekernel_{n + 1}(z, m) q_{2n + 2}(\cconj{z}) + \prekernel_{n + 1}(\cconj{z}, m) q_{2n + 2}(z)}{(m-z)(m-\cconj{z})q_{2n + 2}(m)}.
\end{align}
In the first line we started with eq.~\eqref{eq:R1}  that holds for arbitrary weight functions. For the second step we used the result  in \cite[Eq. (2.14)]{ABa}, to establish the claim in eq.~\eqref{prek1}. From this equation we will recover the skew-norms $r_n^{(1)}$ and the even and odd SOP $q_n^{(1)}$, by taking appropriate limits.

First, we determine the skew-norms. 
From the definition \eqref{eq:prek-def}, together with the fact that the SOP are monic $q_n(z)\sim z^n$, 
%%%%%%%%%%%%%
with $\sim$ meaning that  $\lim_{\abs{z}\to\infty}q_n(z)z^{-n}=1$, 
%%%%%
we can read off the asymptotic for the pre-kernel with both arguments being large, $\abs{z},\abs{u} \gg 1$: 
\begin{equation}\label{prekasymp2}
	\prekernel_{n+1}(z,u) \sim \frac{z-u}{r_n}(zu)^{2n}.
\end{equation}
Similarly, we obtain the asymptotic for a single argument being large, $\abs{z} \gg 1$:
\begin{equation}\label{prekasymp1}
	\prekernel_{n+1}(z,u) \sim \frac{1}{r_n}z^{2n+1}q_{2n}(u).
\end{equation}
Next, we insert this into \eqref{prek1} to determine the leading order expansion of $\prekernel_{n+1}^{(1)}(z,u)$ for both arguments being large: 
\begin{align}
	\prekernel_{n+1}^{(1)}(z,u) &\sim \frac{(z-u)r_n^{-1}(zu)^{2n}q_{2n + 2}(m) - r_n^{-1}z^{2n+1} q_{2n}(m) u^{2n + 2} + r_n^{-1}u^{2n+1} q_{2n}(m) z^{2n+2}}{zu\,q_{2n + 2}(m)}\nonumber\\
	&\sim \frac{(z-u)(zu)^{2n}q_{2n}(m)}{r_nq_{2n+2}(m)}=\frac{z-u}{r_n^{(1)}}(zu)^{2n}.
\end{align}
In the second step we have only kept the leading order, and in the last step we used that \eqref{prekasymp2} also holds for $\prekernel_{n+1}^{(1)}$. Comparing the last two expressions, this leads to $r_n^{(1)}$ as in the last equation of \eqref{SOP1}.

Likewise, we can use \eqref{prekasymp1} together with \eqref{prek1} to read off 
the even SOP:
\begin{align}
	q_{2n}^{(1)}(z) &= \lim_{\abs{u}\to\infty} \frac{-r_n^{(1)}\prekernel_{n+1}^{(1)}(z,u)}{u^{2n+1}}
	= \frac{r_n}{(z-m)q_{2n}(m)} \prekernel_{n+1}(z,m),
\end{align} 
where we have inserted $r_n^{(1)}$ as well, to express the right hand side in terms of unperturbed quantities only. 
This agrees with the first equation in \eqref{SOP1}, upon using the anti-symmetry of the pre-kernel.

For the odd polynomials $q_{2n+1}^{(1)}(z)$ we have to go beyond the leading order and use that they are only determined up to an arbitrary constant times the even SOP of one degree less, $q_{2n}^{(1)}(z)$. For that purpose we label the next-to-leading order coefficient in the SOP as follows
\begin{equation}
q_{n}(u)=u^{n}+k_{n}u^{n-1}+\mathcal{O}(u^{n-2}).
\end{equation}
Next, we expand the definition of the pre-kernel \eqref{eq:prek-def} to next-to-leading order for one argument being large, $\abs{u}\gg1$:
\begin{equation}
\prekernel_{n+1}(z,u) \sim \frac{1}{r_n} \left[-q_{2n}(z)u^{2n+1}+(q_{2n+1}(z)-k_{2n+1}q_{2n}(z))u^{2n}+\mathcal{O}(u^{2n-1})\right].
\end{equation}
It follows that the odd polynomials can be obtained as
\begin{equation}
q_{2n+1}(z)-k_{2n+1}q_{2n}(z) = \lim_{\abs{u}\to\infty} \frac{r_n\prekernel_{n+1}(z,u)+q_{2n}(z)u^{2n+1}}{u^{2n}},
\end{equation}
and likewise for the perturbed pre-kernel and SOP. Inserting the known expressions for the skew-norm, pre-kernel and even SOP of the perturbed weight on the right hand side, we thus obtain
\begin{equation}
\begin{split}
	&q_{2n+1}^{(1)}(z)-k_{2n+1}^{(1)}q_{2n}^{(1)}(z)
	=\lim_{\abs{u}\to\infty} \frac{r_n^{(1)}\prekernel_{n+1}^{(1)}(z,u)+q_{2n}^{(1)}(z)u^{2n+1}}{u^{2n}} \nonumber\\
	&\quad =\frac{1}{(z-m)q_{2n}(m)} \brackets[\Big]{q_{2n}(m)q_{2n+2}(z)-q_{2n}(z)q_{2n+2}(m)}
		-(k_{2n+2}+m) \frac{r_n\prekernel_{n + 1}(z,m)}{ (z-m )q_{2 n}(m)}.
\end{split}
\end{equation}
Recognising that the last term in the last line is just the perturbed even SOP $q_{2n}^{(1)}(z)$, this yields the expression for the odd perturbed SOP in the second equation of \eqref{SOP1}, with the constant given by $d_n=k_{2n+1}^{(1)}-k_{2n+2}-m$ here. Because this constant is arbitrary, we didn't specify it in \eqref{SOP1}.
\end{proof}
In principle, both even and odd  perturbed SOP $q_n^{(1)}(z)$ could be expanded in the basis of the perturbed OP $p_n^{(1)}(z)$. However, these are not as simple as in \autoref{thm:CP_SOP} and include the full sum of even and odd polynomials down to lowest order. We refer to \autoref{appSOP1} for details.

%%%%%%%%%%%%%%%%%%%%%%%%%%%%%
%%%%%%%%%%%%%%%%%%%%%%%%%%%%%%%%%%

%%%%%%%%%%%%%%%%%%%%%%%%%%%%%%%%%%%%%%%%%%%%%%%%%%%%%%%%%%%%%%%%%%%%%

\begin{appendix}
%%%%%%%%%%%%%%%%%%%%%%%%%%%%%%%%%%%%%%%%%%%%%%%%%%%%%%%%%%%%%%%%%%%%%
\section{Recollection of known planar OP and SOP}\label{appA}

In this appendix we collect more planar OP and SOP.
In part they are already known, but for completeness (and because we use some of them in the main text) we state them here in as much generality as possible.
In particular, we can rederive the planar SOP from \autoref{thm:sop_from_op}. In contrast to the main text, in this appendix  we will consider the flat Lebesgue measure $\dif^2z=\dif x\dif y$ for $z=x+iy$.

\begin{example}[Product of Ginibre matrices]
When taking the product of $M$ complex Ginibre matrices we obtain for the weight function and norms \cite{ABu}
\begin{align}
	w(z) &= \abs{z}^{2 c}
		G_{0, M}^{M, 0}\left\lparen
		\begin{matrix}
			\raisebox{.5ex}{\rule{3em}{.4pt}}\\
			0, \dots, 0
		\end{matrix}
		\;\middle\lvert\;
		\abs{z}^2
		\right\rparen
	= G_{0, M}^{M, 0}\left\lparen
		\begin{matrix}
			\raisebox{.5ex}{\rule{3em}{.4pt}} \\
			c, \dots, c
		\end{matrix}
		\;\middle\lvert\;
		\abs{z}^2
		\right\rparen, \\
	h_n &= \pi \, \Gamma\p{n + 1 + c}^M.
\end{align}
Here, $G_{0, M}^{M, 0}$ is the Meijer $G$-function, see \cite[Chapter 16.17]{NIST} for the definition, and we have slightly extended \cite{ABu} by the insertion of a point charge $c > -1$ at the origin.
From \autoref{thm:sop_from_radial_weight} we have for the SOP and their skew-norms (also stated in \cite{Jesper})
\begin{equation}\begin{split}
	q_{2 k}(z) &= 
	\sum_{j = 0}^{k} z^{2 j} \prod_{l=j}^{k-1}(2l+3+c)^M 
		 = \p{2^k \Gamma\p{k + 1 + \frac{c}{2}}}^M \sum_{j = 0}^{k} \frac{z^{2 j}}{\p{2^j \Gamma\p{j + 1 + \frac{c}{2}}}^M}, \\
	q_{2 k + 1}(z) &= z^{2 k + 1}, \\
	r_k &= 2 \pi \Gamma\p{2 k + c + 2}^M.
\end{split}\end{equation}
Two particular cases are worth mentioning.
For $M=1$, that is a single Ginibre matrix, it holds that 
$G^{1\,0}_{0\,1}\left(\mbox{}_{c}^{-} \big| \ |z|^2 \right)
		=|z|^{2c}\euler^{-|z|^2}$ and we are back to the weight of the induced Ginibre ensemble, compare \autoref{ex:ML} at $\lambda=1$.
For $M=2$ 	we obtain $G^{2\,0}_{0\,2}\left(\mbox{}_{c,c}^{--} \big| \ |z|^2 \right)
		 =2|z|^{2c}K_0(2|z|)$, given in terms of the modified Bessel-function of the second kind $K_\nu$. At $c=0$ this corresponds to the weight of the chiral symplectic Ginibre ensemble at maximal non-Hermiticity, compare 
		 \cite{A05} where the corresponding SOP were constructed. 
\end{example}

%%%%%%%%%%%%%%%%%%%%%%%%%%

\begin{example}[Elliptic Ginibre ensemble] \label{ex-HSOP}
The weight function of the {elliptic Ginibre ensemble} with parameters $A > B > 0$ is given by the complex normal distribution
\begin{equation}
	w(z) = \euler^{-A \abs{z}^2 + B \Re(z^2)}= \euler^{-(A-B)\Re(z)^2-(A+B)\Im(z)^2}.
\end{equation}
The monic OP $p_n(z)$, the recurrence coefficients $c_n$ and the squared norms $h_n$ are given by
\begin{equation}\label{eq:planarH}
\begin{split}
	p_n(z) &= \frac{1}{(2 C)^n} H_n\p{C z}, \quad C = \sqrt{\frac{A^2 - B^2}{2 B}}, \\
	c_n&=\frac{nB}{A^2-B^2},\\
	h_n &= \frac{\pi n!}{\sqrt{A^2 - B^2}} \p{\frac{A}{A^2 - B^2}}^n.
\end{split}\end{equation}
From \autoref{thm:sop_from_op} we get for the SOP
\begin{equation}\begin{split}
	q_{2 k}(z) &= k! \p{\frac{2 A}{A^2 - B^2}}^k
		\sum_{j = 0}^{k} \p{\frac{B}{2A}}^j \frac{1}{2^j j!} H_{2 j}\p{C z}, \\
	q_{2 k + 1}(z) &= \frac{1}{(2 C)^{2 k + 1}} H_{2 k + 1}\p{C z} .
\end{split}\end{equation}
For the skew-norms we obtain
\begin{equation}
	r_k = \frac{2 \pi}{(A + B) \sqrt{A^2 - B^2}} (2 k + 1)! \p{\frac{A}{A^2 - B^2}}^{2 k}.
\end{equation}
\end{example}
These polynomials reduce to the OP from \cite{FKS98} and to the SOP from  \cite{Kanzieper} when choosing  $A=1/(1-\tau^2)$ and $B=\tau/(1-\tau^2)$ for $0\leq\tau<1$.
The orthogonality relation \eqref{eq:HermiteOP} was proven first in \cite{EM} and independently in \cite{PdF}.
For self-consistency we present a proof for the well-known form of Hermite polynomials depending on two parameters \eqref{eq:planarH} following \cite{KS}.

\begin{proof}
Inserting the following integral representation \cite[Table 18.10.1]{NIST}
\begin{equation}
H_n(z)=n!\oint \euler^{2zt-t^2}t^{-n-1}\frac{\dif t}{2\pi\iunit},
\end{equation}
into the orthogonality relation, where the contour integral is around the origin in positive direction,  we obtain
\begin{align}
\MoveEqLeft 
\int_{\Cset}H_n(Cz)H_m(C\cconj{z})w(z)\dif^2z
\nonumber\\
&
=\int_{-\infty}^{\infty} \int_{-\infty}^{\infty}
\euler^{-(A-B)x^2-(A+B)y^2}
\oint\oint\frac{n!m!}{(2\pi\iunit)^2}\frac{e^{-t^2-s^2+2C(x+\iunit y)t+2C(x-\iunit y)s}}{t^{n+1}s^{m+1}}\dif t\dif s
\dif x \dif y \nonumber\\
&=\frac{\pi}{\sqrt{A^2-B^2}}\oint\oint\frac{n!m!}{(2\pi\iunit)^2}\frac{e^{\frac{2A}{B}st}}{t^{n+1}s^{m+1}}
\dif t\dif s\nonumber\\
&=\frac{\pi m!}{\sqrt{A^2-B^2}}\left(\frac{2A}{B}\right)^n\oint
\frac{1}{s^{m-n+1}}\frac{\dif s}{2\pi\iunit}.
\end{align}
Because the integrals are absolutely convergent, the order of integration can be interchanged, and using \eqref{eq:Gauss1} we have performed the two real integrations. Cauchy's integral theorem for derivatives leads to a single integral that gives $\delta_{n,m}$. Making the Hermite polynomials $H_n(x)=2^nx^2+\mathcal{O}(z^{n-1})$ monic, and using the known recurrence relation for Hermite we arrive at \eqref{eq:planarH}.
\end{proof}

\begin{example}[Chiral elliptic Ginibre ensemble] \label{ex-LSOP}
The weight function of the {chiral elliptic Ginibre ensemble} with parameters $A > B > 0$ and $\nu >-1$ is
\begin{equation}
	w(z) = \abs{z}^\nu K_{\nu}\p{A \abs{z}} \euler^{B \Re(z)}.
\end{equation}
The monic OP 
and their normalisation are given by
\begin{equation}\begin{split}
	p_n(z) &= \frac{(-1)^n n!}{C^n} L_n^{(\nu)}\p{C z}, \quad C = \frac{A^2 - B^2}{2 B}, \\
	c_n &= n (n + \nu) \p{\frac{2 B}{A^2 - B^2}}^2, \\
	h_n &= \frac{\pi}{A} n!\,  \Gamma(n + \nu+1) \p{\frac{2 A}{A^2 - B^2}}^{2 n + \nu + 1}.
\end{split}\end{equation}
Here $L_n^{(\nu)}$ denotes the $n$-th generalised Laguerre polynomial.
From \autoref{thm:sop_from_op} we obtain for the SOP 
and their skew-norms:
\begin{equation}\begin{split}
	q_{2 k}(z) &= 2^{2 k} k! \Gamma\p{k + \frac{\nu}{2} + 1} \p{\frac{2 A}{A^2 - B^2}}^{2 k}
		\sum_{j = 0}^{k} \p{\frac{B}{A}}^{2 j} \frac{(2 j)!}{2^{2 j} j! \Gamma\p{j + \frac{\nu}{2} + 1}} L_{2 j}^{(\nu)}\p{C z}, \\
	q_{2 k + 1}(z) &= -\frac{(2 k + 1)!}{C^{2 k + 1}} L_{2 k + 1}^{(\nu)}\p{C z}, \\
	r_k &= \frac{4 \pi}{A^2} (2 k + 1)!\, \Gamma(2 k + \nu + 2) \p{\frac{2 A}{A^2 - B^2}}^{4 k + \nu + 2}.
\end{split}\end{equation}
\end{example}
We use this ensemble in Subsection \ref{sec:PK-Laguerre} with the convention 
$A = 2 / (1 - \tau^2)$ and $B = 2 \tau / (1 - \tau^2)$ for $\tau \in [0, 1)$.
Note that the OP for this weight appeared in \cite{James} and the SOP were derived in \cite{A05} (in terms of squared eigenvalues) with $A = N (1 + \mu^2) / (2 \mu^2)$ and $B = N (1 - \mu^2) / (2 \mu^2)$ for $\mu \in (0, 1)$. For the orthogonality proof we refer to \cite{Karp,A05}.

%%%%%%%
\section{Fourier coefficients of the perturbed SOP}\label{appSOP1}

In this appendix we compute the expansion of the perturbed SOP $q_n^{(1)}$ from \autoref{thm:CP_SOP}, in the basis of the perturbed OP $p_n^{(1)}$ from \autoref{cor:OP1}, which are skew-orthogonal respectively orthogonal with respect to the perturbed weight $w^{(1)}(z)=|z-m|^2 w(z)$. Furthermore, we assume that the unperturbed OP $p_n$ obey a three-term recurrence relation, and thus \autoref{thm:sop_from_op} applies to determine the $q_n$.
As a result we will see that both even and odd polynomials $q_n^{(1)}$ have Fourier coefficients in even and odd degree of $p_n^{(1)}$, down to the lowest degree.

We begin with the polynomials of odd degree, defining the coefficients 
$\beta_{2k+1,j}$ as 
\begin{equation}\label{betadef}
q_{2k+1}^{(1)}(z)= \sum_{l=0}^{2k+1}\beta_{2k+1,l}p_l^{(1)}(z).
\end{equation}
It follows from the fact that both perturbed SOP and OP are monic, that 
$\beta_{2k+1,2k+1}=1$. Following  the definition we have for the remaining coefficients, with $l<2k+1$, that
\begin{align}
\beta_{2k+1,l} &= \frac{1}{h_l^{(1)}}\int_D q_{2k+1}^{(1)}(z)p_l^{(1)}(\cconj{z}) |z-m|^2w(z)\dif^2z
\nonumber\\
&= \frac{1}{h_l^{(1)}}\int_D\frac{q_{2k+2}(m)\sum_{j=0}^k\mu_{k,j}p_{2j}(z)-
q_{2k}(m)\sum_{j=0}^{k+1}\mu_{k+1,j}p_{2j}(z)}{q_{2k}(m)}
\nonumber\\
&\hphantom{\frac{1}{h_l^{(1)}}\int_D} \quad \times
\frac{K_{l+1}(\cconj{z},m)p_{l+1}(m)-K_{l+1}(m,m)p_{l+1}(\cconj{z})}{K_{l+1}(m,m)}
w(z)\dif^2z
\nonumber\\
&= \frac{1}{h_{l+1}K_{l+2}(m,m)q_{2k}(m)}
\left[\sum_{j=0}^{\floor{l/2}}\left(q_{2k+2}(m)\mu_{k,j}-q_{2k}(m)\mu_{k+1,j}\right)p_{2j}(m)p_{l+1}(m)
\right.\nonumber\\
&\quad\quad \quad\quad
-\delta_{l,2L+1}(q_{2k+2}(m)\mu_{k,L+1}-q_{2k}(m)\mu_{k+1,L+1})
K_{2L+2}(m,m)h_{2L+2}
\Bigg].
\label{betaresult}
\end{align}
In the first step we have inserted the perturbed SOP and OP from eqs.~\eqref{SOP1} and \eqref{pertOP}, leading to a cancellation of $|z-m|^2$. Furthermore, for simplicity we have set $d_k=0$ in the former (otherwise this would contribute  to the coefficients $\alpha_{2k,j}$ from \eqref{alphadef} below). Next, we have used the orthogonality of the unperturbed OP, as well as the projection property of the kernel, 
\begin{equation}
\int_D p_{j}(z) K_{l+1}(\cconj{z},m) w(z) \dif^2z= p_j(m), \quad \mbox{for}\ \  j\leq l,
\end{equation}
and zero otherwise. In the final result \eqref{betaresult} the last term is non-vanishing only when $l=2L+1$ is odd, whereas the previous term is also present for $l=2L$ even (the sum runs to $\floor{l/2}=L$ in both cases). Consequently, all even and odd Fourier coefficients $\beta_{2k+1,l}$ are non-vanishing in general, down to the lowest degree $l=0$, in contrast to \autoref{thm:sop_from_op}. 

Let us move to the Fourier coefficients of the even polynomials, defined as 
\begin{equation}\label{alphadef}
q_{2k}^{(1)}(z)= \sum_{l=0}^{2k}\alpha_{2k,l}p_l^{(1)}(z)\ .
\end{equation}
As for the odd polynomials, we have from the monic property of the two sets of polynomials that $\alpha_{2k,2k}=1$. For the remaining coefficients with $l<2k$ we obtain
\begin{align}
\alpha_{2k,l} &= \frac{1}{h_l^{(1)}}\int_D q_{2k}^{(1)}(z)p_l^{(1)}(\cconj{z}) |z-m|^2w(z)\dif^2z
\nonumber\\
&= \frac{r_k}{h_l^{(1)}}\int_D\frac{\sum_{i=0}^{k}\frac{1}{r_i} \left(q_{2i+1}(m)\sum_{j=0}^i\mu_{i,j}p_{2j}(z)-
q_{2i}(m)p_{2i+1}(z)\right)}{q_{2k}(m)}
\nonumber\\
&\hphantom{\frac{r_k}{h_l^{(1)}}\int_D} \quad \times
\frac{K_{l+1}(\cconj{z},m)p_{l+1}(m)-K_{l+1}(m,m)p_{l+1}(\cconj{z})}{K_{l+1}(m,m)}
w(z)\dif^2z
\nonumber\\
&=\frac{r_k}{h_{l+1}K_{l+2}(m,m)q_{2k}(m)}
\int_D\left(\sum_{j=0}^k\sum_{i=j}^{k}\frac{1}{r_i} q_{2i+1}(m)\mu_{i,j}p_{2j}(z)-
\sum_{i=0}^{k}\frac{1}{r_i} q_{2i}(m)p_{2i+1}(z)\right)
\nonumber\\
&\hphantom{\frac{r_k}{h_{l+1}K_{l+2}(m,m)q_{2k}(m)}\int_D} \quad \times
\left(K_{l+1}(\cconj{z},m)p_{l+1}(m)-K_{l+1}(m,m)p_{l+1}(\cconj{z})\right)
w(z)\dif^2z,
\end{align}
where we have followed the same procedure as before, and swapped the summation in the double sum in the last step, to facilitate the integration.
Let us distinguish even and odd indices $l$ now. For even $l=2L$, with $L<k$ we obtain
\begin{align}
\alpha_{2k,2L} &= \frac{r_k}{h_{2L+1}K_{2L+2}(m,m)q_{2k}(m)}
\left[\sum_{j=0}^L\sum_{i=j}^{k}\frac{1}{r_i}q_{2i+1}(m)\mu_{i,j}p_{2j}(m) p_{2L+1}(m)
\right.
\nonumber\\
&\qquad
-\sum_{i=0}^{L-1}\frac{1}{r_i}q_{2i}(m)p_{2i+1}(m) p_{2L+1}(m)
 +\frac{1}{r_L} q_{2L}(m)K_{2L+1}(m,m)h_{2L+1}
\Bigg].\ \ \ 
\label{alphaeven}
\end{align}
For $l=2L+1$ odd with $L<k$ it follows that 
\begin{align}
\alpha_{2k,2L+1}&=\frac{r_k}{h_{2L+2}K_{2L+3}(m,m)q_{2k}(m)}
\left[\sum_{j=0}^{L}\sum_{i=j}^{k}\frac{1}{r_i} q_{2i+1}(m)\mu_{i,j}p_{2j}(m)p_{2L+2}(m)
\right.
\nonumber\\
&\quad
-\sum_{i=0}^{L}\frac{1}{r_i} q_{2i}(m)p_{2i+1}(m)p_{2L+2}(m)
-\sum_{i=L+1}^{k}\frac{1}{r_i}q_{2i+1}(m)\mu_{i,L+1} K_{2L+2}(m,m)h_{2L+2}
\Bigg].
\label{alphaodd}
\end{align}
Once again all even and odd coefficients $\alpha_{2k,l}$ are non-vanishing, down to the lowest degree $l=0$.

%%%%%%%
%%%%%%%
\section{Some useful integrals}\label{appC}

For completeness we collect a few simple Gaussian integrals that will be useful in several places throughout the main part.
\begin{enumerate}
	\item For all $\alpha > 0$ and $\beta \in \Cset$ it holds:
	\begin{equation}\label{eq:Gauss1}
		\int_{-\infty}^{\infty} \euler^{-\alpha t^2 + \beta t} \dif t
		= \sqrt{\frac{\pi}{\alpha}} \exp\p{\frac{\beta^2}{4 \alpha}}.
	\end{equation}
	
		\item For all $\alpha > 0, \beta,\gamma,\delta \in \Cset$  with $\Re(\alpha + \gamma^2) > 0$ it holds, compare \cite[8.259.1]{Gradshteyn} :
	\begin{equation}\label{eq:Gauss2}
		\int_{-\infty}^{\infty} 
		\euler^{-\alpha t^2 + \beta t}
		\erf\p{\gamma t+\delta}
		\dif t
		= \sqrt{\frac{\pi}{\alpha}}
		\exp\p{\frac{\beta^2}{4 \alpha}}
		\erf\p{\frac{\beta\gamma+2\alpha\delta}{2\sqrt{\alpha(\alpha + \gamma^2)}}}.
	\end{equation}
	
	\item Applying \eqref{eq:Gauss2} twice, it follows that for all $A, B > 0$, $C, D, \zeta, \eta \in \Cset$ it holds:
	\begin{equation}\label{eq:Gauss3}
		\int_{-\infty}^{\infty} \int_{-\infty}^{\infty}
		\euler^{- A t^2 - B s^2 + 2 \iunit (t \zeta + s \eta)}
		\erf\p{C t + D s} \dif s \dif t 
		= \frac{\pi\euler^{- \frac{\zeta^2}{A} - \frac{\eta^2}{B}}}{\sqrt{A B}}
		\erf\p{\iunit \frac{B C \zeta + A D \eta}{\sqrt{A B \p{A B + A D^2 + B C^2}}}}.
	\end{equation}
\end{enumerate}

In Subsection \ref{sec:PK-Laguerre} we encounter the following integrals involving Bessel-$J$ and Bessel-$I$ functions.
\begin{enumerate}
	\item For $\Re(\nu) > -1$, $u \in \Cset$ it holds:
	\begin{equation} \label{eq:Bessel Gamma integrals}
	\begin{split}
		\int_{0}^{\infty} \euler^{-q^2} \Gamma\p{\frac{\nu + 1}{2}, q^2} J_{\nu}\p{2 q \sqrt{2 u}} \dif q
		&= \frac{\sqrt{\pi}}{2} \Gamma\p{\frac{\nu + 1}{2}} \euler^{-u} I_{\nu / 2}(u) \\
		&\quad- \frac{1}{2} \p{\frac{u}{2}}^{\nu / 2} \int_{-1}^{0} \p{(1 - t) (1 + t)}^{\nu / 2 - 1/2} \euler^{-u (1 - t)} \dif t, \\
		\int_{0}^{\infty} \euler^{-q^2} \Gamma\p{\frac{\nu + 1}{2}, q^2} I_{\nu}\p{2 q \sqrt{2 u}} \dif q
		&= \frac{\sqrt{\pi}}{2} \Gamma\p{\frac{\nu + 1}{2}} \euler^{u} I_{\nu / 2}(u) \\
		&\quad - \frac{1}{2} \p{\frac{u}{2}}^{\nu / 2} \int_{-1}^{0} \p{(1 - t) (1 + t)}^{\nu / 2 - 1/2} \euler^{u (1 - t)} \dif t.
	\end{split}
	\end{equation}
	
	\item For $\Re(\nu) > -1$, $\Re(c) > 0$ it holds (compare \cite[10.22.67]{NIST} and \cite[10.43.28]{NIST}):
	\begin{equation} \label{eq:Bessel product integrals}
	\begin{split}
	\int_{0}^{\infty} t \euler^{-c t^2} J_\nu(a t) J_\nu(b t) \dif t
		&= \frac{1}{2 c} \exp\p{\frac{-a^2 - b^2}{4 c}} I_\nu\p{\frac{a b}{2 c}}, \\
	\int_{0}^{\infty} t \euler^{-c t^2} J_\nu(a t) I_\nu(b t) \dif t
		&= \frac{1}{2 c} \exp\p{\frac{-a^2 + b^2}{4 c}} J_\nu\p{\frac{a b}{2 c}}, \\
	\int_{0}^{\infty} t \euler^{-c t^2} I_\nu(a t) I_\nu(b t) \dif t
		&= \frac{1}{2 c} \exp\p{\frac{a^2 + b^2}{4 c}} I_\nu\p{\frac{a b}{2 c}}.
	\end{split}\end{equation}
	All three formulas are equivalent because $I_\nu(\iunit z) = \iunit^\nu J_\nu(z)$.
	
	\item For $\Re(\nu) > -1$, $u, v \in \Cset$ it holds:
	\begin{equation} 
	\label{eq:Bessel product double-integrals}
	\begin{split}
		\euler^{u + v} \int_{0}^{\infty} \int_{0}^{q} \euler^{-q^2 - p^2} J_{\nu}\p{2 p \sqrt{2 u}} J_{\nu}\p{2 q \sqrt{2 v}} \dif p \dif q
		&= \frac{1}{4} \int_{0}^{\pi / 2} \euler^{(u - v) \cos(\alpha)} I_{\nu}\p{2 \sqrt{u v} \sin(\alpha)} \dif \alpha, \quad\\
		\euler^{u - v} \int_{0}^{\infty} \int_{0}^{q} \euler^{-q^2 - p^2} J_{\nu}\p{2 p \sqrt{2 u}} I_{\nu}\p{2 q \sqrt{2 v}} \dif p \dif q
		&= \frac{1}{4} \int_{0}^{\pi / 2} \euler^{(u + v) \cos(\alpha)} J_{\nu}\p{2 \sqrt{u v} \sin(\alpha)} \dif \alpha.
	\end{split}\end{equation}
\end{enumerate}

\begin{proof}
For \eqref{eq:Bessel Gamma integrals} we write the incomplete Gamma function as
\begin{equation}
	\Gamma\p{\frac{\nu + 1}{2}, q^2}
	= 2 \p{\int_{0}^{\infty} t^\nu \euler^{-t^2} \dif t - \int_{0}^{q} t^\nu \euler^{-t^2} \dif t}
	= \Gamma\p{\frac{\nu + 1}{2}} - 2 q^{\nu + 1} \int_{0}^{1} s^\nu \euler^{-q^2 s^2} \dif s.
\end{equation}
Now we can compute the integral over $q$, with \cite[10.22.52]{NIST} and \cite[10.22.51]{NIST} for Bessel-$J$, and \cite[10.43.24]{NIST} and \cite[10.43.23]{NIST} for Bessel-$I$.
The remaining $s$-integral can be simplified with the substitution $s = \sqrt{(1 + t) / (1 - t)}$.

For \eqref{eq:Bessel product double-integrals} we first make the substitution $p = q r$, $r \in [0, 1]$, then we can switch the integrals and evaluate the $q$-integral with \eqref{eq:Bessel product integrals}.
For the first integral we arrive at
\begin{equation}
	\frac{1}{2} \int_{0}^{1} \frac{1}{1 + r^2} \exp\p{\frac{1 - r^2}{1 + r^2} (u - v)} I_\nu\p{\frac{4 r}{1 + r^2} \sqrt{u v}} \dif r,
\end{equation}
the second integral is analogous with $(u + v)$ in the exponential and a Bessel-$J$ function instead of Bessel-$I$.
The substitution $r = \tan(\alpha / 2)$, $\alpha \in [0, \pi / 2]$, gets rid of the factor  
$1 / (1 + r^2)$.
\end{proof}

\end{appendix}

%%%%%%%%%%%%%%%%%%%%%%%%%%%%%%%%%%%%%%%%%%%%%%%%%%%%%%

\end{document}